\newcommand\stackequal[2]{%
	\mathrel{\stackunder[1pt]{\stackon[2pt]{$\gtrless$}{$\scriptscriptstyle#1$}}{%
			$\scriptscriptstyle#2$}}}
\newcolumntype{P}[1]{>{\centering\arraybackslash}p{#1}}
\newtheorem{lemma}{Lemma}
\newtheorem{theorem}{Theorem}
\newtheorem{remark}{Remark}
\begin{document}

\title{Statistical Analysis of Multiple Antenna Strategies for Wireless Energy Transfer}
\author{
	\IEEEauthorblockN{Onel L. A. López, 
						Hirley Alves, 
						Richard D. Souza, 
						Samuel Montejo-Sánchez
					}
	\thanks{Onel L. A. L\'opez and Hirley Alves are with the Centre for Wireless Communications (CWC), University of Oulu, Finland. \{onel.alcarazlopez,hirley.alves\}@oulu.fi}
	\thanks{Richard D. Souza is with Federal University of Santa Catarina (UFSC), Florianópolis, Brazil. \{richard.demo@ufsc.br\}.}
	\thanks{S. Montejo-Sánchez is with Programa Institucional de Fomento a la I+D+i, Universidad Tecnológica Metropolitana, Santiago, Chile. \{smontejo@utem.cl\}.}
	\thanks{This work is supported by Academy of Finland (Aka) (Grants n.307492, n.318927 (6Genesis Flagship), n.319008 (EE-IoT)), as well as Finnish Foundation for Technology Promotion, FONDECYT Postdoctoral Grant n.3170021 and CNPq Grant n.698503P (PrInt) (Brazil).}  
}
					
\maketitle

\begin{abstract}	
	Wireless Energy Transfer (WET) is emerging as a potential solution for powering small energy-efficient devices.
	We propose strategies that use multiple antennas at a power station, which wirelessly charges a large set of single-antenna devices. Proposed strategies operate without Channel State Information (CSI) and we attain the distribution and main statistics of the harvested energy under Rician fading channels with sensitivity and saturation energy harvesting (EH) impairments.
	A switching antenna strategy, where a single antenna with full power transmits at a time, provides the most predictable energy source, and it is particularly suitable for powering sensor nodes with highly sensitive EH hardware operating under non-LOS (NLOS) conditions; while other WET schemes perform alike or better in terms of the average harvested energy. 
	Under NLOS switching antennas is the best, while when LOS increases transmitting simultaneously with equal power in all antennas is the most beneficial. Moreover, spatial correlation is not beneficial unless the power station transmits simultaneously through all antennas, raising a trade-off between average and variance of the harvested energy since both metrics increase with the spatial correlation.
	Moreover, the performance gap between CSI-free and CSI-based strategies decreases quickly as the number of devices increases.
\end{abstract}
\section{Introduction}
With the advent of the Internet of Things (IoT) era, there is an increasing interest in energy efficient technologies in order to prolong the battery life time of the devices. The recent trends in energy harvesting (EH) techniques provide a fundamental efficient method that avoids replacing or recharging batteries, which may be costly, inconvenient or hazardous, e.g., in toxic environments, for sensors embedded in building structures or inside the human body \cite{Zhang.2013}. Many types of EH schemes, according to the energy source, have been considered, based on solar, piezoelectric, wind, hydroelectric, and wireless radio frequency (RF) signals \cite{Sudevalayam.2011}. While harvesting energy from environmental sources is dependent on the presence of the corresponding energy source, RF-EH
provides key benefits in terms of being wireless, readily available in the form of transmitted energy (TV/radio broadcasters, mobile base stations and handheld radios), low cost, and having small form factor implementation. Hence, it becomes potentially suitable for a variety of low-power applications such as wireless sensor networks (WSNs) and RF identification (RFID) networks \cite{Valenta.2014}\footnote{In fact, commercial WET-enabled sensors and RFID tags are already in the market; check for instance \url{www.powercastco.com}.
Readers are also encouraged to refer to \cite{Boyer.2014} and references therein for a complete overview of the specific characteristics and challenges of RFID networks. Therein, authors re-examine notions of  power and spectral efficiency from an energy-constraint perspective and establish the trade-offs between diversity order and spatial multiplexing gains.}. 

Three main transmit scenarios can be distinguished in RF-EH networks\footnote{In this work we consider those RF-EH networks where the RF signals are intentionally transmitted for powering the EH devices. Alternatively, wireless EH refers to those setups where devices harvest energy from non-dedicated RF signals.}, namely Wireless Energy Transfer (WET) \cite{Grover.2010}, Wireless Powered Communication Network (WPCN) \cite{Bi.2015} and Simultaneous Wireless Information and Power Transfer (SWIPT) \cite{Varshney.2008}. In the first scenario a power transmitter transfers energy to EH receivers to charge their batteries, without any information exchange, while WPCN refers to those cases where  the EH receiver uses the energy harvested in a first phase to transmit its information in a second phase. Finally, in the third scenario a hybrid transmitter is transferring wireless energy and information signals using the same waveform to multiple receivers. More details on each of these scenarios can be found in \cite{Alsaba.2018}, along with a survey on energy beamforming (EB) techniques. In this work we focus on WET scenarios, that also could be seen as an element of WPCN systems\footnote{This is because WPCN setups include WET, which is followed by a wireless information transfer phase.}, while readers can refer to \cite{Clerckx.2018} for a review and discussion on recent progress on SWIPT technologies.
\subsection{Related Work}\label{IntA}
Many recent works have considered specifically WET and WPCN setups in different contexts and scenarios. An overview of the key networking structures and performance enhancing techniques to build an efficient WPCN is provided in \cite{Bi.2016}, while authors also point out new and challenging research directions. A power beacon (PB) that constantly broadcasts wireless energy in a cellular network for RF-EH was proposed in \cite{Huang.2014}. These PBs are deployed in conjunction with base stations to provide power coverage and signal coverage in the network, while the deployment of this hybrid network under an outage constraint on data links was designed using stochastic-geometry tools. In \cite{Ju.2014}, a hybrid access point (AP) was proposed where the AP broadcasts wireless power in the downlink followed by data transmission using the harvested energy in the uplink in a time-division
duplex (TDD) manner. Also in TDD setups, works in \cite{Lopez.2017,Lopez.2018,Lopez2.2018} consider the transmission of separately short energy and information packets (stringent delay constraints) in ultra-reliable WPCN scenarios under different channel conditions, e.g., Rayleigh or Nakagami-m fading. Authors either analyze the performance of the information transmission phase \cite{Lopez.2017}, or optimize it by using power \cite{Lopez.2018} control or cooperative schemes \cite{Lopez2.2018}. Some scheduling strategies that allow a direct optimization of the energy efficiency of the network are also proposed in \cite{Bacinoglu.2018}. Additionally, an energy cooperation scheme that enables energy cooperation in battery-free wireless networks with WET is presented in \cite{Li.2018}. 

Yet, WET requires shifts in the system architecture and in its resource allocation strategies for optimizing the energy supplying, thus, avoiding energy outages. In that regard, authors in \cite{Chen.2017} study the  probability density function (PDF), the cumulative distribution function (CDF), and the average of the energy harvested from signals transmitted by multiple sources. Interestingly, such information allows to determine the best strategies when operating under different channel conditions. Additionally,  multi-antenna EB, where the energy-bearing signals are weighted at the multiple transmit antennas before transmission, has been proposed very recently \cite{Huang.2016,Park.2017}. The average throughput performance of EB in a WPCN, consisting of one hybrid AP with multiple antennas and a single-antenna user, is investigated in \cite{Huang.2016}. The impact of various parameters, such as the AP transmit power, the energy harvesting time, and the number of antennas in the system throughput is analyzed. In \cite{Park.2017}, authors propose an EB scheme that maximizes the weighted sum of the harvested energy and the information rate in multiple-input single-output (MISO) WPCN. They show that their proposed scheme achieves the highest performance compared to existing work. In practice, the benefits of EB in WET crucially depend on the available CSI at the transmitter. 
An efficient channel acquisition method for a point-to-point multiple-input multiple-output (MIMO) WET system is designed in \cite{Zeng.2015} by exploiting channel reciprocity. Authors provide useful insights on when channel training should be employed to improve the transferred energy. Meanwhile, the training design problem is studied in \cite{Zeng2.2015} for MISO WET systems in frequency-selective channels.
\subsection{Contributions and Organization of the Paper}\label{cont}
The problem of CSI acquisition in WET systems is critical and limits the practical significance of some previous work. This is because WET systems are inherently energy-limited, and part of the harvested energy would need to be used for CSI acquiring purpose \cite{Hu.2018}. In fact, the required energy resources to that end cannot be neglected when there is a large number of antennas and/or if the estimation takes place at the EH side since it requires complex baseband signal processing. Even when previous problems could be addressed in some particular scenarios, there is still the problem of CSI acquisition in multi-user setups, specially in IoT use cases where the broadcast nature of wireless transmissions could be exploited for powering a massive number of devices simultaneously. In such cases, effective CSI-free strategies are of vital importance. 

This paper addresses CSI-free WET with multiple transmit antennas, while assuming practical characteristics of EH hardware. The main contributions of this work can be listed as follows:
\begin{itemize}
	\item We present and analyze multiple antenna strategies at a dedicated PB serving a large set of RF-EH devices without any CSI. We do not consider any other information related to devices such as topological deployment, battery charge; although, such information could be crucial in some setups\footnote{For instance, the PB could prioritize those devices that are more distant and/or with low battery charge by focusing more energy on their directions.}. Our derivations are specifically relevant for scenarios where it is difficult and/or not worth obtaining such information, e.g., when powering a massive number of low-power EH devices uniformly distributed in a given area and possibly with null/limited feedback to the PB. Additionally, the performance analysis considers the harvested energy at the receiver, and comparisons with ideal CSI-based schemes are carried out; 
	\item We attain the distribution and some main statistics of the harvested energy in correlated Rician fading channels under each WET scheme and ideal EH operation.  These results are extended to more practical scenarios where EH sensitivity and saturation impairments come to play. The Rician fading assumption is general enough to include a class of channels, ranging from Rayleigh fading channel without line of sight (LOS) to a fully deterministic LOS channel, by varying the Rician factor $\kappa$;	
	\item When powering a given sensor, we found that switching antennas such that only one antenna with full power transmits at a time, guarantees the lowest variance in the harvested energy, thus providing the most predictable energy source. This technique is particularly suitable under highly sensitive EH hardware and when operating under non LOS (NLOS) conditions;  
	\item While under NLOS it is better switching antennas, under some LOS it is better transmitting simultaneously with equal power by all antennas. Also, latter strategy provides the greatest fairness when powering multiple sensors when there is a large number of antennas since  devices are more likely to operate near or in saturation. Additionally, an increase in the spatial correlation is generally prejudicial, except when transmitting simultaneously with equal power by all antennas, for which there is a trade-off between average and variance of the harvested energy since both metrics increase with the spatial correlation; 	
	\item Numerical results validate our analytical findings and demonstrate the suitability of the CSI-free over the CSI-based strategies as the number of devices increases.
\end{itemize}

Next, Section~\ref{system} presents the system model, while Section~\ref{wet} introduces the WET strategies under study. Their performance under Rician fading is investigated in Section~\ref{Ric}, while Section~\ref{results} presents numerical results. Finally, Section \ref{conclusions} concludes the paper.
\begin{table*}[!t]
	\caption{Main Symbols}
	\label{notat}
	\centering
	\begin{tabular}{p{1cm} p{6.8cm}|p{1cm} p{6.8cm}}
		\toprule
		$T,\mathcal{S}$ & Dedicated power station and set of sensor nodes&
		$M$ & Number of antennas at $T$\\
		$S_j$ & $j-$th sensor node ($j-$th element of $\mathcal{S}$) & $\mathbf{h}$ & Complex fading channel vector\\
		$\bm{\alpha},\bm{\beta}$ & Real and imaginary parts of $\mathbf{h}$, respectively 
		& $l$  & Number of energy beams transmitted by $T$ \\
		$\mathbf{w}_k$ & $k-$th precoding vector for the $k-$th energy beam & $\varpi_1,\varpi_2$ & Sensitivity and saturation levels of the EH hardware  \\
		$\varrho_j$ &  Path loss of the link $T\rightarrow S_j$ times the overall transmit power of $T$& $g$ & Function that describes the relation between harvested energy and RF input power \\
		$\xi_j^{\mathrm{rf}},\xi_j$ & Incident RF energy and harvested energy at $S_j$ &
		$\eta$ & Energy conversion efficiency \\
		$\xi^0$ & Harvested energy under the ideal EH linear model & $\kappa$ & Rician fading factor \\ 
		$\bm{\mu}$ & Mean vector of real and imaginary parts of $\mathbf{h}$ &
		$\mathbf{R}$ & Normalized covariance matrix of elements of $\bm{\alpha}$, $\bm{\beta}$ \\
		$\rho$ & Uniform spatial correlation coefficient & $\tau$ & Exponential spatial correlation coefficient\\
		$\delta$& General correlation parameter & $\sigma^2$ & Variance of the elements of $\bm{\alpha}$ and $\bm{\beta}$\\		
		$\lambda$ & Eigenvalue of $\mathbf{R}$ &
		$\bm{\Lambda}$ & Diagonal matrix containing the values of $\lambda$ \\
		$\mathbf{B}$ & Matrix with orthogonalized eigenvectors of $\mathbf{R}$ &	
		$\varphi,\psi$ & Non central chi-squared distribution parameters\\
		\begin{tabular}[l]{@{}l@{}} $\xi_\mathrm{th}$ \\ \  \\ $p_1,p_2,p_3$ \end{tabular}  & \begin{tabular}[l]{@{}l@{}} Minimum amount of RF input power for which $S$\\ can operate\\ Curve fitting parameters of benchmark EH model \end{tabular} &		
		\begin{tabular}[l]{@{}l@{}} $\kappa_1^*,\kappa_2^*$\\ \ \\ \ \end{tabular} & \begin{tabular}[l]{@{}l@{}} $\kappa$ that maximizes the variance of the harvested\\ energy under the $\mathrm{AA}$ scheme and $\mathrm{SA}$ or $\mathrm{AA\!-\!CSI}$\\ schemes, respectively\end{tabular} \\	
		\bottomrule		
	\end{tabular}
\end{table*}

\textbf{Notation:} Boldface lowercase letters denote vectors, while boldface uppercase letters denote matrices. For instance, $\mathbf{x}=\{x_i\}$, where $x_i$ is the $i$-th element of vector $\mathbf{x}$; while $\mathbf{X}=\{x_{i,j}\}$, where $x_{i,j}$ is the $i$-th row $j$-th column element of matrix $\mathbf{X}$. By $\mathbf{I}$ we denote the identity matrix, and by $\mathbf{1}$ we denote a vector of ones. The superscript $(\cdot)^T$ denotes the transpose, $\mathrm{det}(\cdot)$ the determinant, and by $\mathrm{Diag}[x_1,x_2,\cdots]$ we denote the diagonal matrix with elements $x_1,x_2,\cdots$. The $\ell_p-$norm of vector $\mathbf{x}$ is  $||\ \!\mathbf{x}\!\ ||_p=\big(\sum_{i}|x_i|^p\big)^{1/p}$ \cite[Eq.(3.2.13)]{Thompson.2011}. $\mathbb{C}$ is the set of complex numbers and $\mathbf{i}=\sqrt{-1}$ is the imaginary unit. Meanwhile, $x^*$ is the conjugate of $x$, and $|\ \!\!\cdot\!\!\ |$ is the absolute operation, or cardinality of the set, according to the case. $\mathbb{E}[\!\ \cdot\ \!]$ and $\mathrm{VAR}[\!\ \cdot\ \!]$ denote expectation and variance, respectively, while $\mathbb{P}[A]$ is the probability of event $A$. $\mathbf{y}\sim\mathcal{N}(\bm{\mu},\mathbf{R})$ is a Gaussian random vector with $\mathbb{E}[\bm{y}]=\bm{\mu}$ and covariance $\mathbf{R}$,  $W\sim\mathrm{RIC}(\kappa)$ is a Rician random variable (RV) with factor $\kappa$ \cite[Ch.2]{Proakis.2001}, while $V\sim\Gamma(m,a/m)$ is a gamma random variable with PDF and CDF given by
\begin{align}
f_V(v)&=\frac{(m/a)^m}{\Gamma(m)}v^{m-1}e^{-mv/a},\ v\ge 0,\label{fV}\\ F_V(v)&=1-\frac{\Gamma(m,mv/a)}{\Gamma(m)},\ v\ge 0, \label{FV}
\end{align}
where $\Gamma(p)$ and $\Gamma(p,x)$ are the complete and incomplete gamma functions, respectively. Additionally, $Z\sim\chi^2(\varphi,\psi)$ is a non-central chi-squared RV with $\varphi$ degrees of freedom and parameter $\psi$, thus, its PDF and CDF are given by \cite[Eqs.(2-1-118) and (2-1-121)]{Proakis.2001}
\begin{align}
f_Z(z)&=\!\frac{1}{2}e^{-(z\!+\!\psi)/2}\Big(\frac{z}{\psi}\Big)^{\varphi/4\!-\!1/2}I_{\varphi/2\!-\!1}\big(\sqrt{\psi z}\big),\ z\ge 0,\label{fZ}\\ F_Z(z)&=1-\mathcal{Q}_{\varphi/2}\big(\sqrt{\psi},\sqrt{z}\big),\ z\ge 0,\label{FZ}
\end{align}
where $I_{n}$ is the $n$-th order modified Bessel function of the first kind \cite[Eq.(2-1-120)]{Proakis.2001} and $\mathcal{Q}$ is the Marcum Q-function \cite[Eq.(2-1-122)]{Proakis.2001}. According to \cite[Eq.(2-1-125)]{Proakis.2001} 
\begin{align}
\mathbb{E}[Z]=\varphi+\psi,\qquad \mathrm{VAR}[Z]=2(\varphi+2\psi).\label{meanvariance}
\end{align}
Table~\ref{notat} summarizes the main symbols used throughout this paper.
\section{System Model}\label{system}
Consider the scenario in Fig.~\ref{Fig1}, in which a dedicated power station $T$ equipped with $M$ antennas, powers wirelessly a large set $\mathcal{S}=\{S_1,S_2,\cdots,S_{|\mathcal{S}|}\}$ of single-antenna sensor nodes located nearby. Quasi-static channels are assumed, where the fading process is considered to be constant over the transmission of a block  and independent and identically distributed (i.i.d) from block to block. The fading channel coefficient between the $i$-th antenna of $T$ and the $j-$th sensor node $S_j$ is denoted as $h_{i,j}\in\mathbb{C}$, while $\mathbf{h}_{j}\in\mathbb{C}^{M\times 1}$ is a vector with the channel coefficients from the power station antennas to $S_j$.

In general, during WET power station $T$ may transmit with up to $l\le M$ energy beams to broadcast energy to all sensors in $\mathcal{S}$. Then, the incident RF power at the $j-$th EH receiver is given by\footnote{Notice that the EH receiver does not need to convert the received signal from the RF band to the baseband in order to harvest the carried energy. Nevertheless, thanks to the law of energy conservation, it can be assumed that the total incident RF power (energy normalized by the baseband symbol period) matches the power of the equivalent received baseband signal.}
\begin{align}\label{E}
\xi^{\mathrm{rf}}_j=\varrho_j\Big|\mathbf{h}_j^T\sum_{k=1}^{l}\mathbf{w}_kx_k\Big|^2=\varrho_j\sum_{k=1}^{l}\big|\mathbf{h}_j^T\mathbf{w}_k\big|^2,
\end{align}
where $\mathbf{w_k}\in\mathbb{C}^{M\times 1}$ denotes the precoding vector for generating the $k-$th energy beam, and $x_k$ is its energy-carrying signal. Without loss of generality, we assume that each $x_k$ is taken independently from an arbitrary distribution with $\mathbb{E}[|x_k|^2]=1, \forall k$ \cite{Ju.2014,Chen.2017,Zeng.2015}. Additionally, we set $\sum_{k=1}^{l}||\mathbf{w}_k||_2^2=1$, while $\varrho_j$ accounts for the path loss of the link $T\rightarrow S_j$ times the overall transmit power of $T$. 
By considering negligible noise energy, the harvested energy\footnote{We use the terms energy and power indistinctly, which can be interpreted as if block duration is normalized.}, $\xi_j$, can be written as a function of $\xi^{\mathrm{rf}}_j$ as $\xi_j=g(\xi^{\mathrm{rf}}_j)$ where $g:\mathbb{R} \rightarrow \mathbb{R}$ is a non-decreasing function of its argument. In general $g$ is nonlinear\footnote{Although in this work we just state the harvested energy as a function of the incident RF power, we would like to highlight that in practice $g$ also depends on the modulation and incoming waveform \cite{Zeng.2017}, and some approaches have been considered in the literature for exploiting this. For instance \cite{Clerckx.2018_2}: energy waveform, which relies on deterministic multisine at the transmitter so as to excite the rectifier in a periodic manner by sending high energy pulses; and energy modulation, which relies on single-carrier transmission with inputs following a probability distribution that boosts the fourth order term in the Taylor expansion of the diode characteristic function.} 
 and mathematical analyses are cumbersome, but starting from the linear model the accuracy can be significantly improved by considering  three main factors that limit strongly the performance of a WET receiver \cite{Valenta.2014,Clerckx2.2018,Zhou.2013,Lopez.2018}: \textit{(i)} its sensitivity  $\varpi_1$, which is the minimum RF input power required for energy harvesting; \textit{(ii)} its saturation level $\varpi_2$, which is the RF input power for which the diode starts working in the breakdown region, and from that point onwards the output DC power keeps practically constant; and \textit{(iii)} the energy efficiency $\eta\in[0,1]$ in the interval $\varpi_1 \le \xi^{\mathrm{rf}}_j\le \varpi_2$, which we assume as constant. Therefore, we can write $\xi_j$ as
\begin{align}\label{conv}
\xi_j=g(\xi_{j}^{\mathrm{rf}})=\left\{ \begin{array}{lrr}
0, & & \xi^{\mathrm{rf}}_j<\varpi_1 \\
\eta \xi^{\mathrm{rf}}_j, &  &\varpi_1\le\xi^{\mathrm{rf}}_j<\varpi_2 \\
\eta\varpi_2,&    &\xi^{\mathrm{rf}}_j\ge \varpi_2
\end{array}
\right..
\end{align}
Notice that the linear model assumed in most of the related literature, e.g., \cite{Zhang.2013,Alsaba.2018,Huang.2014,Ju.2014,Lopez.2017,Lopez2.2018,Bacinoglu.2018,Huang.2016,Park.2017,Zeng.2015,Zeng2.2015,Hu.2018,Zhou.2013}, does not take into account the sensitivity and saturation phenomena, which is equivalent to operate with $\varpi_1\!=\!0$ and $\varpi_2\!\rightarrow\!\infty$. Thus, taking these impairments into consideration in the scenarios under discussion is an important contribution from a practical perspective.
\begin{figure}[t!]
	\centering  \includegraphics[width=0.85\columnwidth]{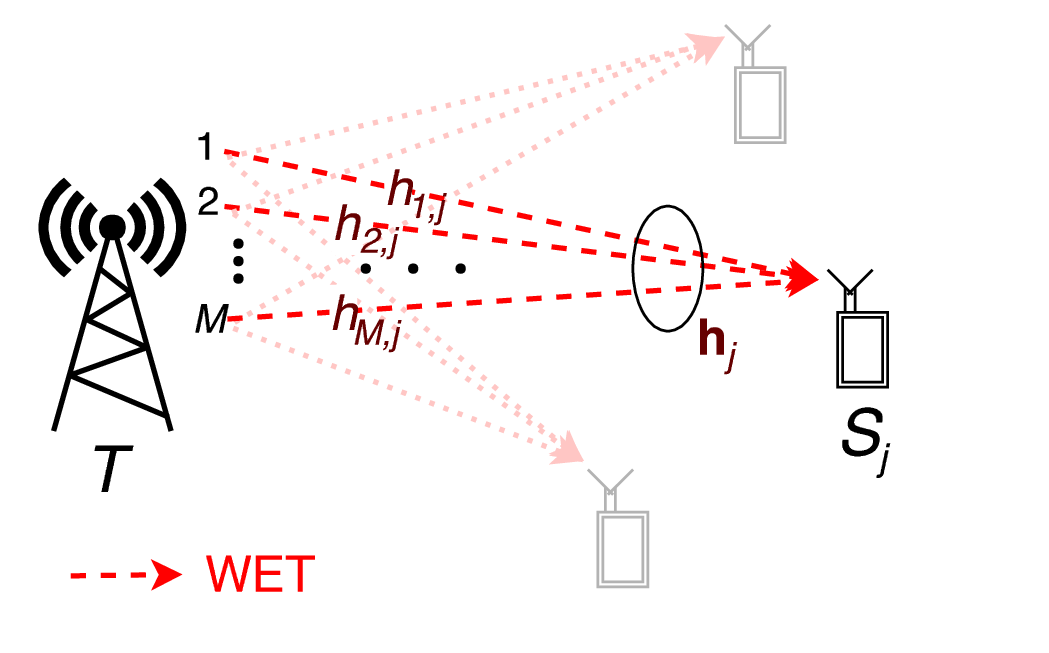}
	\caption{System model: dedicated power station $T$ equipped with $M$ antennas, powers wirelessly a set $\mathcal{S}$ of single-antenna sensor nodes located nearby.}		
	\label{Fig1}
\end{figure}
\section{WET Strategies}\label{wet}
First, in Subsection~\ref{nn} we characterize the performance of WET for three different strategies at $T$ without any CSI, while two alternative strategies that require full CSI are presented as benchmarks in Subsection~\ref{ss}.
\subsection{WET Strategies without CSI}\label{nn}
Since no CSI is available and $\mathbf{w}_k$ cannot depend on the channel coefficients, $T$ does not form energy beams  to reach efficiently each $S_j$. Therefore, for these kind of strategies it is only necessary focusing on the performance of an arbitrary user $S_j$, while also setting $l=1$. 
\subsubsection{One Antenna ($\mathrm{OA}$)}
Is the simplest strategy because only one out of $M$ antennas transmitting with full power is used for powering the devices. 
Then, using \eqref{E} we obtain
\begin{align}\label{1a}
	\xi_{j_\mathrm{OA}}=g\big(\varrho_j|h_{i,j}|^2\big),	
\end{align}
where $i\in\{1,2,\cdots,M\}$. Notice  that in this case $\mathbf{w}_1$ is a vector of zeros with entry $1$ in the $i-$th element. There is no difference whether $T$ is equipped with only one or several antennas when operating with the $\mathrm{OA}$ strategy. 
\subsubsection{All Antennas at Once ($\mathrm{AA}$)}
The $\mathrm{OA}$ strategy does not exploit multiple antennas, thus, it does not take advantage of that degree of freedom. One obvious and simple alternative is transmitting with all antennas but with reduced power at each, $\mathbf{w}_1 = (1/\sqrt{M})\mathbf{1}_{M\times 1}$, thus,\footnote{Notice that any other power allocation is not advisable since $T$ does not know how the channels are behaving.}
\begin{align}\label{aao}
\xi_{j_\mathrm{AA}}=g\bigg(\frac{\varrho_j}{M}\Big|\sum_{i=1}^{M}h_{i,j}\Big|^2\bigg).
\end{align}
The $\mathrm{OA}$ and $\mathrm{AA}$ schemes are the extreme cases of a more general strategy where $K$ out of $M$ antennas are selected to power the sensors. As a consequence, the $K$-out-$M$'s performance is limited by that of the $\mathrm{OA}$ and $\mathrm{AA}$ strategies.
\subsubsection{Switching Antennas ($\mathrm{SA}$)}
Instead of transmitting with all antennas at once, $T$ may transmit with full power by one antenna at a time such that all antennas are used during  a block. Assuming equal-time allocation for each antenna, the system is equivalent to that in which each sub-block duration is $1/M$ of the total block duration, and the total harvested energy accounts for the sum of that of the $M$ sub-blocks. That is
\begin{align}\label{oat}
\xi_{j_\mathrm{SA}}=\frac{1}{M}\sum_{i=1}^{M}g\big(\varrho_j|h_{i,j}|^2\big).
\end{align}
Note that $\mathbf{w}_1$ in each sub-block is given as in the $\mathrm{OA}$ strategy, but the chosen $i$ is different in each sub-block.
\subsection{Benchmark WET Strategies}\label{ss}
For the sake of describing some benchmark strategies, herein we consider that full CSI is available at $T$. Sensors $\mathcal{S}$ could use a ``small'' amount of energy\footnote{Coming from a short phase where $\mathcal{S}$ are first powered by one of the previous CSI-free strategies, or even from some residual energy after previous rounds.} to send some pilot symbols to $T$ in order to acquire the CSI. This requires reciprocal channels, thus, $T$ should be listening at the time of the transmission. Otherwise, $T$ has to send the pilots and wait for a feedback from the sensor(s) informing the CSI. Whatever the case, it seems unsuitable in a setup where a large number of energy constrained sensors require the powering service from $T$, specially if we also consider the multiple access problem. Therefore, we present the following CSI-based strategies only as benchmarks for those presented in the previous subsection. We assume that $T$ knows also the EH hardware characteristics, e.g., $\varpi_1,\varpi_2,\eta,$ of the EH sensors and the goal is to maximize the overall energy harvested by $\mathcal{S}$. 
\subsubsection{Best Antenna ($\mathrm{OA-CSI}$)}
This strategy is the counterpart for the previous $\mathrm{OA}$ scheme when full CSI is available. In this case, the antenna that provides the greatest amount of overall harvested energy is selected out of the overall set. Therefore, $l=1$ and
\begin{align}\label{1acsi}
\sum_{j=1}^{|\mathcal{S}|}\xi_{j_\mathrm{OA-CSI}}=\max_{i=1,...,M}\sum_{j=1}^{|\mathcal{S}|}g\big(\varrho_j|h_{i,j}|^2\big),
\end{align}
since this time $\mathbf{w}_1$ is a vector of zeros with entry $1$ in the  selected antenna index. Of course, in a multi-user system where the nodes benefit from the WET phase simultaneously, the best antenna is usually not the same for all users. Different from \eqref{1acsi}, another possible implementation may be that in which some users are optimized first and then the others, but in all these scenarios the complexity scales quickly with the number of users while reducing the overall performance.
\subsubsection{Best Transmit Beamforming ($\mathrm{AA-CSI}$)}\label{best}
This strategy is the counterpart for the previous $\mathrm{AA}$ scheme when full CSI is available. In this case, instead of transmitting with the same power over each antenna, $T$ precompensates through $\mathbf{w}_k$ for the channel and EH hardware effects before transmission such that the overall harvested energy at $\mathcal{S}$ is maximized. Hence, 
\begin{align}\label{aaocsi}
\sum_{j=1}^{|\mathcal{S}|}\xi_{j_\mathrm{AA-CSI}}=\max_{\{\mathbf{w}_k\}}\sum_{j=1}^{|\mathcal{S}|}g\Big(\varrho_j\sum_{k=1}^{M}\big|\mathbf{h}_j^T\mathbf{w}_k\big|^2\Big),
\end{align}
where $l$ is set to $M$ since that is the maximum possible number of energy beams. In case the system performance maximizes with a smaller $l$, which may be the case when $|\mathcal{S}|<M$, then, some of the optimum beamformers are all-zeros vectors. Notice that the $\mathrm{OA-CSI}$ scheme is less sensitive to CSI imperfections than the $\mathrm{AA-CSI}$. This is because the former only relies on the power gain of the channel, while the latter requires the full characterization, envelope and phase, of the channel coefficients.
\subsection{Comparison of the WET Strategies for $|\mathcal{S}|=1$ and under Ideal EH Linear Model}
Some useful insights come from setting $|\mathcal{S}|\!=\!1$ and using the ideal EH linear model such that $\varpi_1\!=\!0$ and $\varpi_2\!\rightarrow\!\infty$. In this case, we denote $\xi^0$ as the harvested energy at the unique sensor node while we avoid using the subindex $j$, then, \eqref{1a}, \eqref{aao}, \eqref{oat}, \eqref{1acsi} and \eqref{aaocsi} can be rewritten as
\begin{align}
\xi^0_{\mathrm{OA}}&=\eta \varrho |h_{i}|^2 		\label{1av2},\\
\xi^0_{\mathrm{AA}}&=\frac{\eta\varrho}		{M}\Big|\sum_{i=1}^{M}h_{i}\Big|^2\label{aaov2},\\
\xi^0_{\mathrm{SA}}&=\frac{\eta\varrho}{M}\sum_{i=1}^{M}|h_{i}|^2,\label{oatv2}\\ \xi^0_{\mathrm{OA-CSI}}&=\eta\varrho\max_{i=1,\cdots,M}|h_{i}|^2,\label{1acsiv2}\\		\xi^0_{\mathrm{AA-CSI}}&=\eta\varrho\max_{\mathbf{w}_1}|\mathbf{h}^T\mathbf{w}_1|^2\stackrel{(a)}{=}\eta\varrho\sum_{i=1}^{M}|h_{i}|^2,\label{aaocsiv2}
\end{align}
where $(a)$ comes from setting $\mathbf{w}_1=\mathbf{h}^*/||\mathbf{h}||_2$ which is the optimum precoding vector for Maximum Ratio Transmission  (MRT) \cite{Lo.1999} in a MISO system.
\begin{theorem}\label{the1}	
	The following relations are satisfied:
	\begin{align}
	\xi^0_{\mathrm{OA}}&\le \xi^0_{\mathrm{OA-CSI}}\le\xi^0_{\mathrm{AA-CSI}}= M\xi^0_{\mathrm{SA}},\label{eq1}\\
	\xi^0_{\mathrm{AA}}&\le\xi^0_{\mathrm{AA-CSI}}= M\xi^0_{\mathrm{SA}}.\label{eq2}
	\end{align}	
\end{theorem}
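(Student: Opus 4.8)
The plan is to establish the chain of inequalities in \eqref{eq1} and \eqref{eq2} by comparing the expressions \eqref{1av2}--\eqref{aaocsiv2} term by term, exploiting that all five quantities share the common factor $\eta\varrho$ and differ only in how the channel vector $\mathbf{h}$ enters. First I would observe the key algebraic identity $\xi^0_{\mathrm{AA-CSI}} = \eta\varrho\sum_{i=1}^M |h_i|^2 = M\,\xi^0_{\mathrm{SA}}$, which is immediate from \eqref{oatv2} and \eqref{aaocsiv2}; this pins down the equality on the right of both \eqref{eq1} and \eqref{eq2}, so the remaining work is purely a set of inequalities against $\eta\varrho\sum_i|h_i|^2$.

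Next I would handle \eqref{eq1}. The inequality $\xi^0_{\mathrm{OA}} \le \xi^0_{\mathrm{OA-CSI}}$ is trivial: $|h_i|^2 \le \max_{i'}|h_{i'}|^2$ for the fixed index $i$ used by $\mathrm{OA}$. For $\xi^0_{\mathrm{OA-CSI}} \le \xi^0_{\mathrm{AA-CSI}}$, I would use $\max_{i}|h_i|^2 \le \sum_{i=1}^M |h_i|^2$, since every summand is nonnegative. Alternatively — and perhaps cleaner, since it ties back to the optimization defining $\mathrm{AA\text{-}CSI}$ — one can note that selecting a single best antenna is a feasible (sparse) choice of precoder $\mathbf{w}_1$ in \eqref{aaocsiv2}, so the maximum over all $\mathbf{w}_1$ can only do better; this makes $\xi^0_{\mathrm{OA-CSI}} \le \xi^0_{\mathrm{AA-CSI}}$ a consequence of enlarging the feasible set rather than a bare term-wise bound. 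Chaining these with the identity above gives \eqref{eq1}.

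For \eqref{eq2} the only nontrivial step is $\xi^0_{\mathrm{AA}} \le \xi^0_{\mathrm{AA-CSI}}$, i.e.
\begin{align}
\frac{1}{M}\Big|\sum_{i=1}^M h_i\Big|^2 \le \sum_{i=1}^M |h_i|^2.\nonumber
\end{align}
This is exactly Cauchy--Schwarz (equivalently, the power-mean / QM--AM inequality applied to the complex numbers $h_i$, or the observation that $\mathbf{w}_1 = (1/\sqrt{M})\mathbf{1}$ is a feasible but generally suboptimal choice in the $\mathrm{AA\text{-}CSI}$ maximization \eqref{aaocsiv2}). I would write $\big|\sum_i h_i\big|^2 = \big|\mathbf{1}^T\mathbf{h}\big|^2 \le \|\mathbf{1}\|_2^2 \|\mathbf{h}\|_2^2 = M\sum_i|h_i|^2$ and divide by $M$. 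Combining with the equality $\xi^0_{\mathrm{AA-CSI}} = M\xi^0_{\mathrm{SA}}$ completes \eqref{eq2}.

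I do not anticipate a genuine obstacle here — every step is a one-line elementary inequality. The only thing to be careful about is presentation: making explicit that the $\mathrm{AA\text{-}CSI}$ expression in \eqref{aaocsiv2} already uses the MRT-optimal precoder (via step $(a)$ in the excerpt), so that the comparisons with $\mathrm{OA}$, $\mathrm{OA\text{-}CSI}$, and $\mathrm{AA}$ can be phrased uniformly as ``the CSI-based maximum dominates any fixed feasible precoder,'' which also gives intuition for why these are the right benchmark strategies.
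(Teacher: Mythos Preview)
Your proposal is correct. For \eqref{eq1} you do exactly what the paper does: $|h_i|^2\le\max_{i'}|h_{i'}|^2\le\sum_{i'}|h_{i'}|^2$, together with the identity $\xi^0_{\mathrm{AA-CSI}}=M\xi^0_{\mathrm{SA}}$ from \eqref{oatv2} and \eqref{aaocsiv2}. For \eqref{eq2} your route differs slightly from the paper's: you invoke Cauchy--Schwarz in one stroke, $|\mathbf{1}^T\mathbf{h}|^2\le\|\mathbf{1}\|_2^2\,\|\mathbf{h}\|_2^2=M\sum_i|h_i|^2$, whereas the paper splits this into two steps --- first the triangle inequality $\big|\sum_i h_i\big|^2\le\big(\sum_i|h_i|\big)^2=\|\mathbf{h}\|_1^2$, then the arithmetic--quadratic mean inequality $\|\mathbf{h}\|_1^2/M\le\|\mathbf{h}\|_2^2$. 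Your single Cauchy--Schwarz step is more direct and, as you observe, also admits the clean interpretation ``$\mathbf{w}_1=(1/\sqrt{M})\mathbf{1}$ is feasible but suboptimal in \eqref{aaocsiv2}''; the paper's decomposition, by passing through $\|\mathbf{h}\|_1$, is a bit more roundabout but reaches the same conclusion.
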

\begin{proof}
	According to \eqref{1av2}, \eqref{1acsiv2}, \eqref{aaocsiv2} and \eqref{oatv2} we have that
	\begin{align}
	\xi^0_{\mathrm{OA}}&=\eta\varrho|h_{i}|^2\le \eta\varrho\max_{i=1,...,M}|h_{i}|^2=\xi^0_{\mathrm{OA-CSI}}\nonumber\\
	&\stackrel{(a)}{\le} \eta\varrho\sum_{i=1}^{M}\big|h_{i}\big|^2=\xi^0_{\mathrm{AA-CSI}}=M\xi^0_{\mathrm{SA}},
	\end{align}
	where the equality in $(a)$ is only when $M=1$. Thus, \eqref{eq1} is satisfied. Now, for the second part of the proof we proceed from \eqref{aaov2}, \eqref{aaocsiv2} and \eqref{oatv2} as follows.
	\begin{align}
	\frac{\xi^0_{\mathrm{AA}}}{\eta\varrho}&=\frac{1}{M}\bigg|\sum_{i=1}^{M}h_{i}\bigg|^2\stackrel{(a)}{\le}\frac{1}{M}\bigg(\sum_{i=1}^{M}\big|h_{i}\big|\bigg)^2\nonumber\\
	&\stackrel{(b)}{=}\frac{\big|\big|\mathbf{h}\big|\big|_1^2}{M}\stackrel{(c)}{\le}\big|\big|\mathbf{h}\big|\big|_2^2\stackrel{(b)}{=}\sum_{j=1}^{M}\big|h_{i}\big|^2\nonumber\\
	&=\frac{\xi^0_{\mathrm{AA-CSI}}}{\eta\varrho}=\frac{M\xi^0_{\mathrm{SA}}}{\eta\varrho},\label{ME}
	\end{align}
	where $(a)$ comes from applying the triangular inequality and generalizing as shown in \cite[Section~1.1.7]{Andreesc.2006}, 
	$(b)$ follows from using each time the $\ell_p-$norm notation, while $(c)$ from the inequality between the arithmetic and quadratic mean. 
\end{proof}
\section{Analysis under Rician Fading}\label{Ric}
Herein we assume that channels undergo Rician fading, which is a very general assumption that allows  modeling a wide variety of channels by tuning the Rician factor $\kappa\ge 0$, e.g., when $\kappa=0$ the channel envelope is Rayleigh distributed, while when $\kappa\rightarrow\infty$ there is a fully deterministic LOS channel. Since this work deals mainly with CSI-free WET schemes, and for such scenarios the characterization of one sensor is representative of the overall performance, we focus our attention to the case of a generic node, thus avoiding subindex $j$. Additionally, the performance gap between the CSI-free and CSI-based WET schemes is maximum for $|\mathcal{S}|=1$ and analyzing such scenario when CSI is available allows getting analytical expressions along with some useful insights. Previous assumptions  imply that the envelope distribution of $h_{i}$ is Rician distributed with factor $\kappa$, e.g., $|h_{i}|\sim\mathrm{RIC}(\kappa)$ while the channels are Gaussian with independent real and imaginary parts, $\mathbf{h}=\bm{\alpha}+\mathbf{i}\bm{\beta}$ with $\bm{\alpha},\bm{\beta}\sim \mathcal{N}\big(\frac{1}{\sqrt{2}}\bm{\mu},\sigma^2\mathbf{R}\big)$, where $\sigma^2\mathbf{R}$ and $\bm{\mu}$ are respectively the covariance matrix\footnote{Hence, we consider the spatial correlation phenomenon, which occurs due to insufficient spacing between antenna elements, small angle spread, existence of few dominant scatterers, and the antenna geometry. The general concept of spatial correlation is usually linked only to a positive spatial correlation, even when the negative correlation is also physically possible mainly due to the use of decoupling networks and antenna geometry effects. On the other hand, notice that correlation between the coefficients $h_{i}$ is highly probable in the kind of systems we are investigating here because of the short range transmissions \cite{Shan.2000}. Even though, by setting $\mathbf{R}=\mathbf{I}$ we are also able of modeling completely independent fading realizations over all the antennas.} and mean vector of $\bm{\alpha},\bm{\beta}$ \cite{Jayaweera.2005}. Specifically, $\sigma^2$ is the variance of each $\alpha_i,\beta_i$, hence, $|r_{i,j}|\le 1,\ r_{i,i}=1,\ \forall i,j$.

We assume $\bm{\mu}=\mu[\mathbf{1}]_{M\times 1}$, e.g., equal mean over all the fading paths, while factor $\kappa$ is connected to $\mu$ and $\sigma^2$ as 
\begin{align}
\label{kap}
\kappa&=\frac{\mu^2}{2\sigma^2},
\end{align}
and normalizing the channel power gain as $\mu^2+2\sigma^2=1$ \cite{Jayaweera.2005}, e.g., $\mathbb{E}\big[|h_i|^2\big]=1$, we attain
\begin{align}
\sigma^2&=\frac{1}{2(1+\kappa)},\qquad \mu^2=\frac{\kappa}{1+\kappa}.\label{sigm2mu}
\end{align}
Even when in practice $\bm{\mu}=\mu[\mathbf{1}]_{M\times 1}$ does not strictly hold since the channel coefficients $h_i$ usually have different phases on average, such assumption does not impact on the performance of the $\mathrm{OA},\ \mathrm{SA},\ \mathrm{OA-CSI}$ and $\mathrm{AA-CSI}$ schemes. This is because under the $\mathrm{OA},\ \mathrm{SA}$ and $\mathrm{OA-CSI}$ schemes $T$ does not transmit through more than one antenna simultaneously while the instantaneously phase of the signal at the receiver does not affect the EH process. Meanwhile, under the $\mathrm{AA-CSI}$ scheme the PB precompensates through a beamforming vector for both the channel gains and phases such that the average transmission energy is maximized and it was shown according to \eqref{oatv2} and \eqref{aaocsiv2} that $\xi_{\mathrm{AA-CSI}}^0=M\xi_{\mathrm{SA}}$. 
On the other hand, different mean phases do impact on the system performance under the $\mathrm{AA}$ scheme since the energy transmission takes place over all antennas simultaneously and the phases of the signals coming from different antennas could have either a constructive or destructive aggregation at the receiver side which can be more/less accentuated according to their mean values. It is possible to show analytically that the performance of the $\mathrm{AA}$ scheme is maximized when the mean phases of channel coefficients are equal but a deeper discussion is outside of our scope in this article.
In any case, our performance results for the $\mathrm{AA}$ scheme can be considered as upper-bounds for the performance in practical setups, which can be accurate when the mean phases of the channel coefficients are similar.
\subsection{Distribution of the Harvested Energy under Ideal EH Linear Model}\label{linear}
Now, we proceed to characterize the distribution of the harvested energy when using each of the WET strategies analyzed in the previous section under the ideal EH linear model. 
\subsubsection{$\mathrm{OA}$}
Obviously, the spatial correlation has no impact on the performance of the $\mathrm{OA}$ scheme since under that scheme $T$ selects only one antenna without using any information related with the other antennas. From \eqref{1av2} we proceed as follows
\begin{align}
\xi^0_{\mathrm{OA}}&=\eta\varrho|h_{i}|^2=\eta\varrho(\alpha^2+\beta^2)\nonumber\\
&\stackrel{(a)}{=}\eta\varrho \sigma^2\big(\hat{\alpha}^2+\hat{\beta}^2\big)\nonumber\\
&\stackrel{(b)}{\sim}\eta\varrho\sigma^2 \chi^2\big(2,\frac{\mu^2}{\sigma^2}\big)\nonumber\\
&\stackrel{(c)}{\sim} \frac{\eta\varrho}{2(1+\kappa)}\chi^2\big(2,2\kappa\big),
\label{1ar}
\end{align}
where $(a)$ comes from normalizing the variance of $\alpha$ and $\beta$ such that $\hat{\alpha},\hat{\beta}\sim \mathcal{N}\big(\frac{1}{\sqrt{2}}\frac{\mu}{\sigma},1\big)$, $(b)$ comes from the direct definition of a non-central chi-squared RV \cite[Ch.2]{Proakis.2001}, and $(c)$ follows after using \eqref{kap} and \eqref{sigm2mu}. 
\subsubsection{$\mathrm{AA}$}
Now we focus on the performance of the $\mathrm{AA}$ scheme. Based on \eqref{aaov2} we have that
\begin{align}\label{aaor}
\xi^0_{\mathrm{AA}}&=\frac{\eta\varrho}{M}\Big|\sum_{i=1}^{M}\alpha_i+\mathbf{i}\sum_{i=1}^{M}\beta_i\Big|^2\stackrel{(a)}{=}\frac{\eta\varrho}{M}\big|\hat{\alpha}+\mathbf{i}\hat{\beta}\big|^2\nonumber\\
&\stackrel{(b)}{=}\frac{\eta\varrho\sigma^2\delta}{M}\big|\theta+\mathbf{i}\vartheta\big|^2=\frac{\eta\varrho\sigma^2\delta}{M}\big(\theta^2+\vartheta^2\big)\nonumber\\
&\stackrel{(c)}{\sim} \frac{\eta\varrho\sigma^2\delta}{M}\chi^2\Big(2,\frac{M^2\mu^2}{\delta \sigma^2}\Big)\nonumber\\
&\stackrel{(d)}{\sim} \frac{\eta\varrho\delta}{2M(1+\kappa)}\chi^2\Big(2,\frac{2\kappa M^2}{\delta}\Big),
\end{align}
where $(a)$ comes from using $\hat{\alpha}=\sum_{i=1}^{M}\alpha_i$ and $\hat{\beta}=\sum_{i=1}^{M}\beta_i$ and notice that $\hat{\alpha},\hat{\beta}$ are still Gaussian RVs \cite{Novosyolov.2006} with mean $M\mathbb{E}[\alpha_i]=\frac{1}{\sqrt{2}}M\mu$ and variance $\sigma^2\delta$ with
\begin{align}\label{delta}
\delta=\sum_{i=1}^{M}\sum_{j=1}^{M}r_{i,j};
\end{align}
while $(b)$ follows after variance normalization such that $\theta,\vartheta\sim\mathcal{N}\big(\frac{M}{\sqrt{2\delta}}\frac{\mu}{\sigma},1\big)$. Finally, $(c)$  comes from using the definition of a non-central chi-squared RV \cite[Ch.2]{Proakis.2001}, while $(d)$ from using \eqref{kap} and \eqref{sigm2mu}. Notice that  for the special case of 
uniformly spatial correlated fading such that the antenna elements are correlated between each other with coefficient $\rho$ ($r_{i,j}=\rho,\ \forall i\ne j$), we have that
\begin{align}\label{delta2}
\delta=M\big(1+(M-1)\rho\big).
\end{align}
Additionally, in order to guarantee that $\mathbf{R}$ is positive definite and consequently a viable covariance matrix, $\rho$ is lower bounded by $-\frac{1}{M-1}$ \cite{Chen.2015}, thus $-\frac{1}{M-1}\le \rho\le 1$ and $0\le\delta\le M^2$.
\subsubsection{$\mathrm{SA}$}\label{subSA}
From \eqref{oatv2}, we write $\xi^0_{\mathrm{SA}}$ as
\begin{align}
	\xi^0_{\mathrm{SA}}&=\frac{\eta\varrho}{M}\sum_{i=1}^{M}\big(\alpha_i^2+\beta_i^2\big)\nonumber\\
	&=\frac{\eta\varrho}{M}\big(\bm{\alpha}^T\bm{\alpha}+\bm{\beta}^T\bm{\beta}\big).\label{ehh}
\end{align}
	Since $\bm{\alpha}$ and $\bm{\beta}$ are i.i.d between each other we focus on the product $\bm{\alpha}^T\bm{\alpha}$ and the results are also valid for $\bm{\beta}^T\bm{\beta}$.
	Let us define $\mathbf{v}=\mathbf{R}^{-1/2}\big(\bm{\alpha}-\frac{1}{\sqrt{2}}\bm{\mu}\big)/\sigma$ which is distributed as $\mathcal{N}\big(\mathbf{0},\mathbf{I}\big)$, then
	\begin{align}
	\bm{\alpha}&=\sigma\mathbf{R}^{1/2}\mathbf{v}+\frac{1}{\sqrt{2}}\bm{\mu}\nonumber\\
	\bm{\alpha}^T\bm{\alpha}&=\big(\sigma\mathbf{R}^{1/2}\mathbf{v}+\frac{\bm{\mu}}{\sqrt{2}}\big)^T\big(\sigma\mathbf{R}^{1/2}\mathbf{v}+\frac{\bm{\mu}}{\sqrt{2}}\big)\nonumber\\
	&=\big(\mathbf{v}+\mathbf{R}^{-1/2}\frac{\bm{\mu}}{\sqrt{2}\sigma}\big)^T\sigma^2\mathbf{R}\big(\mathbf{v}+\mathbf{R}^{-1/2}\frac{\bm{\mu}}{\sqrt{2}\sigma}\big),\label{hh}
	\end{align}
	where last step comes from simple algebraic transformations. Notice that
	\begin{align}\label{RP}
	\mathbf{R}=\mathbf{B}\mathbf{\Lambda}\mathbf{B}^T,
	\end{align}
	which is the spectral decomposition of $\mathbf{R}$ \cite[Ch.21]{Harville.2008}. In \eqref{RP}, $\mathbf{\Lambda}$ is a diagonal matrix containing the eigenvalues of $\mathbf{R}$, and $\mathbf{B}$ is a matrix whose column vectors are the orthogonalized eigenvectors of $\mathbf{R}$. In order to find the eigenvalues, $\lambda$s, of $\mathbf{R}$, we require solving $\mathrm{det}\big(\mathbf{R}-\lambda\mathbf{I}\big)=0$ for $\lambda$, which is analytical intractable for a general matrix $\mathbf{R}$. However, for the special case of uniform spatial correlation with coefficient $\rho$ we are able of proceeding as follows
	
	\begin{align}
	\mathrm{det}\big(\mathbf{R}\!-\!\lambda\mathbf{I}\big)&\!=\!\mathrm{det}\left( \left[\begin{smallmatrix}
	1\!-\!\lambda& \rho & \cdots & \rho \\
	\rho & 1\!-\!\lambda & \cdots & \rho \\
	\vdots &\vdots &\ddots & \vdots \\
	\rho &\rho &\cdots & 1\!-\!\lambda
	\end{smallmatrix}\right]\right)\nonumber\\
	&=\!\mathrm{det}\left((1\!-\!\lambda\!-\!\rho)\left(\mathbf{I}\!+\! \left[\begin{smallmatrix}
	\frac{1}{1\!-\!\lambda\!-\!\rho}\\
	\frac{1}{1\!-\!\lambda\!-\!\rho}\\
	\vdots\\
	\frac{1}{1\!-\!\lambda\!-\!\rho}
	\end{smallmatrix}\right]\left[\begin{smallmatrix}
	\rho, &\rho, &\cdots, &\rho
	\end{smallmatrix}\right]\right)\right)\nonumber\\
	&\!\stackrel{(a)}{=}\!(1\!-\!\lambda\!-\!\rho)^M\!\!\left(\!\!1\!+\!\left[\begin{smallmatrix}
	\rho, &\rho, &\cdots, &\rho
	\end{smallmatrix}\right] \left[\begin{smallmatrix}
	\frac{1}{1-\lambda-\rho}\\
	\frac{1}{1-\lambda-\rho}\\
	\vdots\\
	\frac{1}{1-\lambda-\rho}
	\end{smallmatrix}\right]\! \right)\nonumber\\
	&\stackrel{(b)}{=}\!(1\!-\!\lambda\!-\!\rho)^{M-1}\big(1\!-\!\lambda\!+\!\rho(M\!-\!1)\big),\label{eq}
	\end{align}
	where $(a)$ comes from using the Matrix determinant lemma \cite{Harville.2008}, while $(b)$ follows after some algebraic manipulations. Now, two different eigenvalues are easily obtained by matching \eqref{eq} with $0$. These are $\lambda\!=\!1\!-\!\rho$ with multiplicity $M\!-\!1$ and $\lambda\!=\!1\!+\!\rho(M\!-\!1)$ with multiplicity $1$, thus
	\begin{align}
	\mathbf{\Lambda}=\mathrm{Diag}\Big[1-\rho,\cdots,1-\rho,1+(M-1)\rho\Big].\label{L}
	\end{align}

	Meanwhile, the corresponding eigenvectors, $\mathbf{e}$, satisfy $\mathbf{R}\mathbf{e}=\lambda\mathbf{e}$, thus,
	\begin{itemize}
		\item for $\lambda=1-\rho$ we have
		\begin{align}
		\left[\begin{smallmatrix}
		1 & \rho & \cdots & \rho \\
		\rho & 1& \cdots & \rho \\
		\vdots & \vdots & \ddots & \rho \\
		\rho & \rho & \cdots &1
		\end{smallmatrix}\right]\!\left[\begin{smallmatrix}
		e_1\\
		e_2\\
		\vdots\\
		e_M
		\end{smallmatrix}\right]\!&=\!(1\!-\!\rho)\left[\begin{smallmatrix}
		e_1\\
		e_2\\
		\vdots\\
		e_M
		\end{smallmatrix}\right]\nonumber\\
		\left[\begin{smallmatrix}
		e_1+\rho\sum_{i=2}^{M}e_i\\
		\rho e_1+e_2+\rho\sum_{i=3}^{M}e_i\\
		\vdots\\
		\rho\sum_{i=1}^{M-1}e_i +e_M
		\end{smallmatrix}\right]&=\left[\begin{smallmatrix}
		(1-\rho)e_1\\
		(1-\rho)e_2\\
		\vdots\\
		(1-\rho)e_M\\
		\end{smallmatrix}\right]\nonumber\\
		\left[\begin{smallmatrix}
		\rho\sum_{i=1}^{M}e_i\\
		\rho\sum_{i=1}^{M}e_i\\
		\vdots\\
		\rho\sum_{i=1}^{M}e_i\\
		\end{smallmatrix}\right]&=\left[\begin{smallmatrix}
		0\\
		0\\
		\vdots\\
		0
		\end{smallmatrix}\right]\longrightarrow 
		\sum_{i=1}^{M}e_i=0,\label{e1}
		\end{align}
		\item and for $\lambda=1+\rho(M-1)$ we have
		\begin{align}
		\left[\begin{smallmatrix}
		1 & \rho & \cdots & \rho \\
		\rho & 1& \cdots & \rho \\
		\vdots & \vdots & \ddots & \rho \\
		\rho & \rho & \cdots &1
		\end{smallmatrix}\right]\!\left[\begin{smallmatrix}
		e_1\\
		e_2\\
		\vdots\\
		e_M
		\end{smallmatrix}\right]\!\!&=\big(1+\rho(M-1)\big)\left[\begin{smallmatrix}
		e_1\\
		e_2\\
		\vdots\\
		e_M
		\end{smallmatrix}\right]\nonumber\\
		\left[\begin{smallmatrix}
		e_1+\rho\sum_{i=2}^{M}e_i\\
		\rho e_1\!+\!e_2\!+\!\rho\sum_{i=3}^{M}e_i\\
		\vdots\\
		\rho\sum_{i=1}^{M-1}e_i +e_M
		\end{smallmatrix}\right]\!\!\!&=\left[\begin{smallmatrix}
		\big(1+\rho(M-1)\big)e_1\\
		\big(1+\rho(M-1)\big)e_2\\
		\vdots\\
		\big(1+\rho(M-1)\big)e_M\\
		\end{smallmatrix}\right]\nonumber\\
		\left[\begin{smallmatrix}
		\rho\sum_{i=1}^{M}e_i\\
		\rho\sum_{i=1}^{M}e_i\\
		\vdots\\
		\rho\sum_{i=1}^{M}e_i\\
		\end{smallmatrix}\right]&=\left[\begin{smallmatrix}
		\rho M e_1\\
		\rho M e_2\\
		\vdots\\
		\rho M e_M
		\end{smallmatrix}\right]\longrightarrow  
		e_1\!=\!\cdots\!=\!e_M.\label{e2}
		\end{align}		
	\end{itemize}
Thus, the eigenvectors associated to $\lambda=1-\rho$ satisfy \eqref{e1}, while the eigenvector associated to $\lambda\!=\big(1+\rho(M-1)\big)$ satisfies \eqref{e2}. After orthogonalization by using the Gram-Schmidt process \cite{Higham.2002}, and normalization,  the resulting vectors still satisfy either \eqref{e1}  or \eqref{e2} according to the case. For the latter, the resulting vector is $[1/\sqrt{M}, 1/\sqrt{M},\cdots,1/\sqrt{M}]$, therefore
\begin{align}\label{PP}
\mathbf{B}^T=\left[\begin{smallmatrix}
b_{1,1} & b_{2,1} & \cdots & b_{M,1}\\
b_{1,2} & b_{2,2} & \cdots & b_{M,2}\\
\vdots & \vdots & \ddots & \vdots \\
b_{1,M-1} & b_{2,M-1} & \cdots & b_{M,M-1}\\
\frac{1}{\sqrt{M}} & \frac{1}{\sqrt{M}} & \cdots &\frac{1}{\sqrt{M}}
\end{smallmatrix}\right],
\end{align}
where $\sum_{i=1}^{M}b_{i,j}=0$ for $j=1,\cdots M-1$.
	
Now, substituting  \eqref{RP} into \eqref{hh} yields
\begin{align}
   \bm{\alpha}^T\!\bm{\alpha}&\!=\!\Big(\!\mathbf{v}\!+\!\big(\mathbf{B}\mathbf{\Lambda}\mathbf{B}^T\!\big)^{\!-\frac{1}{2}}\!\!\frac{\bm{\mu}}{\sqrt{2}\sigma}\!\Big)^T\!\!\!\!\sigma^2\mathbf{B}\mathbf{\Lambda}\mathbf{B}^T\!\Big(\!\mathbf{v}\!+\!\big(\mathbf{B}\mathbf{\Lambda}\mathbf{B}^T\!\big)^{\!-\frac{1}{2}}\!\!\frac{\bm{\mu}}{\sqrt{2}\sigma}\!\Big)\nonumber\\
	&\stackrel{(a)}{=}\Big(\mathbf{B}^T\!\mathbf{v}\!+\!\mathbf{\Lambda}^{\!-\frac{1}{2}}\mathbf{B}^T\!\frac{\bm{\mu}}{\sqrt{2}\sigma}\Big)^T\!\!\sigma^2\mathbf{\Lambda}\Big(\mathbf{B}^T\!\mathbf{v}\!+\!\mathbf{\Lambda}^{\!-\frac{1}{2}}\mathbf{B}^T\!\frac{\bm{\mu}}{\sqrt{2}\sigma}\Big)\nonumber\\
	&\stackrel{(b)}{=}\Big(\mathbf{Q}+\mathbf{d}\frac{1}{\sqrt{2}}\frac{\mu}{\sigma}\Big)^T\sigma^2\mathbf{\Lambda}\Big(\mathbf{Q}+\mathbf{d}\frac{1}{\sqrt{2}}\frac{\mu}{\sigma}\Big)\nonumber\\
	&\stackrel{(c)}{=}\sigma^2\sum_{i=1}^{M}\lambda_{ii}\Big(\varsigma_i+d_i\frac{1}{\sqrt{2}}\frac{\mu}{\sigma}\Big)^2,\label{hhn}
	\end{align}
	where $(a)$ comes after some algebraic transformations, $(b)$ follows from taking $\mathbf{Q}=\mathbf{B}^T\mathbf{v}\sim\mathcal{N}\big(\mathbf{0},\mathbf{I}\big)$ and $\mathbf{d}=\mathbf{\Lambda}^{-\frac{1}{2}}\mathbf{B}^T\times\mathbf{1}_{M\times 1}$, for which using \eqref{PP} and \eqref{L} yields
	\begin{align}\label{d}
	\mathbf{d}=\bigg[0, 0, \cdots,0, \sqrt{\tfrac{M}{1+(M-1)\rho}}\bigg]^T,
	\end{align}	
	and finally $(c)$ comes from setting $\varsigma_i\sim\mathcal{N}\big(0,1\big)$. Notice that $\lambda_{i}$ is the $i$-th eigenvalue of $\mathbf{R}$. Using \eqref{hhn} into \eqref{ehh} yields
	\begin{align}
	\xi^0_{\mathrm{SA}}&\!\stackrel{(a)}{=}\frac{\eta\varrho\sigma^2}{M}\sum_{i=1}^{M}\lambda_{ii}\bigg[\Big(\varsigma_i+d_i\frac{1}{\sqrt{2}}\frac{\mu}{\sigma}\Big)^2+\Big(\omega_i+d_i\frac{1}{\sqrt{2}}\frac{\mu}{\sigma}\Big)^2\bigg]\nonumber\\
	&\!\stackrel{(b)}{=}\frac{\eta\varrho\sigma^2}{M}\bigg[(1-\rho)\sum_{i=1}^{2(M-1)}\varsigma_i^2+\big(1+(M-1)\rho\big)\times\nonumber\\
	&\qquad\qquad\qquad\times\sum_{i=1}^{2}\Big(\omega_i+\sqrt{\frac{M}{2\big(1+(M\!-\!1)\rho\big)
			}}\frac{\mu}{\sigma}\Big)^2\bigg]\nonumber\\
		&\!\stackrel{(c)}{\sim} \frac{\eta\varrho}{2M(1+\kappa)}\bigg[(1-\rho)\chi^2\Big(2(M-1),0\Big) +\nonumber\\
		&\qquad\ \ +\big(1+(M-1)\rho\big)\chi^2\Big(2,\frac{2\kappa M}{1+(M-1)\rho}\Big)\bigg],\label{pdfc}
	\end{align}
	where $(a)$ comes from defining $\omega_i\sim \mathcal{N}\big(0,1\big)$ to use when evaluating the term $\bm{\beta}^T\bm{\beta}$ in \eqref{ehh}, which has the same form given in \eqref{hhn}, while $(b)$ follows from using \eqref{d}. In $(b)$ we also regrouped similar terms, which allows writing $(c)$ after using the direct definition of a non-central chi-square distribution \cite[Ch.2]{Proakis.2001} along with \eqref{kap} and \eqref{sigm2mu}.
	
	Under the assumption of uniform spatial correlation with parameter $\rho$ ($r_{i,j}=\rho,\ \forall i\ne j$), \eqref{pdfc} holds; however, notice that for the $\mathrm{AA}$ scheme we were able of writing the distribution merely as a function of parameter $\delta$, which is not linked to any specific kind of correlation; hence, we can expect that the behavior under the $\mathrm{SA}$ scheme depends, at least approximately, on $\delta$ rather on the specific entries of matrix $\mathrm{R}$. To explore this, we substitute $\rho=\frac{\delta-M}{M(M-1)}$ coming from \eqref{delta2}, into \eqref{pdfc}, such that we attain
		\begin{align}
		\xi^0_{\mathrm{SA}}\!\sim\! \frac{\eta\varrho}{2M^2(1\!+\!\kappa)}\!\bigg[\!\frac{M^2\!-\!\delta}{M\!-\!1}\chi^2\!\Big(\!2(M\!-\!1),0\!\Big)\! +\!\delta\chi^2\!\Big(\!2,\frac{2\kappa M^2}{\delta}\Big)\!\bigg].\label{pdfc_app}
		\end{align}
        For validation purposes let us assume an exponential correlation matrix such that $r_{i,j}=\tau^{|i-j|}$ where $\tau$ is the correlation coefficient of neighboring antennas\footnote{This model is physically reasonable since the correlation decreases with increasing distance between antennas \cite{Chen.2015}. Still, simulation results that are not included in this paper evidence that  the trends and accuracy of our results under such exponential correlation model remain valid independently of the correlation profile.}, hence $\delta\!=\!M\!+\!2\sum_{i=1}^{M\!-\!1}\!(M\!-\!i)\tau^i$. Fig.~\ref{Fig2} shows Monte Carlo results comparing the exact \eqref{ehh} and approximate \eqref{pdfc_app} expressions of the harvested energy distribution under the $\mathrm{SA}$ scheme.
        As shown, they match accurately except when the setup is jointly characterized  by  very small $\kappa$ and relatively large $M$ and $\tau$, as it is the case of $\kappa=0,\ M=16$ and $\tau=0.8$. In the other cases, \eqref{pdfc_app} holds accurately, while matching exactly \eqref{ehh} and \eqref{pdfc} under uniform spatial correlation. Notice that when $\tau=0$ or $1$, the system is under an extreme case of uniform spatial correlation and \eqref{pdfc_app} becomes exact; hence, for fixed $\kappa$ and $M$,  \eqref{pdfc_app} has the least accurate convergence to the exact distribution for some $0<\tau^*<1$. The value(s) of $\tau^*$ could be seen as the one(s) providing the greatest difference or distance between $\mathrm{R}$ and its equivalent uniform correlation matrix with coefficient $\rho$.\footnote{There are many similarity metrics and concepts of distance between matrices in the literature, such as the $\ell_p-$norm distances, the trace distance, the correlation matrix distance \cite{Herdin.2005}, just to name a few. Each of them has been shown to be appropriate under different goals. However, in the context of our work it is not clear which one fits better for finding $\tau^*$, and even having such information, the following analyses are expected to be cumbersome. For this reason, we have carried out extensive simulations and found that $\tau^*$ is unique and increases very slowly with $M$ such that for $M=3\rightarrow\tau^*\approx 0.75$ while for $M=32\rightarrow\tau\approx 0.85$. Therefore, notice that our results are expected to be accurate in most of practical scenarios where also a non-negligible LOS parameter influences positively on the accuracy of \eqref{pdfc_app}, as shown in Fig.~\ref{Fig2}b.}
	\begin{figure}[t!]
		\centering
		\subfigure{\includegraphics[width=0.95\columnwidth]{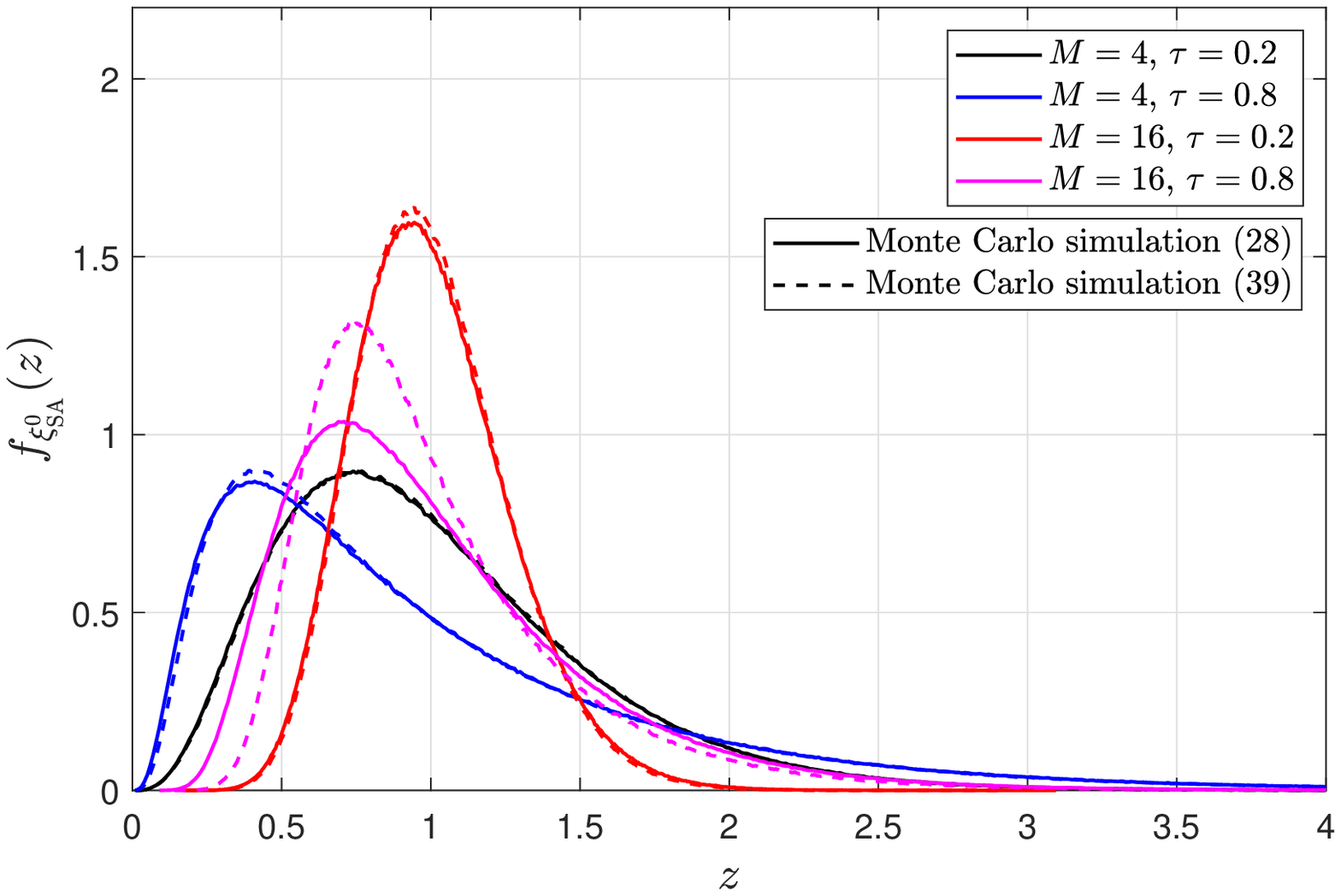}}\\ 
		\vspace{1mm}
		\subfigure{\includegraphics[width=0.95\columnwidth]{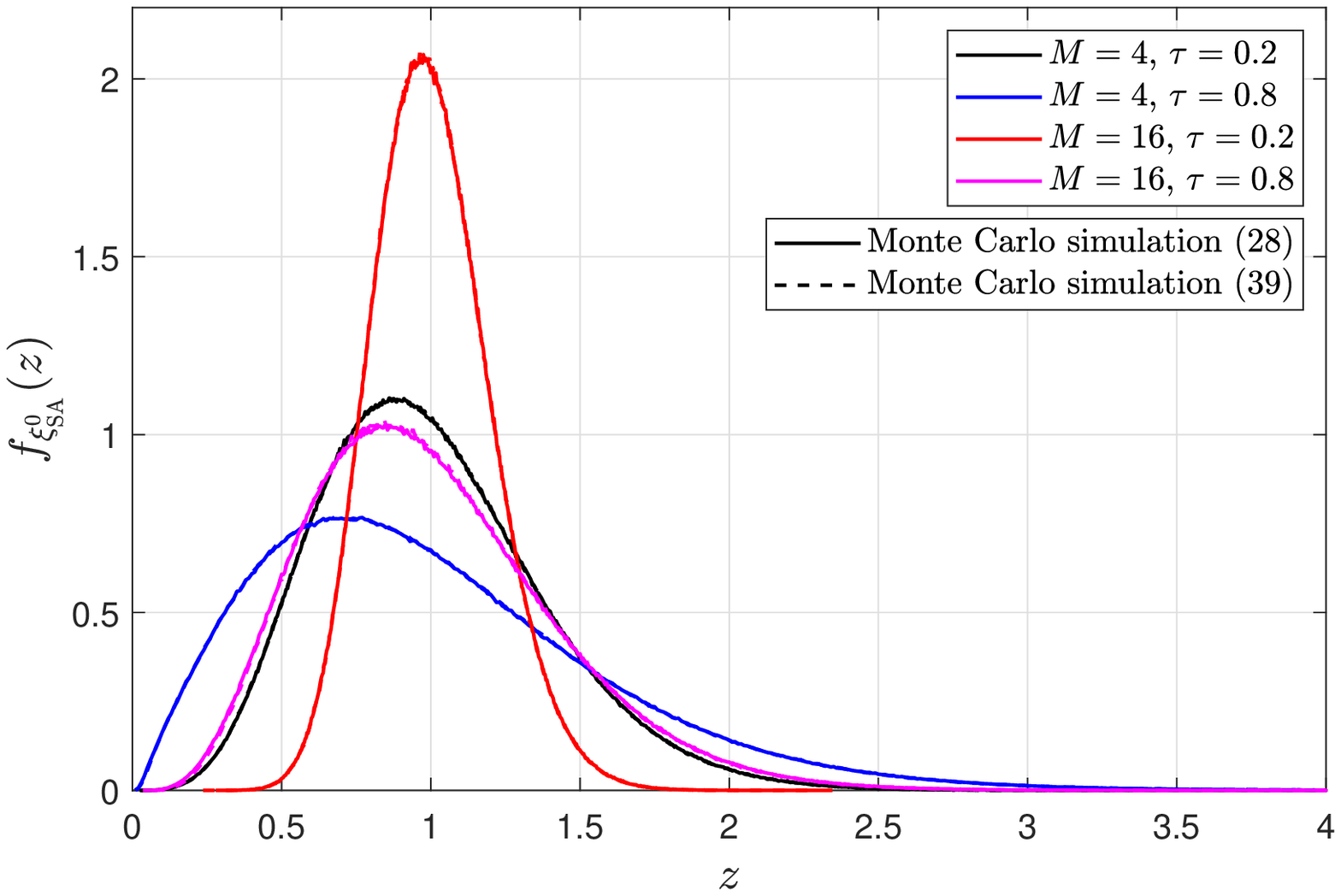}}\\
		\caption{PDF of $\xi_{_\mathrm{SA}}^0$ based on Monte Carlo simulation of exact \eqref{ehh} and approximate \eqref{pdfc_app} expressions. We set $\eta\varrho=1$ and use $M\in\{4,16\}$ and  $\tau\in\{0.2,0.8\}$; while $(a)$ $\kappa=0$ (top), and $(b)$ $\kappa=3$ (bottom).}
		\label{Fig2}
	\end{figure}		
   \begin{remark}\label{re2}
    	Therefore, $\xi^0_{\mathrm{SA}}$  is approximately distributed as a linear combination of a chi-square RV and a non-central chi-square RV, with $2(M-1)$ and $2$ degrees of freedom, respectively. Unfortunately, it seems intractable finding a closed-form expression for the distribution of $\xi^0_{\mathrm{SA}}$ even when using \eqref{pdfc_app}, except for
    	\begin{itemize}
    		\item $\delta=0$, for which $\xi^0_{\mathrm{SA}}\sim \frac{\eta\varrho}{2(M\!-\!1)(1\!+\!\kappa)}\chi^2\big(2(M\!-\!1),0\big)\!+\!\frac{\eta\varrho\kappa}{1\!+\!\kappa}$. The latter term results from the fact that when $\delta\rightarrow 0$ the PDF of $\delta\chi^2\big(2,\frac{2\kappa M^2}{\delta}\big)$  is 1 at $2M\kappa$, and $0$ otherwise;
    		\item $\delta=M$, for which $\xi^0_{\mathrm{SA}}\sim \frac{\eta\varrho}{2M(1+\kappa)}\Big[\chi^2\big(2(M-1),0\big)+\chi^2\big(2,2M\kappa\big)\Big]\sim \frac{\eta\varrho}{2M(1+\kappa)} \chi^2\big(2M,2M\kappa\big)$, which can be easily verified by using the direct definition of a non-central chi-squared RV;    		
    		\item $\delta=M^2$, for which $\xi^0_{\mathrm{SA}}\sim \frac{\eta\varrho}{2(1+\kappa)}\chi^2\big(2,2\kappa\big)$.
    	\end{itemize}
    \end{remark}
    \begin{remark}\label{re2_5}
    	For full positive correlation, e.g., $\delta=M^2$, the performance of the $\mathrm{SA}$ scheme matches that of the $\mathrm{OA}$. This is an expected result since even by switching antennas the energy harvested at $S$ keeps the same.
    \end{remark}
	\subsubsection{$\mathrm{OA-CSI}$}
	According to \eqref{1acsiv2} finding the distribution of $\xi^0_{\mathrm{OA-CSI}}$ is equivalent to the problem of finding the distribution of the Signal-to-Noise Ratio in a correlated Rician single-input multiple-output (SIMO) channel scenario, where the receiver with $M$ antennas uses Selection Combining (SC). We first analyze the case of uniform spatial correlation with coefficient $\rho$, for which we can directly use \cite[Eq.(21)]{Chen.2004} to state
	\vspace{-2mm}
	\begin{align}
	F_{\xi^0_{\mathrm{OA-CSI}}}(\eta\varrho x)&=\!e^{-\frac{\kappa}{\rho}}\!\int\limits_{0}^{\infty}\!\bigg[1\!-\!\mathcal{Q}\bigg(\sqrt{\frac{2\rho t}{1\!-\!\rho}},\sqrt{\frac{2(1\!+\!\kappa)x}{1-\rho}}\bigg)\bigg]^M\!\!\!\times\nonumber\\
	&\qquad\qquad\qquad\ \ \times e^{-t}I_0\Big(2\sqrt{\frac{\kappa t}{\rho}}\Big)\mathrm{d}t,\label{Fz0}\\
	f_{\xi^0_{\mathrm{OA-CSI}}}(x)&=\!\frac{d}{dx}F_{\xi^0_{\mathrm{OA-CSI}}}(x)\!=\!\frac{M(1\!+\!\kappa)}{\eta\varrho(1\!-\!\rho)}e^{-\frac{\kappa}{\rho}\!-\!\frac{1+\kappa}{\eta\varrho(1\!-\!\rho)}x}\!\times\nonumber\\
	\int\limits_{0}^{\infty}e^{-\frac{t}{1-\rho}}I_0&\Big(2\sqrt{\frac{\kappa t}{\rho}}\Big) \!\bigg[1\!-\!\mathcal{Q}\Big(\sqrt{\frac{2\rho t}{1\!-\!\rho}},\sqrt{\frac{2(1\!+\!\kappa)x}{\eta\varrho(1\!-\!\rho)}}\Big)\!\bigg]^{M\!-\!1}\!\!\!\!\times\nonumber\\
	&\qquad\qquad\times I_0\Big(\frac{2}{1\!-\!\rho}\sqrt{\frac{\rho t (1\!+\!\kappa)x}{\eta\varrho}}\Big)\mathrm{d}t.\label{fz0}
	\end{align}
	Now and as done in the previous subsection, we utilize the transformation $\rho=\frac{\delta-M}{M(M-1)}$, which hopefully allows converting \eqref{Fz0} and \eqref{fz0} into expressions valid  for systems with any kind of correlation. By doing so, the PDF for instance becomes the one shown in \eqref{fz} at the top of the next page.
	\begin{figure}[t!]
		\centering
		\subfigure{\includegraphics[width=0.95\columnwidth]{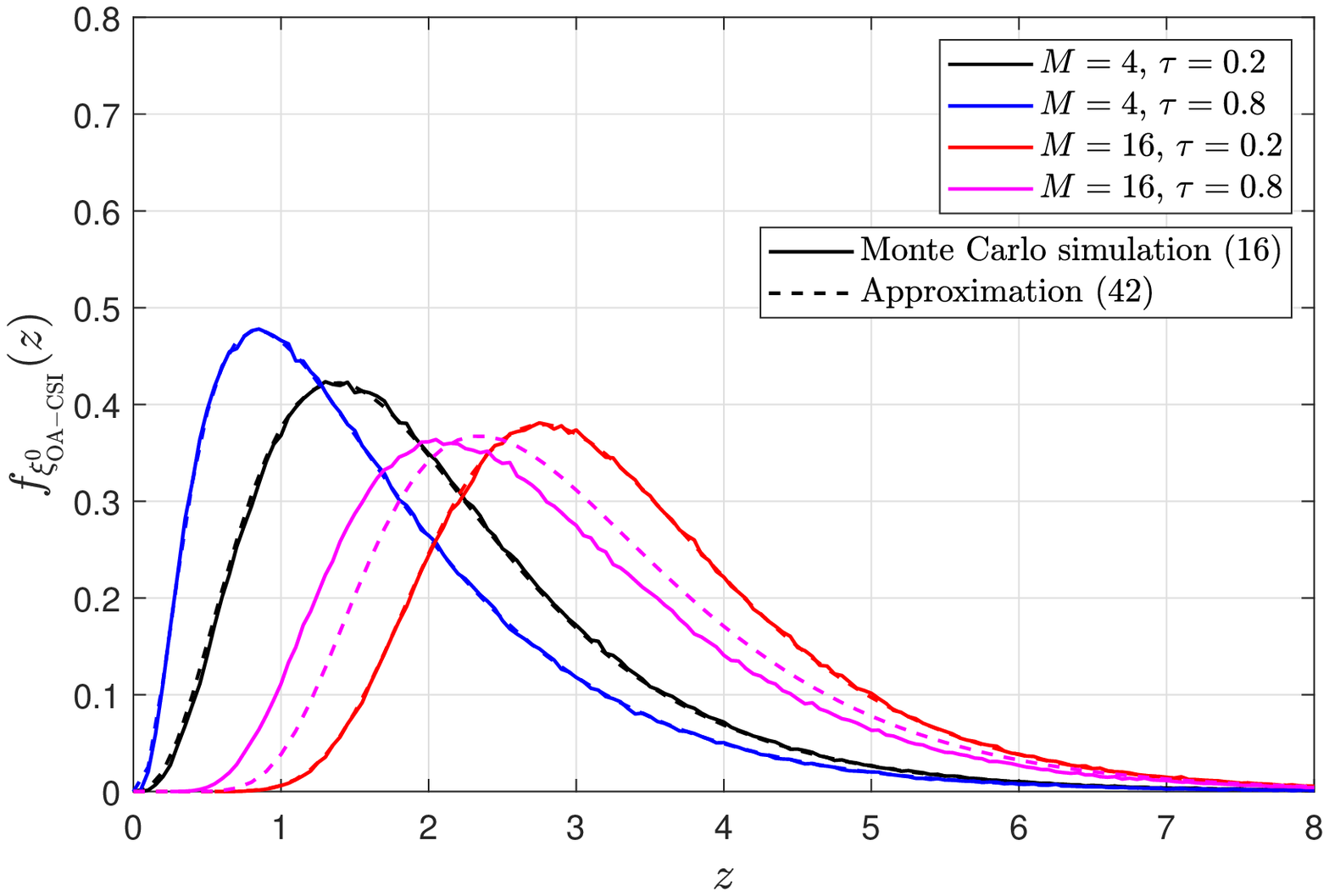}}\\
		\vspace{1mm}
		\subfigure{\includegraphics[width=0.95\columnwidth]{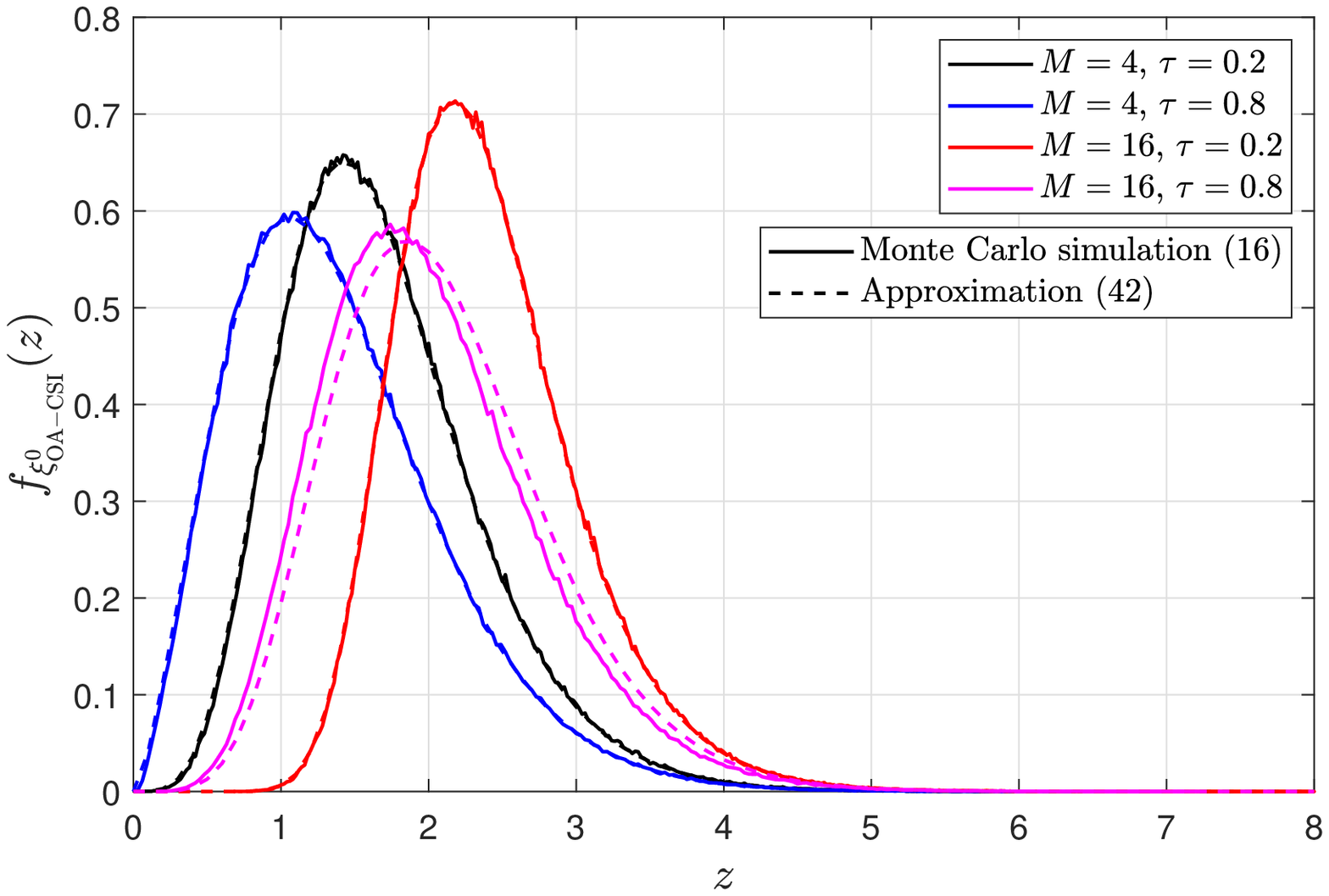}}
		\caption{PDF of $\xi_{_\mathrm{OA-CSI}}^0$ based on Monte Carlo simulation of \eqref{1acsiv2} and numerical evaluation of approximation given in \eqref{fz}. We set $\eta\varrho=1$ and use $M\in\{4,16\}$ and  $\tau\in\{0.2,0.8\}$; while $(a)$ $\kappa=0$ (top), and $(b)$ $\kappa=3$ (bottom).}
		\label{Fig3}
	\end{figure}
	\begin{figure*}
		\small
		\begin{align}
		f_{\xi^0_{\mathrm{OA-CSI}}}(x)=&\frac{M^2(M-1)(1+\kappa)}{\eta\varrho(M^2-\delta)}e^{-\frac{\kappa M(M-1)}{\delta-M}-\frac{(1+\kappa)M(M-1)}{\eta\varrho(M^2-\delta)}x}\int\limits_{0}^{\infty}e^{-\frac{tM(M-1)}{M^2-\delta}}I_0\Big(2\sqrt{\frac{\kappa M(M-1) t}{\delta-M}}\Big)\times\nonumber\\
		&\times \bigg[1-\mathcal{Q}\Big(\sqrt{\frac{2(\delta-M) t}{M^2-\delta}},\sqrt{\frac{2(1+\kappa)M(M-1)x}{\eta\varrho(M^2-\delta)}}\Big)\bigg]^{M-1}I_0\bigg(\frac{2}{M^2-\delta}\sqrt{\frac{(\delta-M)M(M-1) t (1+\kappa)x}{\eta\varrho}}\bigg)\mathrm{d}t. \label{fz}
		\end{align}
		\hrule
	\end{figure*}	

    Similar to our approach in the previous subsection, herein we validate \eqref{fz} by comparing it with the distribution of the harvested energy using Monte Carlo simulation in scenarios with exponential correlated fading. Results in Fig.~\ref{Fig3} evidence the same behavior as discussed in Fig.~\ref{Fig2} and corroborate the accuracy of \eqref{fz}, which is only affected when a very small $\kappa$, and relatively large $M$ and $\tau$, coincide together.
	
    For two specific setups it is even possible to get simplified expressions as follows
	\begin{itemize}
		\item For uncorrelated channels, for which $\delta=M$, the channel coefficients are i.i.d, thus,
	\begin{align}	
	F_{\xi^0_{\mathrm{OA-CSI}}}(\eta\varrho x)&=\mathbb{P}\big[\xi^0_{\mathrm{OA-CSI}}<\eta\varrho x\big]\nonumber\\
	&=\mathbb{P}\big[\max_{i=1,...,M}|h_{i}|^2<x\big]\nonumber\\
	&\stackrel{(a)}{=}\mathbb{P}\big[|h_{i}|^2<x\big]^M\nonumber\\
	&\stackrel{(b)}{=}\mathbb{P}\big[z<2(1+\kappa)x\big]^M\nonumber\\
	&=F_Z\big(2(1+\kappa)x\big)^M,
	\end{align}
	where in $(a)$ we take advantage of the i.i.d property of the channel realizations and $(b)$  comes from making $z=2(1+\kappa)|h_{i}|^2$ where $Z\sim \chi^2(2,2\kappa)$. Now,
	\begin{align}\label{fe}
	&f_{\xi^0_{\mathrm{OA-CSI}}}(x)=\frac{d}{d x}F_{\xi^0_{\mathrm{OA-CSI}}}(x)=\frac{d}{dx}F_Z\Big(\frac{2(1+\kappa)x}{\eta\varrho}\Big)^M\nonumber\\
	&\ \ \stackrel{(a)}{=}\frac{2(1\!+\!\kappa)M}{\eta\varrho}F_Z\!\Big(\!\frac{2(1\!+\!\kappa)}{\eta\varrho}x\!\Big)^{M\!-\!1}\!\!\!f_Z\Big(\!\frac{2(1\!+\!\kappa)}{\eta\varrho}x\!\Big),
	\end{align}
	where $(a)$ comes from using the chain rule. Notice that $f_Z(x)$ and $F_Z(x)$ are given respectively in \eqref{fZ} and \eqref{FZ} with $\varphi=2$ and $\psi=2\kappa$.
	
	\item For fully correlated channels, for which $\delta=M^2$, the performance under the $\mathrm{OA-CSI}$ scheme matches that of the $\mathrm{OA}$ strategy since the fading behaves instantaneously equal over all the antennas.
	
	Finally, notice that when correlation is the minimum, $\delta=0$, the performance gain of this scheme over that of the $\mathrm{OA}$ strategy should be the maximum.
\end{itemize}
\subsubsection{$\mathrm{AA-CSI}$}
 For the $\mathrm{AA-CSI}$ scheme the characterization is much easier since $\xi^0_{\mathrm{AA-CSI}}=M\xi^0_{\mathrm{SA}}$ according to \eqref{ME}, thus, by using \eqref{pdfc_app} we attain
	\begin{align}\label{aaacsir}
	&\xi^0_{\mathrm{AA-CSI}}\nonumber\\
	&\sim\frac{\eta\varrho}{2M(1\!+\!\kappa)}\bigg[\frac{M^2\!-\!\delta}{M\!-\!1}\!\chi^2\!\Big(\!2(M\!-\!1)\!,0\!\Big)\!+\!\delta\chi^2\!\Big(\!2,\!\frac{2\kappa M^2}{\delta}\Big)\!\bigg].
	\end{align}
	Notice that the analysis in Remark~\ref{re2} can be extended to this case straightforwardly.
	
\subsubsection{Comparisons and Remarks}
Since the distribution of the harvested energy is related in all the cases with the non-central chi-squared distribution but for the $\mathrm{OA\!-\!CSI}$ scheme, we are able to find the mean and variance statistics according to \eqref{meanvariance}. Under uncorrelated channels, $\delta=M$, we can also provide an upper bound for the mean of $\xi^0_{1_\mathrm{OA-CSI}}$ by using \cite[The. 2.1]{Aven.1985} such that
	\begin{align}
	&\frac{\mathbb{E}\big[\xi^0_{\mathrm{OA-CSI}}\big]}{\eta\varrho}\le\frac{1}{M}\sum_{i=1}^{M}\mathbb{E}[|h_{i}|^2]+\sqrt{\frac{M-1}{M}}\times\nonumber\\
	&\ \ \ \times\sqrt{\sum_{i=1}^{M}\!\mathrm{VAR}\big[|h_{i}|^2\big]\!+\!\sum_{i=1}^{M}\!\mathbb{E}\big[|h_{i}|^2\big]^2\!-\!\frac{1}{M}\Big(\!\sum_{i=1}^{M}\!\mathbb{E}\big[|h_{i}|^2\big]\Big)^2}\nonumber\\
	&\mathbb{E}\big[\xi^0_{\mathrm{OA-CSI}}\big]\stackrel{(a)}{\le}\eta\varrho\Big(\mathbb{E}[|h_{i}|^2]+\sqrt{(M-1)\mathrm{VAR}[|h_{i}|^2]}\Big)\nonumber\\
	&\qquad\qquad\ \ \ \stackrel{(b)}{=}\eta\varrho\Big(1+\frac{\sqrt{(1+2\kappa)(M-1)}}{1+\kappa}\Big).\label{upp}
	\end{align}
	where $(a)$ comes from using $\sum_{i=1}^{M}\!\mathbb{E}\big[\!|h_{i}|^2\big]\!=\!M\mathbb{E}\big[\!|h_{i}|^2\big]$ and $\sum_{i=1}^{M}\!\mathrm{VAR}\big[\!|h_{i}|^2\big]\!=\!M\mathrm{VAR}\big[\!|h_{i}|^2\big]$, while $(b)$ follows after using \eqref{meanvariance}. Notice that for $M=1$ the equality is satisfied as expected. Regarding the variance, it is shown in \cite[The. 3.1]{Papadatos.1995} that the best  upper bound for $\mathrm{VAR}\big[\max\limits_{i=1,...,M}|h_{i}|^2\big]$ is $M\mathrm{VAR}\big[|h_{1}|^2\big]$. However, that result is not tight for the fading distribution of our interest here, and we dispense with that result.

In Table~\ref{table} we summarize the statistics of the energy harvested under the operation of each of the WET schemes. We also include the CDF expressions, which can be easily obtained by using $F_Z(z)$ in \eqref{FZ} in most cases. When it was impossible a full characterization for all the values of $\delta$, we provide the results for the case of i.i.d channel realizations where $\delta=M$.
 Some remarks follow next.
 \begin{remark}\label{re5}
 	 For the $\mathrm{AA}$ scheme,	both the mean and variance of the harvested energy, increase with the correlation parameter $\delta$. Thus, the greatest gain in the mean harvested energy occurs when channels are fully positive correlated, for which $\delta=M^2$. 
 		In fact, when channels are fully positive correlated $S$ harvests $M$ times more energy under the $\mathrm{AA}$ scheme than when using the $\mathrm{OA}$ strategy, but with a dispersion $M^2$ times greater. 
 		This can be  easily corroborated also from the very definition of $\xi^0_{\mathrm{AA}}$ since $\xi^0_{\mathrm{AA}}\!=\!\frac{\eta\varrho}{M}\Big|\sum_{i=1}^{M}h_{i}\Big|^2\!\stackrel{(a)}{=}\!\frac{\eta\varrho}{M}\Big|Mh_{i}\Big|^2\!=\!\eta\varrho M\big|h_{i}\big|^2\!=\!\eta\varrho M\xi^0_{\mathrm{OA}}$, where $(a)$ comes from the fact that for $\delta=M^2$ all the channel coefficients are the same, e.g., $h_{i}=h_{n},\ \forall i,n=1,...,M$.
 		 	
 		For the particular value $\delta=0$, the dispersion of the harvested energy becomes $0$, thus in such situations $T$ provides a stable, non-random, energy supply to $S$, such that $\xi^0_{\mathrm{AA}}=\frac{\eta\varrho M\kappa}{1+\kappa}$. 
 		This setup guarantees the exact prediction of the harvested energy, which  grows linearly with $M$ in LOS channels. However, under Rayleigh conditions the harvested energy becomes $0$, which is because of cancellation of the multiple path signals at the sensor.
 \end{remark}
\begin{remark}\label{re3}
		The average energy harvested under the $\mathrm{SA}$ and $\mathrm{AA-CSI}$ schemes is constant, while the variance is a convex function of the spatial correlation parameter $\delta$, which can be easily corroborated by checking that the second derivative of $\mathrm{VAR}\big[\xi^0_{\mathrm{SA}}\big]$ and $\mathrm{VAR}\big[\xi^0_{\mathrm{AA-CSI}}\big]$ with respect to $\delta$ is non negative.
		The minimum variance for a given LOS parameter occurs when $\delta=M\big(1-\min\big(\kappa(M-1),1\big)\big)$, which comes from setting the first derivative of $\mathrm{VAR}\big[\xi^0_{\mathrm{SA}}\big]$ or $\mathrm{VAR}\big[\xi^0_{\mathrm{AA-CSI}}\big]$ to $0$ and solving for $\delta$, while using also the restriction $\delta\ge 0$. Therefore, for increasing/decreasing $\delta$ above/below $M\big(1-\min\big(\kappa(M-1),1\big)\big)$, the variance increases. 
				
		Additionally, notice that the variance of the harvested energy under the $\mathrm{SA}$ scheme is
		\begin{itemize}
			\item $\frac{\eta^2\varrho^2(1+2\kappa)}{M(1+\kappa)^2}$ for $\delta=M$, which is inversely proportional to the number of antennas;
			\item $\frac{\eta^2\varrho^2(1+2\kappa)}{(1+\kappa)^2}$ for $\delta=M^2$, which is independent of the number of antennas;
		\end{itemize}	
	\end{remark}
	\begin{table*}[!t]
		\centering
		\caption{Statistics of the Harvested Energy for $|\mathcal{S}|=1$ and under the Ideal Linear EH Model}
		 \label{table}
		\begin{tabular}{P{1.2cm}|P{3cm}|P{3.8cm}|P{7.6cm}}			
			\toprule
			\normalsize{Scheme} & \normalsize{$\mathbb{E}\big[\ \!\xi^0\!\ \big]/(\eta\varrho)$} & \normalsize{$\mathrm{VAR}\big[\ \!\xi^0\!\ \big]/(\eta^2\varrho^2)$} & \normalsize{$F_{\xi^0}(\eta\varrho z)$} \\ 
			\bottomrule
			$\mathrm{OA}$  & $1$ & $ \frac{1+2\kappa}{(1+\kappa)^2}$ & $1-\mathcal{Q}_1\big(\sqrt{2\kappa},\sqrt{2(1+\kappa)z}\big)$\\[8pt] 
			$\mathrm{AA}$  & $\frac{\delta+\kappa M^2}{M(1+\kappa)}$ & $\frac{\delta(\delta+2\kappa M^2)}{M^2(1+\kappa)^2}$ & $1-\mathcal{Q}_1\Big(M\sqrt{\frac{2\kappa}{\delta}},\sqrt{\frac{2M(1+\kappa)z}{\delta}}\Big)$\\[8pt]
			$\mathrm{SA}$  & $1$ & $\frac{(M+2\kappa\delta)M^2-2\delta M(1+\kappa)+\delta^2}{M^3(M-1)(1+\kappa)^2}$ & $\delta=M\Longrightarrow$ $1-\mathcal{Q}_{M}\big(\sqrt{2M\kappa},\sqrt{2M(1+\kappa)z}\big)$\\ \hline
			$\mathrm{OA\!-\!CSI}$ &
			$\int\limits_{0}^{\infty}\frac{x}{\eta\varrho}f_{\xi^0_{\mathrm{OA-CSI}}}(x)\mathrm{d}x$  & $\frac{1}{\eta^2\varrho^2}\bigg(\int\limits_{0}^{\infty}\!x^2f_{\xi^0_{\mathrm{OA\!-\!CSI}}}\!\!(x)\mathrm{d}x-$\qquad & $e^{\!-\frac{\kappa M(M\!-\!1)}{\delta-M}}\!\int\limits_{0}^{\infty}\bigg[1\!-\!\mathcal{Q}\Big(\sqrt{\frac{2(\delta\!-\!M) t}{M^2\!-\!\delta}},\sqrt{\frac{2(1\!+\!\kappa)M(M\!-\!1)z}{M^2-\delta}}\Big)\bigg]^M\!\!\!e^{\!-t}\times$\\
			& $\!\stackrel{\delta=M}{\le}\! 1\!+\!\frac{\sqrt{(1\!+\!2\kappa)(M\!-\!1)}}{1+\kappa}$ & \qquad\qquad\ \  $\mathbb{E}\big[\xi^0_{\mathrm{OA-CSI}}\big]^2\bigg)$ &  $\times I_0\Big(2\sqrt{\frac{\kappa M(M\!-\!1) t}{\delta-M}}\Big)\mathrm{d}t\!\stackrel{\delta=M}{=}\!\Big[1\!-\!\mathcal{Q}_1\big(\sqrt{2\kappa},\sqrt{2(1\!+\!\kappa)z}\big)\Big]^M$\\ \cline{2-4}
			$\mathrm{AA\!-\!CSI}$ & $M$& $\frac{(M+2\kappa\delta)M^2-2\delta M(1+\kappa)+\delta^2}{M(M-1)(1+\kappa)^2}$ & $\delta=M\Longrightarrow$ $1-\mathcal{Q}_M\big(\sqrt{2M\kappa},\sqrt{2(1+\kappa)z}\big)$\\
			\bottomrule
		\end{tabular}
	\end{table*}
\begin{remark}
Notice that $\mathbb{E}[\xi^0_{\mathrm{OA}}]\!\le\! \mathbb{E}[\xi^0_{\mathrm{OA-CSI}}]\!\le\! \eta\varrho M	\!=\!\mathbb{E}\big[\xi^0_{\mathrm{AA-CSI}}\big]\!=\!M\mathbb{E}\big[\xi^0_{\mathrm{SA}}\big]$, since for $M\ge 1$ and $\kappa\ge 0$ we have that
\begin{align}
(1+\kappa)^2&\ge 1+2\kappa\nonumber\\
\frac{1+2\kappa}{(1+\kappa)^2}&\le 1\nonumber\\
\frac{1+2\kappa}{(1+\kappa)^2}(M-1)&\le M-1\nonumber\\
\frac{\sqrt{(1+2\kappa)(M-1)}}{1+\kappa}&\le \sqrt{M-1}\le M-1\nonumber\\
\frac{\mathbb{E}\big[\xi^0_{\mathrm{OA-CSI}}\big]}{\eta\varrho}\!=\!1\!+\!\frac{\sqrt{(1+2\kappa)(M-1)}}{1+\kappa}&\le M,\nonumber
\end{align}
which agrees with \eqref{eq1} of Theorem~\ref{the1}. We have ignored the impact of $\delta$ in the exact expression of $\mathbb{E}\big[\xi^0_{\mathrm{OA-CSI}}\big]$ since there is no closed form, instead we used the upper bound given in \eqref{upp} for the case of $\delta=M$. Finally,  $\frac{\mathbb{E}\big[\xi^0_{\mathrm{AA}}\!\big]}{\eta\varrho}\!=\!\frac{\delta+\kappa M^2}{M(1+\kappa)}\!\le\! M\!=\!\frac{\mathbb{E}\big[\xi^0_{\mathrm{AA-CSI}}\!\big]}{\eta\varrho}\!=\!M\frac{\mathbb{E}\big[\xi^0_{\mathrm{SA}}\!\big]}{\eta\varrho}$ holds since $M\ge 1$, and agrees with \eqref{eq2}.
\end{remark}
\begin{remark}\label{me-var}
	For $\kappa\ge 0$, the average harvested energy under the operation of the $\mathrm{AA}$ scheme is an increasing function of $\kappa$ since $\frac{d}{d\kappa}\mathbb{E}\big[\xi^0_{\mathrm{AA}}\big]> 0$. For $\mathrm{OA}$, $\mathrm{SA}$ and $\mathrm{AA-CSI}$ schemes, the average harvested energy does not depend on the LOS parameter. 	
	Meanwhile, the variance of the harvested energy decreases with $\kappa$ when $\mathrm{OA}$ is used, while it has a maximum on	
	$\kappa_1^*=1-\frac{\delta}{M^2}$, 
	$\kappa_2^*=\frac{(\delta-M)(M^2-\delta)}{\delta M(M-1)}$,  
	 when $\mathrm{AA}$ and, $\mathrm{SA}$ or $\mathrm{AA-CSI}$, are utilized, respectively. This is because
	 \begin{align}
	 \frac{d}{d\kappa}\mathrm{VAR}\big[\xi^0_{\mathrm{OA-CSI}}\big]&<0,\nonumber\\
	 \frac{d}{d\kappa}\mathrm{VAR}\big[\xi^0_{\mathrm{AA}}\big]&\stackequal{\kappa<\kappa_1^*}{\kappa>\kappa_1^*}0,\nonumber\\
	 \frac{d}{d\kappa}\mathrm{VAR}\big[\xi^0_{\mathrm{SA}}\big]&\stackequal{\kappa<\kappa_2^*}{\kappa>\kappa_2^*}0,\nonumber\\
	 \frac{d}{d\kappa}\mathrm{VAR}\big[\xi^0_{\mathrm{AA-CSI}}\big]&\stackequal{\kappa<\kappa_2^*}{\kappa>\kappa_2^*}0.\nonumber
	 \end{align}
\end{remark}
\begin{remark}
	The average harvested energy under the CSI-based WET schemes overcomes the CSI-free counterparts. But, interestingly, for i.i.d channels and $M\ge 2$, $T$ manages to harvest more energy on average under  the $\mathrm{AA}$ scheme than when using the $\mathrm{OA-CSI}$ scheme, even without considering the energy cost associated to the CSI acquisition, when
	$M> \frac{1+2\kappa}{\kappa^2}$. 
	This is because under such condition and $\delta=M$, $\frac{\mathbb{E}\big[\xi^0_{\mathrm{AA}}\big]}{\eta\varrho}=\frac{1+M\kappa}{1+\kappa}> 1+\frac{\sqrt{(1+2\kappa)(M-1)}}{1+\kappa}\ge \frac{\mathbb{E}\big[\xi^0_{\mathrm{OA-CSI}}\big]}{\eta\varrho}$ holds. Previous condition can also be written as 
	$\kappa>\frac{1+\sqrt{M+1}}{M}$.
\end{remark}
In many cases not only a high average of harvested energy is desirable but also a low variance. Let us assume for instance that the sensor requires a reliable energy supply each time that it harvests energy from $T$ transmissions, and that exceeds the value $\xi_{\mathrm{th}}$. The value of $\xi_{\mathrm{th}}$ may be seen as the minimum amount of energy for which $S$ can operate, which in general satisfies $\xi_{\mathrm{th}}\ge \eta\varpi_1$. In such cases a question arises: \textit{What is the more suitable WET strategy?} The probability of energy outage is the performance metric to be evaluated in this case, and it is defined as $\mathbb{P}[\xi^0<\eta\xi_{\mathrm{th}}]=F_{\xi^0}(\eta\xi_{\mathrm{th}})$.	The more reliable scheme would have the lowest probability of energy outage.
	\begin{remark}\label{re8}
	The $\mathrm{SA}$ strategy provides WET rounds with the lowest variance in the harvested energy, thus, it is more predictable. Additionally, it is the only strategy for which the variance of the harvested energy over multiple rounds decreases by increasing the number of antennas $M$ for $\delta<M^2$. Therefore, this scheme can provide a deterministic (non random) source of energy when $M\rightarrow\infty$, e.g., $\xi^0_{\mathrm{SA}}=\eta\varrho$.
\end{remark}
\begin{remark}\label{re9}
	For WET strategies without CSI and under i.i.d Rician channels, notice that
	\begin{itemize}
		\item  when $\kappa\rightarrow 0$, e.g., Rayleigh fading, the energy harvested under the $\mathrm{OA}$ and $\mathrm{AA}$ schemes follows  the same distribution, hence no diversity (no gain) is attained from using all antennas at once. For that scenario, the $\mathrm{SA}$ scheme performs best, since although with the same average harvested energy as the others, its variance is the lowest and decreases with $M$;
		\item when $\kappa>0$, the average harvested energy under the $\mathrm{AA}$ scheme is the greatest among all WET schemes, and increases linearly with $M$. However, it is also more dispersed because of its higher variance that also increases linearly with $M$.
	\end{itemize}
\end{remark}
 \begin{figure}[t!]
	\centering
	\subfigure{\includegraphics[width=0.95\columnwidth]{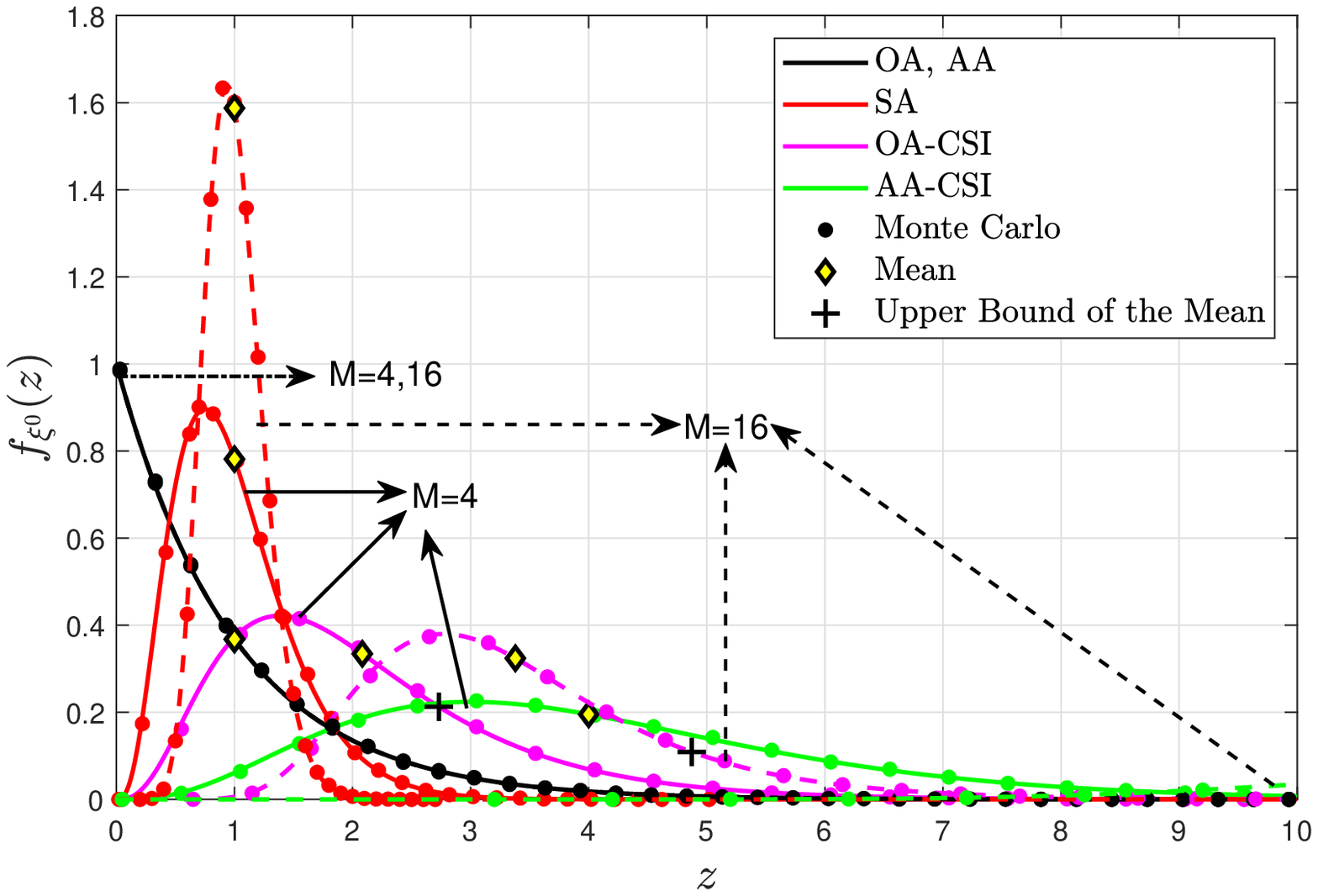}}\\
	\vspace{1mm}
	\subfigure{\includegraphics[width=0.95\columnwidth]{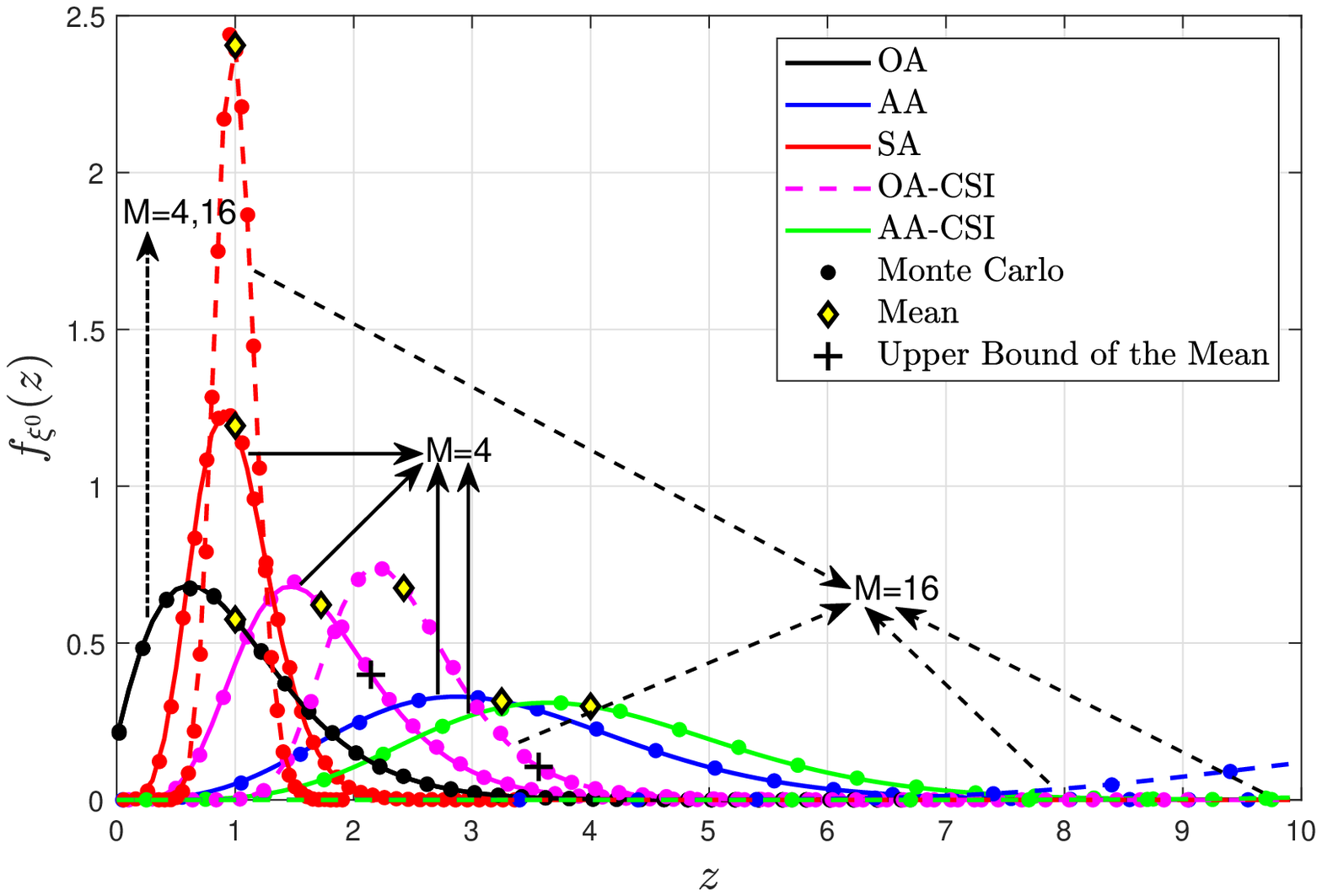}}
	\caption{PDF of $\xi^0$ for $\eta\varrho=1$, $\mathbf{R}=\mathbf{I}$ and $M\in\{4,16\}$, $(a)$ $\kappa=0$ (top), and $(b)$ $\kappa=3$ (bottom).}
	\label{Fig4}
\end{figure}
\begin{figure*}[t!]
	\centering
	\subfigure{\includegraphics[width=0.95\columnwidth]{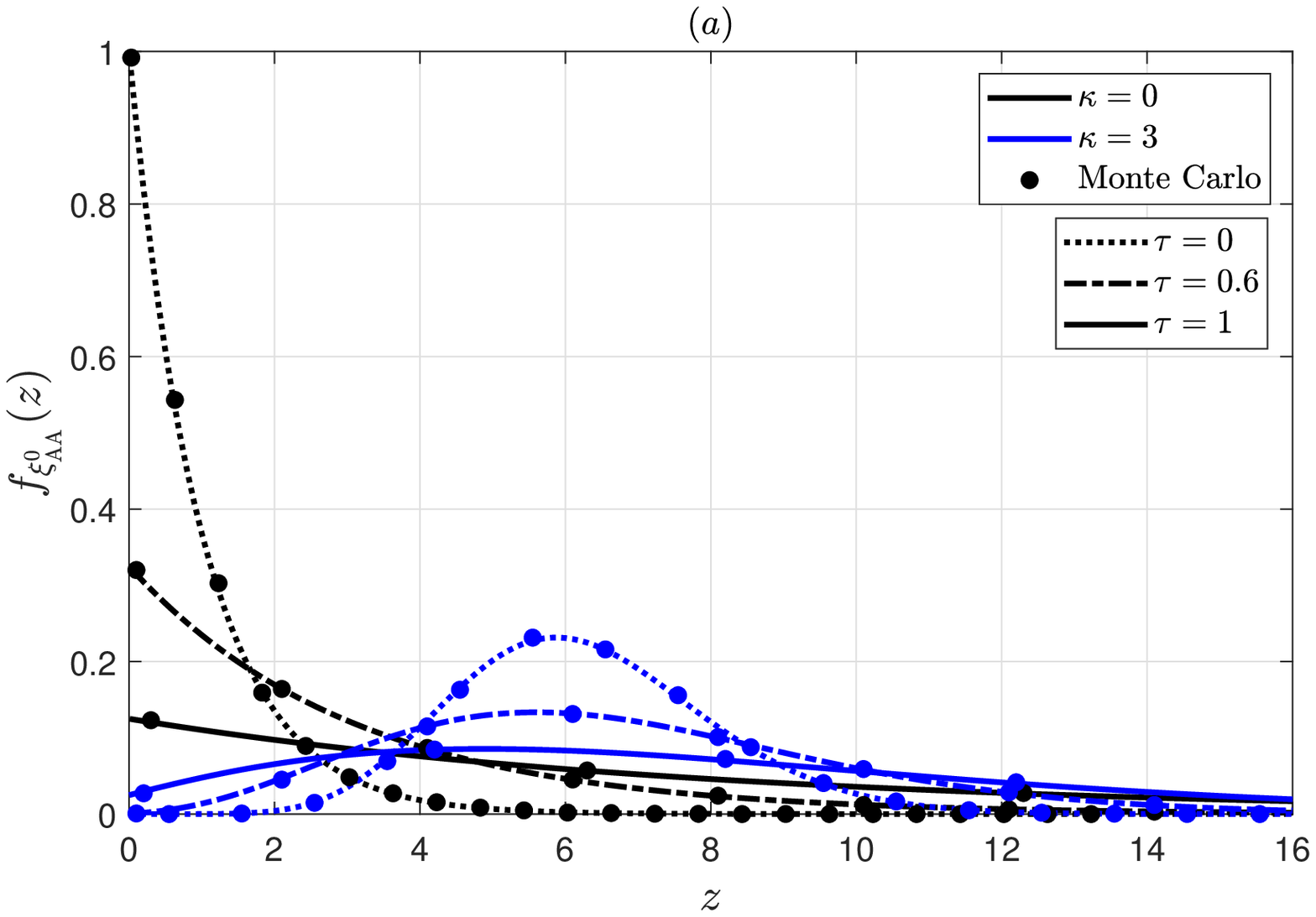}}\ \ \  \subfigure{\includegraphics[width=0.95\columnwidth]{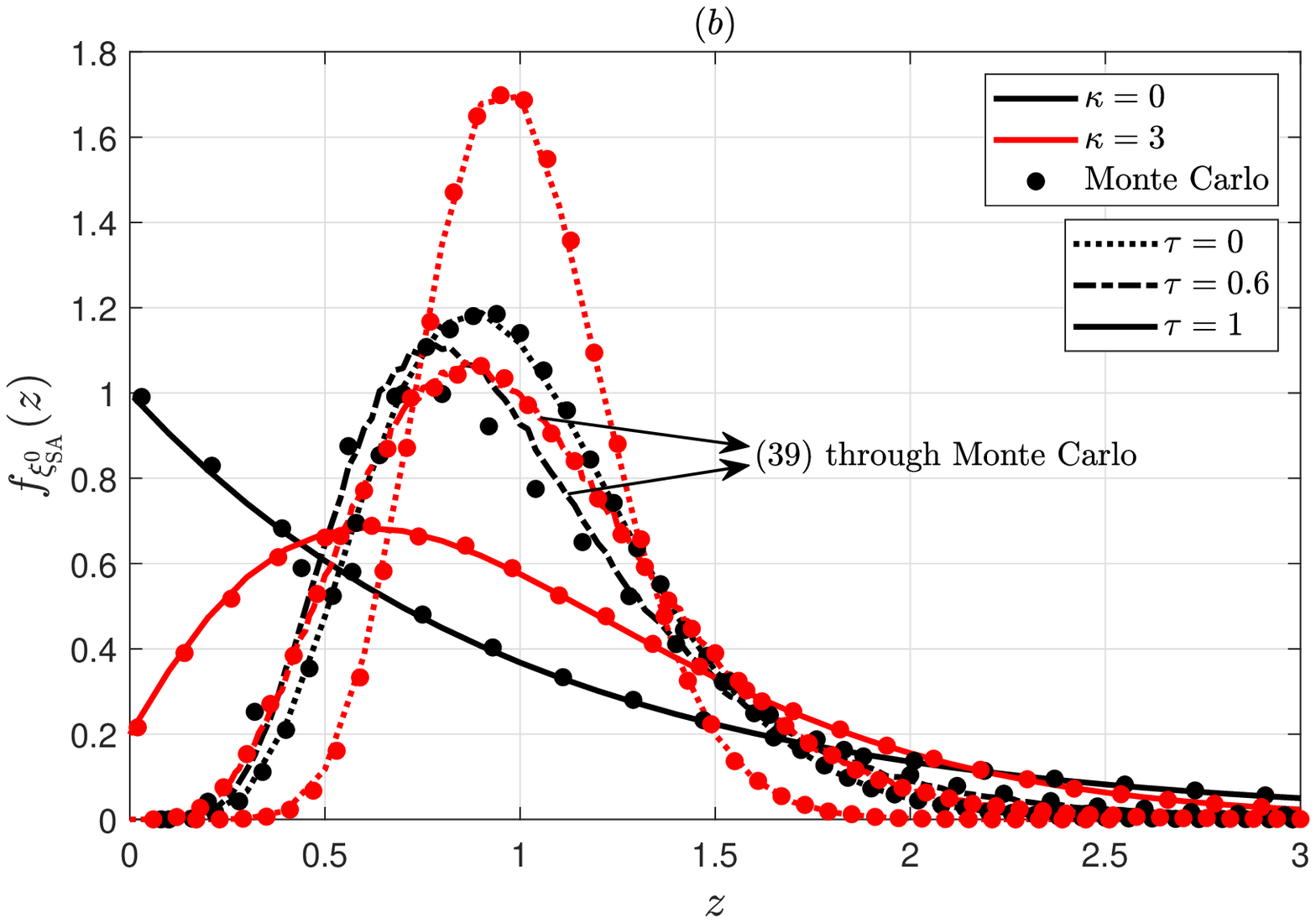}}\\
	\subfigure{\includegraphics[width=0.95\columnwidth]{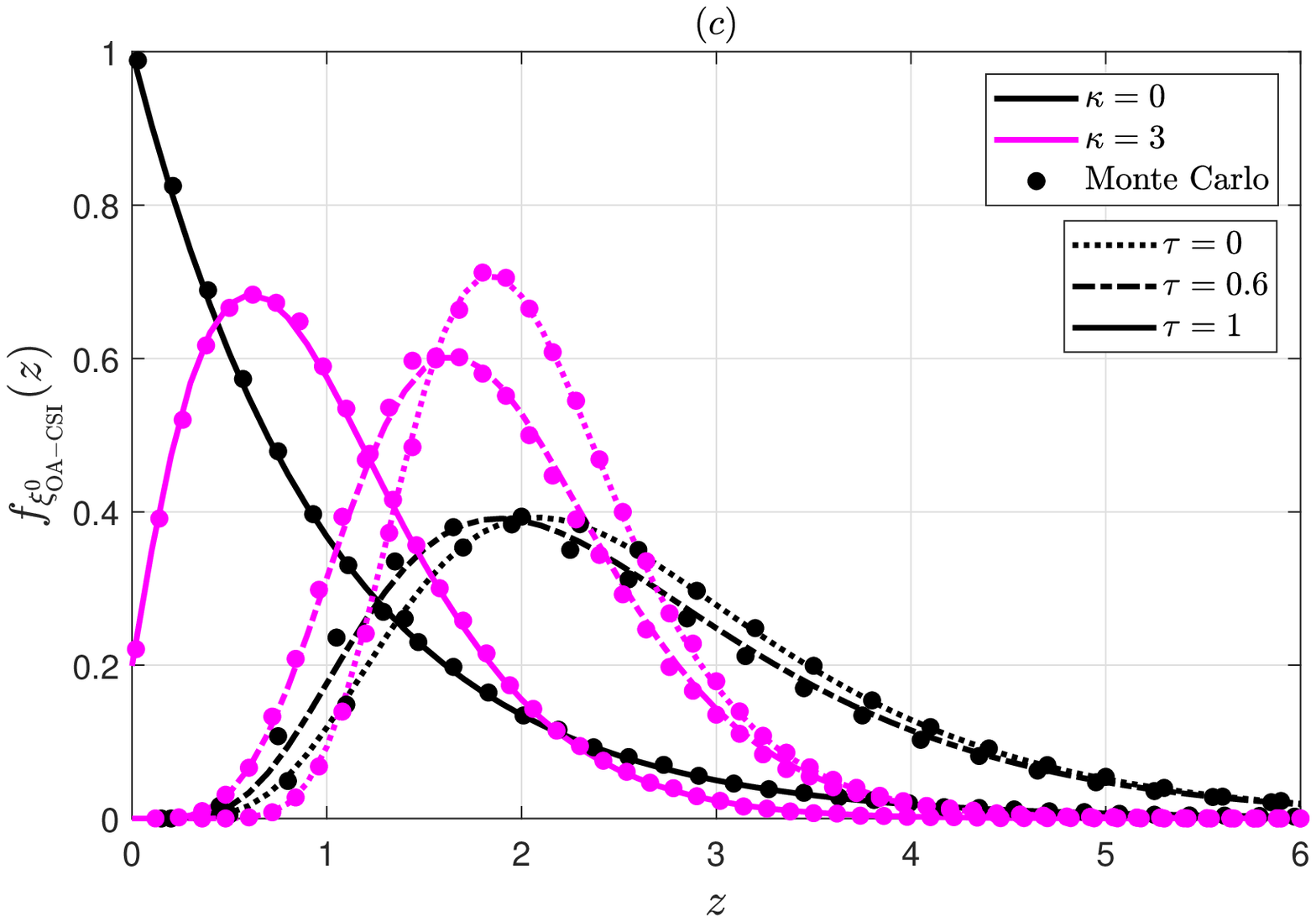}}\ \ \  \subfigure{\includegraphics[width=0.95\columnwidth]{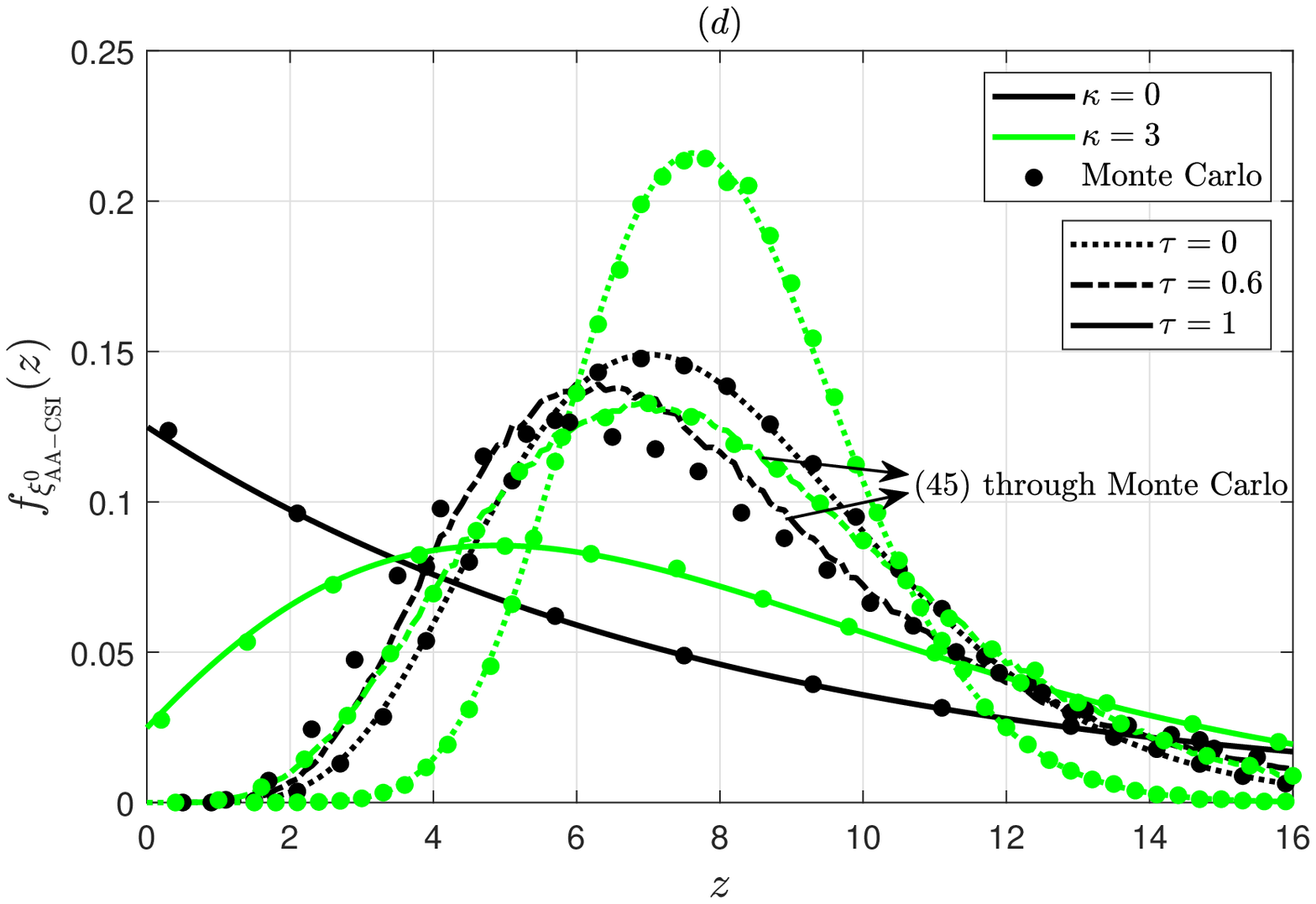}}\\
	\caption{PDF of $\xi^0$ for $\eta\varrho=1$, $M=8$, $\kappa\in\{0,3\}$ and exponential correlation with $\tau\in\{0,0.6,1\}$. $(a)\ \mathrm{AA}$, $(b)\ \mathrm{SA}$, $(c)\ \mathrm{OA-CSI}$ and $(d)\ \mathrm{AA-CSI}$.}
	\label{Fig5}	 	
\end{figure*}
Figs.~\ref{Fig4} and~\ref{Fig5} show the PDF of the harvested energy normalized by $\eta\varrho$, which  allows validating the analytical results by Monte Carlo simulations. Specifically, for Fig.~\ref{Fig4} we consider uncorrelated channels while setting $\kappa\in\{0,3\}$ and $M\in\{4,16\}$, and notice that the upper bound for $\mathbb{E}[\xi^0_{\mathrm{OA-CSI}}]$ provided in \eqref{upp} is also validated, and it is shown to be less tight when $M$ increases.  As expected, the CSI-based schemes outperform their counterparts that do not take advantage of any information. When there is NLOS ($\kappa=0$), it is shown that the $\mathrm{AA}$ and $\mathrm{OA}$ schemes perform the same and without any gain from increasing the number of antennas, thus, in such cases the $\mathrm{SA}$ scheme is the best alternative if no CSI information is available. Notice that by increasing $M$, the PDF of $\xi^0_{\mathrm{SA}}$ gets narrower around its mean, thus, providing a more predictable energy supply. This situation holds for $\kappa>0$, and among the CSI-free schemes the average harvested energy under the $\mathrm{AA}$ scheme is the greatest and increases with $M$, which is a very attractive characteristic.  Meanwhile, in Fig.~\ref{Fig5} the PDF of the harvested energy is shown when using each of the WET schemes\footnote{Since analytical approximations for the distribution of the energy  harvested under the $\mathrm{SA}$ and $\mathrm{AA-CSI}$ schemes were only obtained for $\delta\in\{M,M^2\}\rightarrow\tau\in\{0,1\}$, the expressions were validated through Monte Carlo for the case of $\tau=0.6$.} with $M=8$, $\kappa\in\{0,3\}$ and exponential correlation with $\tau\in\{0,0.6,1\}\rightarrow\delta\in\{8,24.626,64\}$. Notice that correlation impacts negatively on the performance of both CSI-based schemes. Indeed, the greater the correlation the lesser the gains of operating with CSI, and the performance under each CSI-based scheme matches that of its CSI-free counterpart for $\tau=1$. Additionally, comments in Remarks~\ref{re2_5},~\ref{re5} and~\ref{re3} are corroborated, while results also validate the provided analytical  approximations and complement our analysis around Fig.~\ref{Fig2} and \ref{Fig3}.

Fig.~\ref{Fig6}a shows the energy outage performance for a setup with $M=8$ and $\kappa=1$. Notice that for relatively small $\xi_\mathrm{th}$ the schemes can be ordered according to a performance decreasing order as $\mathrm{AA-CSI}>\mathrm{OA-CSI}>\mathrm{SA}>\mathrm{AA}>\mathrm{OA}$. Thus, the $\mathrm{SA}$ scheme performs the best among the CSI-free strategies, while for relatively large $\xi_{\mathrm{th}}$ the $\mathrm{AA}$ scheme is superior, and it could even reach a performance greater than that achieved by $\mathrm{OA-CSI}$. The very idealistic $\mathrm{AA-CSI}$ scheme is the clear winner in this regard also.
A more detailed analysis of the energy outage is presented in Fig.~\ref{Fig6}b, where the regions where each CSI-free WET strategy performs the best are shown for a wide range of values of $\kappa$, exponential correlation parameter $\tau$ and $M\in \{4, 16\}$. The $\mathrm{OA}$ scheme does not appear since it is always outperformed by either $\mathrm{SA}$ or $\mathrm{AA}$ strategy. Notice that $\mathrm{SA}$ benefits less from a greater LOS parameter and/or a greater energy threshold and/or greater spatial correlation. Finally, the number of antennas does not impact strongly on delimiting the regions for which one scheme outperforms the other in terms of the energy outage probability.
\begin{figure}[t!]
	\centering
	\subfigure{\includegraphics[width=0.95\columnwidth]{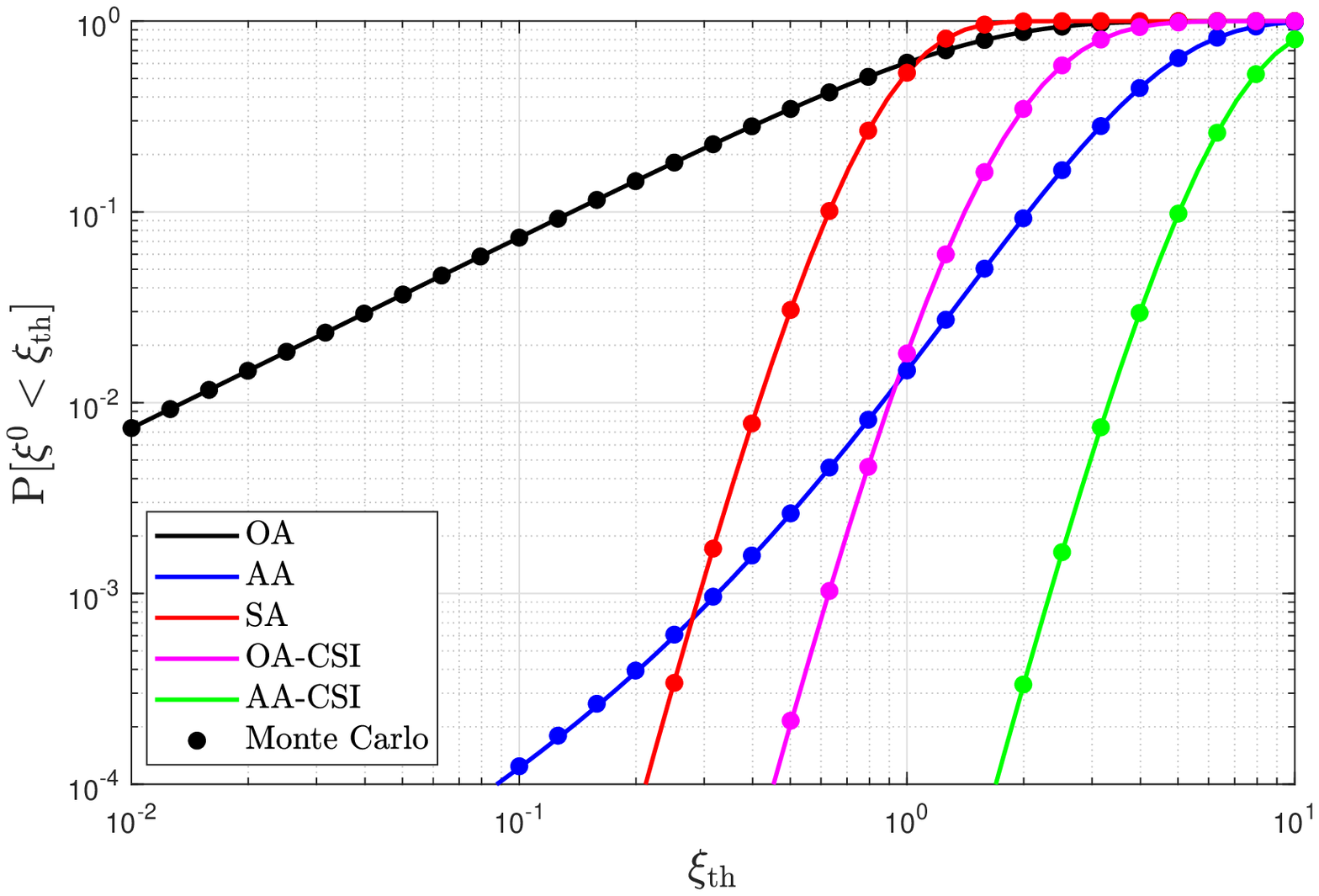}}\\
	\vspace{1mm}
	\subfigure{\includegraphics[width=0.95\columnwidth]{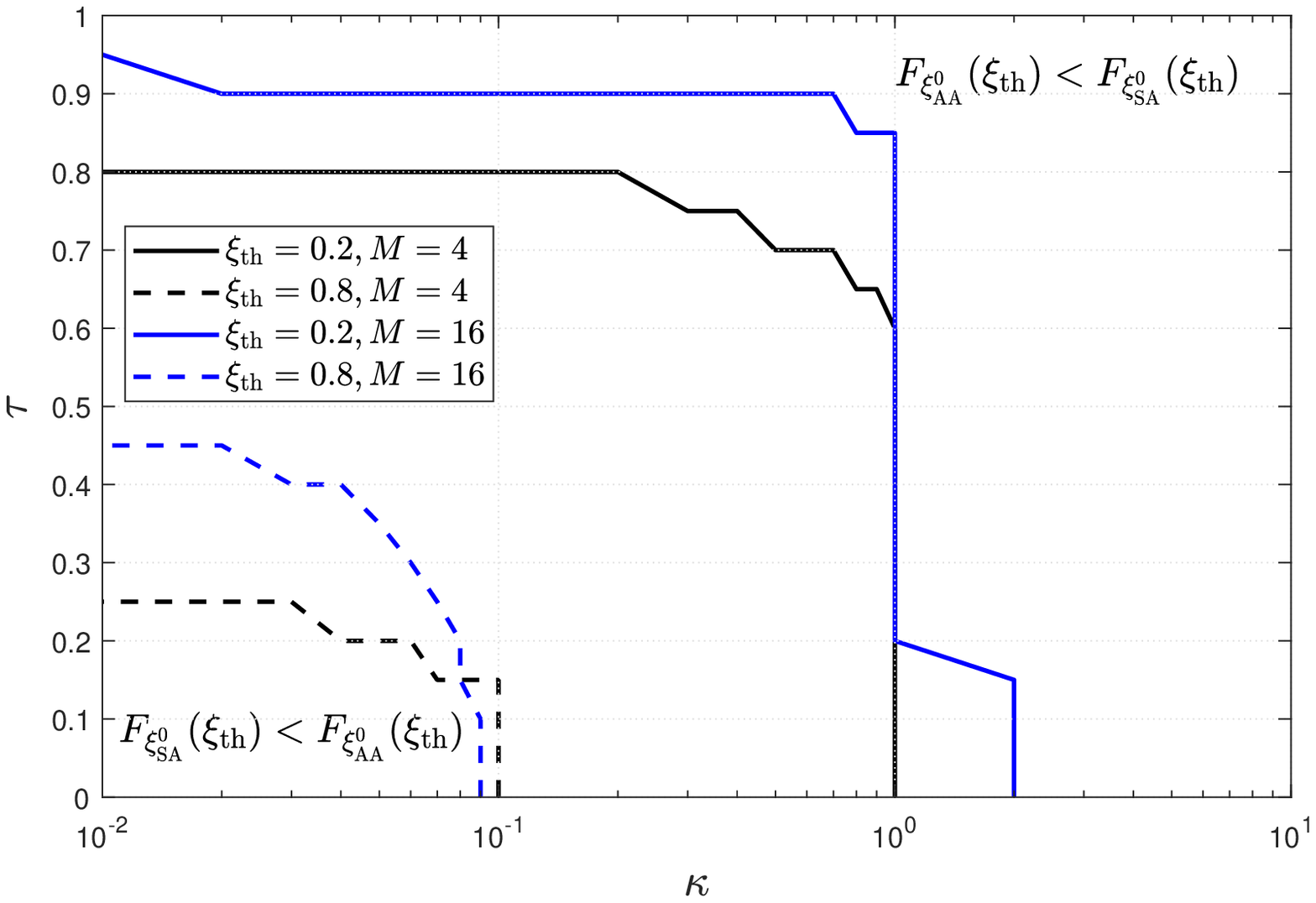}}
	\caption{$(a)$ Probability of energy outage as a function of the minimum amount of energy required for the sensor operation and $\mathbf{R}=\mathbf{I}$, $M=8$ and $\kappa=1$ (top). $(b)$ Region for which $F_{\xi_{\mathrm{SA}}}(z)\!\gtrless\! F_{\xi_{\mathrm{AA}}}(z)$ (bottom). We set $\eta=\varrho=1$.}
	\label{Fig6}
\end{figure}
\subsection{Statistics of the Harvested Energy under Sensitivity and Saturation Phenomena}\label{non}
 The statistics of the harvested energy under the ideal EH linear model, for which the harvested energy is always proportional to the incident RF power with factor $\eta$, was investigated in the previous subsection.  In practice, the sensitivity and saturation phenomena limit also the EH performance. Incident signal powers below the sensitivity level cannot be harvested, while power levels in the saturation region cannot be fully exploited. Therefore, systems with high variance in the RF incident power, e.g., all the CSI-based and the AA scheme, are more sensitive to these phenomena. In this subsection we depart from ideal results in Subsection~\ref{linear} to attain the distribution and main statistics of the harvested energy in more practical setups. 
 \begin{theorem}\label{the2}
 	The energy harvested under each of the schemes when $|\mathcal{S}|=1$ is distributed as
 	\begin{align}\label{cdf}
 	F_{\xi}(\xi)&\cong  \left\{ \begin{array}{ll}
 	F_{\xi^0}(\eta\varpi_1), &  0\le \xi<\eta\varpi_1 \\
 	F_{\xi^0}(\xi), &  \eta\varpi_1\le\xi<\eta\varpi_2 \\
 	1, &  \xi\ge\eta\varpi_2 
 	\end{array}
 	\right.,
 	\end{align}
 	which is only an approximation when using the $\mathrm{SA}$ scheme. The PDF of $\xi$ can be easily derived from \eqref{cdf}.
 \end{theorem}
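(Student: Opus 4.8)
The plan is to establish \eqref{cdf} as an \emph{exact} identity for the $\mathrm{OA}$, $\mathrm{AA}$, $\mathrm{OA\!-\!CSI}$ and $\mathrm{AA\!-\!CSI}$ schemes, and then to explain why the $\mathrm{SA}$ scheme only admits it as an approximation. For the four former schemes, when $|\mathcal{S}|=1$ the harvested energy reduces to $\xi=g(\xi^{\mathrm{rf}})$ for a \emph{single} scalar RV $\xi^{\mathrm{rf}}$, whose $\eta$-scaled version is exactly the ideal-model energy, $\xi^0=\eta\,\xi^{\mathrm{rf}}$, i.e. $\xi^{\mathrm{rf}}=\xi^0/\eta$. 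For $\mathrm{OA}$ and $\mathrm{AA}$ this is immediate from \eqref{1a} and \eqref{aao}. For $\mathrm{OA\!-\!CSI}$ and $\mathrm{AA\!-\!CSI}$ it follows from the monotonicity of $g$: since $g$ is non-decreasing, maximizing $g(\varrho|h_i|^2)$ over $i$ in \eqref{1acsi}, respectively $g\big(\varrho\sum_k|\mathbf{h}^T\mathbf{w}_k|^2\big)$ over $\{\mathbf{w}_k\}$ in \eqref{aaocsi}, selects the same antenna / MRT beamformer as in the ideal case, so that $\xi_{\mathrm{OA\!-\!CSI}}=g(\xi^0_{\mathrm{OA\!-\!CSI}}/\eta)$ and $\xi_{\mathrm{AA\!-\!CSI}}=g(\xi^0_{\mathrm{AA\!-\!CSI}}/\eta)$.

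Given $\xi=g(\xi^0/\eta)$, I would simply rewrite the three branches of \eqref{conv} in terms of $\xi^0$: $\xi=0$ on $\{\xi^0<\eta\varpi_1\}$, $\xi=\xi^0$ on $\{\eta\varpi_1\le\xi^0<\eta\varpi_2\}$, and $\xi=\eta\varpi_2$ on $\{\xi^0\ge\eta\varpi_2\}$. Because $\xi^0$ has a continuous distribution in all the non-degenerate cases of interest (chi-squared type, or the $\mathrm{OA\!-\!CSI}$ integral form), the threshold events carry zero probability, hence $\xi$ has a point mass $F_{\xi^0}(\eta\varpi_1)$ at $0$, a point mass $1-F_{\xi^0}(\eta\varpi_2)$ at $\eta\varpi_2$, and on $(0,\eta\varpi_2)$ it coincides with $\xi^0$ restricted to $[\eta\varpi_1,\eta\varpi_2)$. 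Evaluating $F_\xi(\xi)=\mathbb{P}[\xi\le\xi]$ on the three intervals then yields \eqref{cdf} verbatim: only the atom at $0$ contributes for $\xi<\eta\varpi_1$; the atom plus $F_{\xi^0}(\xi)-F_{\xi^0}(\eta\varpi_1)$ gives $F_{\xi^0}(\xi)$ on $[\eta\varpi_1,\eta\varpi_2)$; and $F_\xi\equiv1$ beyond $\eta\varpi_2$ since $\xi\le\eta\varpi_2$ surely. The PDF follows by differentiation, producing the absolutely continuous part $f_{\xi^0}(\xi)$ on $(\eta\varpi_1,\eta\varpi_2)$ plus the two Dirac masses above.

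For the $\mathrm{SA}$ scheme this reasoning fails: from \eqref{oat}, $\xi_{\mathrm{SA}}=\frac{1}{M}\sum_{i=1}^{M}g(\varrho|h_i|^2)$ is an average of $M$ \emph{individually} clipped sub-block energies, not a single clipping of $\xi^0_{\mathrm{SA}}=\frac{\eta\varrho}{M}\sum_i|h_i|^2$. Indeed $\xi_{\mathrm{SA}}=\xi^0_{\mathrm{SA}}$ holds only on the event that every sub-block RF power lies in $[\varpi_1,\varpi_2]$, and otherwise the dead-zone and saturation thresholds act per sub-block, so that e.g. $\{\xi_{\mathrm{SA}}=0\}=\{\max_i\varrho|h_i|^2<\varpi_1\}$ is strictly contained in $\{\xi^0_{\mathrm{SA}}<\eta\varpi_1\}$, and likewise for the saturation atom. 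Approximation \eqref{cdf} amounts to treating the aggregate as though it were clipped once; I would justify it by noting that it becomes exact in the limits $\varpi_1\to0$ and $\varpi_2\to\infty$, and that the neglected boundary configurations (some but not all sub-blocks clipped) carry little mass / little energy whenever the sensitivity and saturation levels are not too restrictive, with the residual error assessed numerically. Turning this into a \emph{quantitative} bound is the main obstacle, which is precisely why the $\mathrm{SA}$ case is stated with $\cong$ and flagged as ``only an approximation'' rather than proved as an equality.
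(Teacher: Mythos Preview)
Your proof is correct and follows essentially the same route as the paper: establish $\xi=g(\xi^0/\eta)$ exactly for $\mathrm{OA}$, $\mathrm{AA}$, $\mathrm{OA\!-\!CSI}$, $\mathrm{AA\!-\!CSI}$ via the monotonicity of $g$, then read off the three-branch CDF from \eqref{conv}, and treat $\mathrm{SA}$ as the approximation $\frac{1}{M}\sum_i g(\varrho|h_i|^2)\approx g\big(\frac{\varrho}{M}\sum_i|h_i|^2\big)$. The only notable difference is in the heuristic for the $\mathrm{SA}$ approximation quality: the paper argues it tightens as the spatial correlation $\rho$ increases (since then all $|h_i|^2$ become nearly equal and per-antenna clipping coincides with aggregate clipping), whereas you argue via the smallness of the boundary-configuration mass when $\varpi_1,\varpi_2$ are not too restrictive; both are informal and the paper, like you, ultimately defers to numerical validation.
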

\begin{proof}
	We have that $\xi^{\mathrm{rf}}=\xi^0/\eta$. Then, for the $\mathrm{OA}$ and $\mathrm{AA}$ schemes, $\xi=g(\xi^{\mathrm{rf}})=g(\xi^0/\eta)$ is satisfied according to \eqref{1a} and \eqref{aao}. Additionally, notice that since $g$ is a non-decreasing function which saturates at some point, $\arg\max_{i=1,...,M}g(\varrho|h_{i}|^2)$ may have several solutions, but definitely one of them is always $\arg\max_{i=1,...,M}\varrho|h_{i}|^2$. Therefore, $\xi_{\mathrm{OA-CSI}}=g(\xi^0_{\mathrm{OA-CSI}}/\eta)$ is also satisfied. Similarly, for the $\mathrm{AA-CSI}$ scheme we have that $\arg\max_{\mathbf{w}_1}|\mathbf{h}^T\mathbf{w}_1|^2$ is at least one of the optimum beamformers that maximizes $g\big(\varrho|\mathbf{h}^T\mathbf{w}_1|^2\big)$, thus, $\xi_{\mathrm{AA-CSI}}=g(\xi^0_{\mathrm{AA-CSI}}/\eta)$. For the $\mathrm{SA}$ scheme the situation is different since in general $\frac{1}{M}\sum_{i=1}^{M}g\big(\varrho|h_{i}|^2\big)\ne g\big(\frac{\varrho}{M}\sum_{i=1}^{M}|h_{i}|^2\big)$, however it could be used as an approximation that becomes tighter as $\rho$ increases. Therefore,
	 \begin{align}\label{app}
	 \xi \cong g(\xi^0/\eta),
	 \end{align}
	 which is satisfied with equality for all the schemes but for $\mathrm{SA}$. Using \eqref{app} and \eqref{conv} it is straightforward attaining \eqref{cdf}.
\end{proof}

Notice that for $\varpi_1=0$, $\varpi_2\rightarrow\infty$ and each WET scheme, \eqref{cdf} matches exactly the CDF of the harvested energy given in the previous subsection. On the other hand, previously we have also attained the average and variance of the harvested energy, and consequently for the RF incident power too.  As commented, the variance analysis allows explaining the energy outage performance in conjunction with the CDF curves. Henceforth, we focus only on the average harvested energy when considering the sensitivity and saturation phenomena. The following lemma provides an analytical result that is useful for the average harvested energy characterization given right after in Theorem~\ref{the3}.
\begin{lemma}\label{lem1}
	Let $q_1(\varphi,\phi)=\mathbb{E}[g(\phi Z_1)]$ and $q_2(\psi,\phi)=\mathbb{E}[g(\phi Z_2)]$, where $Z_1\sim\chi^2(\varphi,0)$ and $Z_2\sim\chi^2(2,\psi)$, then
	\begin{align}
	q_1(\varphi,\phi)&=\frac{2\eta}{\Gamma(\varphi/2)}\bigg(\phi\Gamma\Big(1+\frac{\varphi}{2},\frac{\varpi_1}{2\phi}\Big)-\phi\Gamma\Big(1+\frac{\varphi}{2},\frac{\varpi_2}{2\phi}\Big)+\nonumber\\
	&\qquad\qquad\qquad\qquad\qquad\ \ +\frac{\varpi_2}{2}\Gamma\Big(\frac{\varphi}{2},\frac{\varpi_2}{2\phi}\Big)\bigg),\label{g1}\\
	q_2(\psi,\phi)&\approx \frac{\eta}{\Gamma(m)}\bigg[\frac{\phi(\psi+2)}{m}\bigg(\Gamma\Big(m+1,\frac{m\varpi_1}{\phi(\psi+2)}\Big)+\nonumber\\
	&\!-\!\Gamma\Big(m\!+\!1,\!\frac{m\varpi_2}{\phi(\psi\!+\!2)}\Big)\!\bigg)\!\!+\!\varpi_2\Gamma\Big(m,\!\frac{m\varpi_2}{\phi(\psi\!+\!2)}\Big)\bigg],\label{g2}
	\end{align}
	where $g$ is the function defined in \eqref{conv} and $m=(\psi/2+1)^2/(\psi+1)$.
\end{lemma}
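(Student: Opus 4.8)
The plan is to evaluate each expectation by integrating the three-branch map $g$ of \eqref{conv} directly against the relevant density, handling the non-central chi-squared case through a moment-matched Gamma surrogate. The common first step is to note that, for any non-negative RV $Z$ with density $f_Z$, splitting the integration domain according to the regions $\phi z<\varpi_1$, $\varpi_1\le\phi z<\varpi_2$ and $\phi z\ge\varpi_2$ gives
\begin{align}
\mathbb{E}\big[g(\phi Z)\big]=\eta\phi\int_{\varpi_1/\phi}^{\varpi_2/\phi} z f_Z(z)\,\mathrm{d}z+\eta\varpi_2\int_{\varpi_2/\phi}^{\infty} f_Z(z)\,\mathrm{d}z,\nonumber
\end{align}
so the whole problem reduces to a truncated first moment and a tail probability of $Z$.

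First I would treat $q_1$: since $Z_1\sim\chi^2(\varphi,0)$ has the Gamma density $f_{Z_1}(z)=z^{\varphi/2-1}e^{-z/2}/\big(2^{\varphi/2}\Gamma(\varphi/2)\big)$, the substitution $u=z/2$ turns both integrals above into upper incomplete gamma functions, giving $\eta\phi\int_{\varpi_1/\phi}^{\varpi_2/\phi} z f_{Z_1}(z)\,\mathrm{d}z=\frac{2\eta\phi}{\Gamma(\varphi/2)}\big[\Gamma(\tfrac{\varphi}{2}+1,\tfrac{\varpi_1}{2\phi})-\Gamma(\tfrac{\varphi}{2}+1,\tfrac{\varpi_2}{2\phi})\big]$ and $\eta\varpi_2\int_{\varpi_2/\phi}^{\infty} f_{Z_1}(z)\,\mathrm{d}z=\frac{\eta\varpi_2}{\Gamma(\varphi/2)}\Gamma(\tfrac{\varphi}{2},\tfrac{\varpi_2}{2\phi})$. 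Adding these and pulling out $2\eta/\Gamma(\varphi/2)$ reproduces \eqref{g1} exactly, with no approximation involved.

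For $q_2$ the density of $Z_2\sim\chi^2(2,\psi)$ in \eqref{fZ} carries the modified Bessel factor $I_0(\sqrt{\psi z})$, which prevents a closed-form evaluation of the truncated moment; this is the one genuinely non-routine point of the lemma. My approach is to replace $Z_2$ by $V\sim\Gamma(m,a/m)$ chosen to have the same first two moments: from \eqref{meanvariance} with $\varphi=2$ one has $\mathbb{E}[Z_2]=\psi+2$ and $\mathrm{VAR}[Z_2]=4(\psi+1)$, and matching these to the mean $a$ and variance $a^2/m$ of the density \eqref{fV} forces $a=\psi+2$ and $m=(\psi+2)^2/\big(4(\psi+1)\big)=(\psi/2+1)^2/(\psi+1)$. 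Then $q_2(\psi,\phi)\approx\mathbb{E}[g(\phi V)]$, and repeating the computation of the previous paragraph with the density \eqref{fV} and the substitution $t=mv/(\psi+2)$ gives $\eta\phi\int_{\varpi_1/\phi}^{\varpi_2/\phi} v f_V(v)\,\mathrm{d}v=\frac{\eta\phi(\psi+2)}{m\Gamma(m)}\big[\Gamma(m+1,\tfrac{m\varpi_1}{\phi(\psi+2)})-\Gamma(m+1,\tfrac{m\varpi_2}{\phi(\psi+2)})\big]$ and $\eta\varpi_2\int_{\varpi_2/\phi}^{\infty} f_V(v)\,\mathrm{d}v=\frac{\eta\varpi_2}{\Gamma(m)}\Gamma(m,\tfrac{m\varpi_2}{\phi(\psi+2)})$, whose sum is \eqref{g2}. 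The hard part is therefore not the integration, which is routine once the splitting is done, but justifying that the moment-matched Gamma is a faithful enough stand-in to warrant the $\approx$ sign; I would argue this on the familiar ground that a two-degree-of-freedom non-central chi-squared is well captured by a Gamma of equal mean and variance, the agreement improving as $\psi$ grows, and defer the quantitative confirmation to the numerical results.
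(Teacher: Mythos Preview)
Your proposal is correct and follows essentially the same route as the paper: both split $\mathbb{E}[g(\phi Z)]$ into a truncated first moment plus a tail probability, evaluate $q_1$ exactly via the central chi-squared (Gamma) density and incomplete gamma functions, and handle $q_2$ by replacing $Z_2$ with a Gamma surrogate before integrating. The only cosmetic difference is in how the surrogate is motivated: the paper recognizes $Z_2$ as the power variable of a Rician channel with $\kappa=\psi/2$ and invokes the classical Rician--Nakagami-$m$ approximation with $m=(\kappa+1)^2/(2\kappa+1)$, whereas you arrive at the same $m=(\psi/2+1)^2/(\psi+1)$ by direct first- and second-moment matching; these are the same device, so the two arguments coincide.
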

\begin{proof}
	According to \eqref{fZ} and \eqref{FZ}, the distribution of $Z_1$ can be simplified as follows
	\begin{align}
	f_{Z_1}(z)&=\frac{1}{2}z^{\varphi/4-1/2}e^{-z/2}\lim\limits_{\psi\rightarrow 0}\frac{I_{\varphi/2-1}(\sqrt{\psi z})}{\psi^{\varphi/4-1/2}}\nonumber\\
	&\stackrel{(a)}{=}\frac{1}{2}z^{\varphi/4-1/2}e^{-z/2} \frac{1}{\Gamma(\varphi/2)}\Big(\frac{z}{4}\Big)^{\varphi/4-1/2}\nonumber\\
	&\stackrel{(b)}{=}\frac{1}{2\Gamma(\varphi/2)}\Big(\frac{z}{2}\Big)^{\varphi/2-1}e^{-z/2}\label{new1} \\
	F_{Z_1}(z)&=1-\mathcal{Q}_{\varphi/2}(0,\sqrt{z})=1-\frac{\Gamma(\varphi/2,z/2)}{\Gamma(\varphi/2)},\label{new2}
	\end{align}
	where $(a)$ comes from finding the limit by using l'H\^opital  rule successively and $(b)$ follows after simple algebraic transformations. For getting \eqref{new2}, it is required evaluating $\mathcal{Q}(0,\sqrt{z})$ by using its definition and performing some limit operations or equivalently one may use the result in \eqref{new1} for finding the CDF; in any case the result matches \eqref{new2}. Now, using \eqref{new1} and \eqref{new2} along with \eqref{conv} to calculate $\mathbb{E}[g(\phi Z_1)]$ we obtain
	\begin{align}
	\mathbb{E}[g(\phi Z_1)]&=\eta\phi\int_{\varpi_1/\phi}^{\varpi_2/\phi}zf_{Z_1}(z)\mathrm{d}z+\eta\varpi_2\int_{\varpi_2/\phi}^{\infty}f_{Z_1}(z)\nonumber\\
	&=\!\frac{\eta}{\Gamma(\varphi/2)}\!\!\int_{\frac{\varpi_1}{\phi}}^{\frac{\varpi_2}{\phi}}\!\!\Big(\frac{z}{2}\Big)^{\frac{\varphi}{2}}\!\!e^{-\frac{z}{2}}\mathrm{d}z\!+\!\eta\varpi_2\Big(1\!-\!F_{Z_1}\big(\frac{\varpi_2}{\phi}\big)\Big)\nonumber\\
	&=\frac{\eta\phi}{\Gamma(\varphi/2)}\big(-2\Gamma(1+\varphi/2,z/2)\big)\Big|_{\varpi_1/\phi}^{\varpi_2/\phi}+\nonumber\\
	&\qquad\qquad\qquad\ \ \ +\eta\varpi_2\frac{\Gamma(\varphi/2,\varpi_2/(2\phi))}{\Gamma(\varphi/2)},
	\end{align}
	and \eqref{g1} follows after some simple algebraic transformations. Continuing with the second part of the proof, the PDF of $Z_2$ can be simplified according to  \eqref{fZ} as follows
	\begin{align}
	f_{Z_2}(z)&=\frac{1}{2}e^{-(z+\psi)/2}I_0(\sqrt{\psi z}),
	\end{align}
	which is the power distribution under Rician fading channel with $\kappa=\psi/2$ and average power $(\psi+2)$ \cite{Proakis.2001}. Unfortunately, to the best of our knowledge computing $\int\!\! zf_{Z_2}(z)\mathrm{d}z$ is not analytically tractable, thus, we were not able of finding an exact closed form expression. However we can take advantage of the fact that for  $m=(\kappa+1)^2/(2\kappa+1)$, the Rician distribution approximates the Nakagami-m \cite{Goldsmith.2005}, then $Z_2\sim\Gamma\Big(\frac{(\psi/2+1)^2}{\psi+1},\frac{(\psi+2)(\psi+1)}{(\psi/2+1)^2}\Big)$ approximately. Now, 
	\begin{align}
	\mathbb{E}[g(\phi Z_2)]&=\eta\phi\int_{\varpi_1/\phi}^{\varpi_2/\phi}zf_{Z_2}(z)\mathrm{d}z+\eta\varpi_2\int_{\varpi_2/\phi}^{\infty}f_{Z_2}(z)\mathrm{d}z\nonumber\\
	&\!\stackrel{(a)}{\approx} \!\eta\phi\!\!\!\!\!\!\int\limits_{\varpi_1/\phi}^{\varpi_2/\phi}\!\!\!\!\!\!\Big(\!\frac{m}{\psi\!+\!2}\!\Big)^m\!\frac{z^me^{-\frac{mz}{\psi\!+\!2}}}{\Gamma(m)}\mathrm{d}z\!+\!\eta\varpi_2\Big(\!1\!-\!F_{Z_2}\!\big(\!\frac{\varpi_2}{\phi}\!\big)\!\Big)\nonumber\\
	&\stackrel{(b)}{=} -\frac{\eta\phi(\psi+2)}{m\Gamma(m)}\Gamma\Big(m+1,\frac{mz}{\psi+2}\Big)\Big|_{\varpi_1/\phi}^{\varpi_2/\phi}+\nonumber\\
	&\qquad\qquad\qquad\qquad\ \ +\eta\varpi_2\frac{\Gamma\big(m,\frac{m\varpi_2}{\phi(\psi+2)}\big)}{\Gamma(m)},
	\end{align}
	where $(a)$ comes from using \eqref{fV} and $(b)$ follows from solving the integral and using \eqref{FV}. Then, after some simple algebraic transformations we attain \eqref{g2}.
\end{proof}
\begin{theorem}\label{the3}
	The average harvested energy under sensitivity and saturation phenomena, and for each of the WET schemes (but for $\mathrm{OA-CSI}$) when $|\mathcal{S}|=1$, is given by
	\begin{align}
	\mathbb{E}[\xi_{\mathrm{OA}}]&=q_2\Big(2\kappa,\frac{\varrho}{2(1+\kappa)}\Big),\label{1an} \\
	\mathbb{E}[\xi_{\mathrm{AA}}]&=q_2\Big(\frac{2M^2\kappa}{\delta},\frac{\delta\varrho}{2M(1+\kappa)}\Big), \label{aaon}\\
	\mathbb{E}[\xi_{\mathrm{SA}}]&\approx q_1\Big(2(M-1),\frac{(M^2-\delta)\varrho}{2M^2(M-1)(1+\kappa)}\Big)+\nonumber\\
	&\qquad\qquad\ \ \ +q_2\Big(\frac{2\kappa M^2}{\delta},\frac{\delta\varrho}{2M^2(1+\kappa)}\Big), \label{oatn}\\
	\mathbb{E}[\xi_{\mathrm{AA-CSI}}]&=q_1\Big(2(M-1),\frac{(M^2-\delta)\varrho}{2M(M-1)(1+\kappa)}\Big)+\nonumber\\
	&\qquad\qquad\ \ \ +q_2\Big(\frac{2\kappa M^2}{\delta},\frac{\delta\varrho}{2M(1+\kappa)}\Big). \label{aacsin}
	\end{align}
\end{theorem}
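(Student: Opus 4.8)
The plan is to reduce every case to Lemma~\ref{lem1} by way of Theorem~\ref{the2} together with the distributional results of Section~\ref{linear}. By Theorem~\ref{the2}, $\xi\cong g(\xi^0/\eta)$ — with equality for $\mathrm{OA}$, $\mathrm{AA}$ and $\mathrm{AA\!-\!CSI}$, and as an approximation for $\mathrm{SA}$ — so $\mathbb{E}[\xi]\cong\mathbb{E}[g(\xi^0/\eta)]$ and all that remains is to evaluate this expectation against the known law of $\xi^0$ under each scheme. The scheme $\mathrm{OA\!-\!CSI}$ is excluded from the statement precisely because its $\xi^0$ is an order statistic of Rician powers rather than a (mixture of) non-central chi-squared RVs, so Lemma~\ref{lem1} does not cover it.

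For $\mathrm{OA}$ and $\mathrm{AA}$ the derivation is immediate. From \eqref{1ar} and \eqref{aaor}, $\xi^0_{\mathrm{OA}}/\eta\stackrel{d}{=}\tfrac{\varrho}{2(1+\kappa)}\chi^2(2,2\kappa)$ and $\xi^0_{\mathrm{AA}}/\eta\stackrel{d}{=}\tfrac{\delta\varrho}{2M(1+\kappa)}\chi^2\big(2,\tfrac{2\kappa M^2}{\delta}\big)$. Writing each as $\phi Z_2$ with $Z_2\sim\chi^2(2,\psi)$ and reading off the pair $(\psi,\phi)$, the definition $q_2(\psi,\phi)=\mathbb{E}[g(\phi Z_2)]$ from Lemma~\ref{lem1} gives \eqref{1an} and \eqref{aaon} directly (the only residual approximation being the Nakagami-$m$ step already internal to $q_2$).

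For $\mathrm{SA}$ and $\mathrm{AA\!-\!CSI}$ the laws in \eqref{pdfc_app} and \eqref{aaacsir} share the shape $\xi^0/\eta\stackrel{d}{=}\phi_1 Z_1+\phi_2 Z_2$ with $Z_1\sim\chi^2(2(M-1),0)$ and $Z_2\sim\chi^2\big(2,\tfrac{2\kappa M^2}{\delta}\big)$ independent, where $(\phi_1,\phi_2)$ are the scaling constants displayed there (the $\mathrm{AA\!-\!CSI}$ constants being $M$ times the $\mathrm{SA}$ ones, consistent with $\xi^0_{\mathrm{AA\!-\!CSI}}=M\xi^0_{\mathrm{SA}}$ in the linear model). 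One then uses the additive split $\mathbb{E}[g(\phi_1 Z_1+\phi_2 Z_2)]\approx\mathbb{E}[g(\phi_1 Z_1)]+\mathbb{E}[g(\phi_2 Z_2)]=q_1\big(2(M-1),\phi_1\big)+q_2\big(\tfrac{2\kappa M^2}{\delta},\phi_2\big)$ and invokes Lemma~\ref{lem1}; substituting the two $(\phi_1,\phi_2)$ pairs yields \eqref{oatn} and \eqref{aacsin}. As sanity checks: at $\delta=M^2$ one has $\phi_1=0$, so $q_1(2(M-1),0)=0$ and both expressions collapse to $q_2\big(2\kappa,\tfrac{\varrho}{2(1+\kappa)}\big)=\mathbb{E}[\xi_{\mathrm{OA}}]$, in agreement with Remark~\ref{re2_5}; moreover, for $\mathrm{SA}$ the genuinely exact identity $\mathbb{E}[\xi_{\mathrm{SA}}]=\tfrac1M\sum_{i}\mathbb{E}[g(\varrho|h_i|^2)]=q_2\big(2\kappa,\tfrac{\varrho}{2(1+\kappa)}\big)$, valid for every $\delta$ since each $|h_i|$ is marginally $\mathrm{RIC}(\kappa)$, is exactly the $\delta=M^2$ specialization of \eqref{oatn}.

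The main obstacle is justifying that additive split: $g$ does not distribute over sums, so jointly clipping $\phi_1 Z_1+\phi_2 Z_2$ at the sensitivity and saturation levels is not the same as clipping the two independent components separately and adding. The argument should be that it is accurate in the regime of interest — it is exact under the linear model $\varpi_1=0,\ \varpi_2\to\infty$ and exact at $\delta=M^2$ (where $\phi_1=0$), and it tightens as $\delta$ grows because the LOS-bearing non-central component $\phi_2 Z_2$ then dominates, so $g$ rarely toggles regime on account of $\phi_1 Z_1$ alone. This split — combined with the $\rho\mapsto\delta$ substitution inherited from \eqref{pdfc_app}/\eqref{aaacsir} and, for $\mathrm{SA}$, the $g$-of-the-sum step of Theorem~\ref{the2} — is what makes \eqref{oatn} (and, strictly, \eqref{aacsin} as well) an approximation rather than an identity.
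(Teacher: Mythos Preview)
Your approach is essentially identical to the paper's: invoke Theorem~\ref{the2} to write $\xi\cong g(\xi^0/\eta)$, read off the $\chi^2$ laws from \eqref{1ar}, \eqref{aaor}, \eqref{pdfc_app}, \eqref{aaacsir}, and feed them into Lemma~\ref{lem1}. You are in fact more careful than the paper in explicitly flagging the additive split $\mathbb{E}[g(\phi_1 Z_1+\phi_2 Z_2)]\approx q_1+q_2$ as a separate approximation --- the paper applies Lemma~\ref{lem1} termwise without comment, and your observation that this makes \eqref{aacsin} strictly an approximation as well (despite the ``$=$'' in the statement) is a valid point the original proof glosses over.
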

\begin{proof}
	According to \eqref{1ar} we have that $\xi_{\mathrm{OA}}^0/\eta\!\sim\! \frac{\varrho}{2(1+\kappa)}\chi^2(2,2\kappa)$, then, using \eqref{app} yields $\mathbb{E}[\xi_{\mathrm{OA}}]=\mathbb{E}[g(\xi_{\mathrm{OA}}^0/\eta)]$, and using the results in Lemma~\ref{lem1} we attain \eqref{1an}. Similarly, using \eqref{aaor}, \eqref{pdfc_app} and \eqref{aaacsir} along with \eqref{app}, we have that $\xi_{\mathrm{AA}}=\xi_{\mathrm{AA}}^0/\eta\sim\frac{\varrho\delta}{2M(1\!+\!\kappa)}\chi^2\Big(2,\frac{2M^2\kappa}{\delta}\Big)$, $\xi_{\mathrm{SA}}\approx \xi_{\mathrm{SA}}^0/\eta\sim \frac{\varrho}{2M^2(1\!+\!\kappa)}\Big[\frac{M^2\!-\!\delta}{M\!-\!1}\chi^2\Big(2(M\!-\!1),0\Big) \!+\!\delta\chi^2\Big(2,\frac{2\kappa M^2}{\delta}\Big)\Big]$ and $\xi_{\mathrm{AA-CSI}}=M \xi_{\mathrm{SA}}^0/\eta$, respectively, and after using Lemma~\ref{lem1} we attain \eqref{aaon}, \eqref{oatn} and \eqref{aacsin}.
\end{proof}
Notice that a closed-form expression for the average harvested energy under $\mathrm{OA-CSI}$ was not tractable using the ideal linear EH model, thus, neither here. Finally, since $q_2$ is given approximately in \eqref{g2}, the analytical characterizations of the average harvested energy provided in Theorem~\ref{the3} are approximations given they all depend on $q_2$; hence, they approximate also their counterparts in the previous subsection when $\varpi_1=0$ and $\varpi_2\rightarrow\infty$.
\section{Numerical Results and Discussion}\label{results}
In this section we investigate the performance of the WET strategies under Rician fading and exponential correlation. Based on the experimental data in \cite{Le.2008}, we assume  sensors equipped with EH hardware characteristics $\varpi_1=-22$ dBm, $\varpi_2=-4.8$ dBm and $\eta=25\%$\footnote{Rather than the maximum efficiency, we assume a more conservative value that allows fitting better the entire experimental data. Notice also that operating with such conversion efficiency is viable for many other EH devices as reported in \cite{Valenta.2014}.}. In Fig.~\ref{Fig7} we show how our model fits the experimental data; while we also consider the non-linear harvesting model proposed in \cite{Boshkovska.2015} which is based on a logistic function. Specifically, the non-linear model is given by $\xi=p_3\Big(\frac{1+e^{p_1p_2}}{1+e^{-p_1(\xi^{\mathrm{rf}}-p_2)}}-1\Big)/e^{p_1p_2}$, where $p_1,\ p_2,\ p_3$ can be easily found by a standard curve fitting tool. Notice that the non-linear model fits more accurately the data in almost the entire region of input powers than our EH linear model; however, it fails in modeling the sensitivity phenomenon since if, and only if, $\xi^\mathrm{rf}=0\rightarrow \xi=0$. 

Next, we consider a setup with $|\mathcal{S}|=1$ and show in Fig.~\ref{Fig8} the average harvested energy as a function of $\varrho$  under the non-linear EH model \cite{Boshkovska.2015}, and the ideal and non ideal EH linear models considered in this work;
while for short, we do not present the performance results of the $\mathrm{OA}$ scheme. 
\begin{figure}[t!]
	\centering
	\subfigure{\includegraphics[width=0.95\columnwidth]{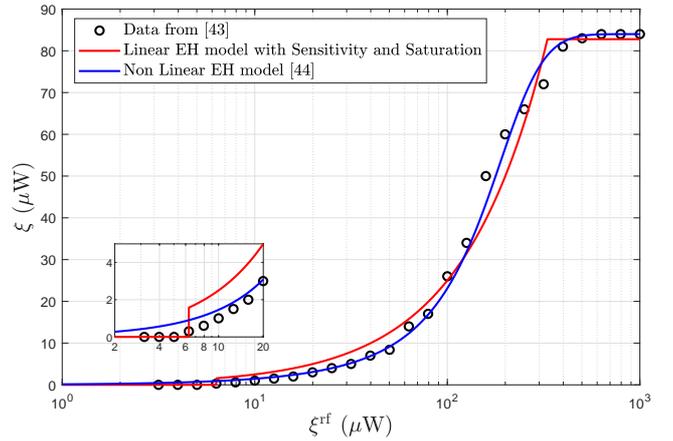}}
	\caption{A comparison between the harvested power for (\textit{i}) the utilized model taking into account sensitivity and saturation phenomena, (\textit{ii}) the non-linear EH model in \cite{Boshkovska.2015}, and (\textit{iii}) the measurement data from a practical EH circuit
\cite{Le.2008}. Configuration parameters are: $\varpi_1=-22$ dBm, $\varpi_2=-4.8$ dBm, $\eta=25\%$, while through curve fitting we attain: $p_1=0.015$, $p_2=140$ and $p_3=84$.}
	\label{Fig7}	 	
\end{figure}
\begin{figure*}[t!]
	\centering
	\subfigure{\includegraphics[width=0.95\columnwidth]{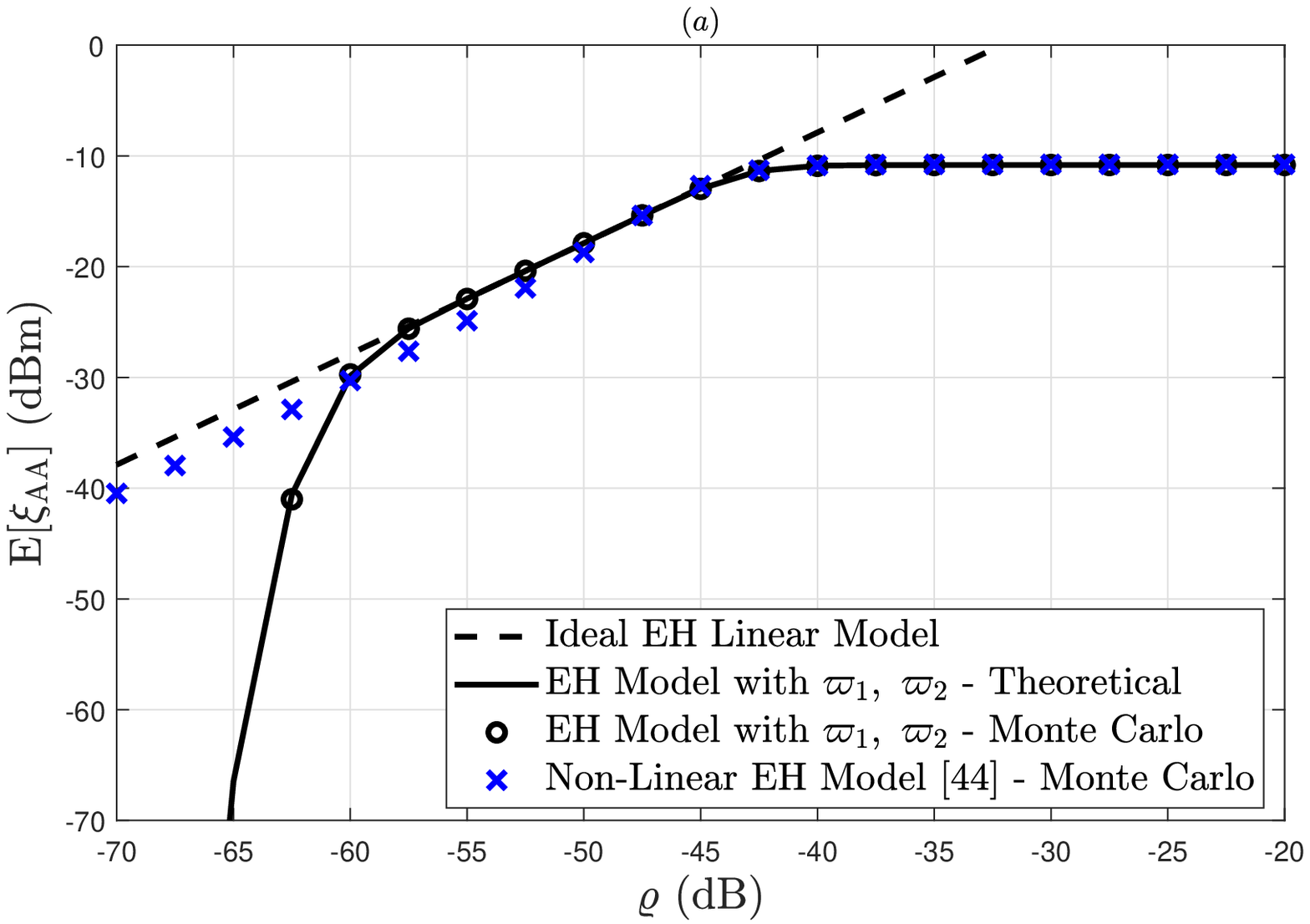}}\ \ \ \ \   \subfigure{\includegraphics[width=0.95\columnwidth]{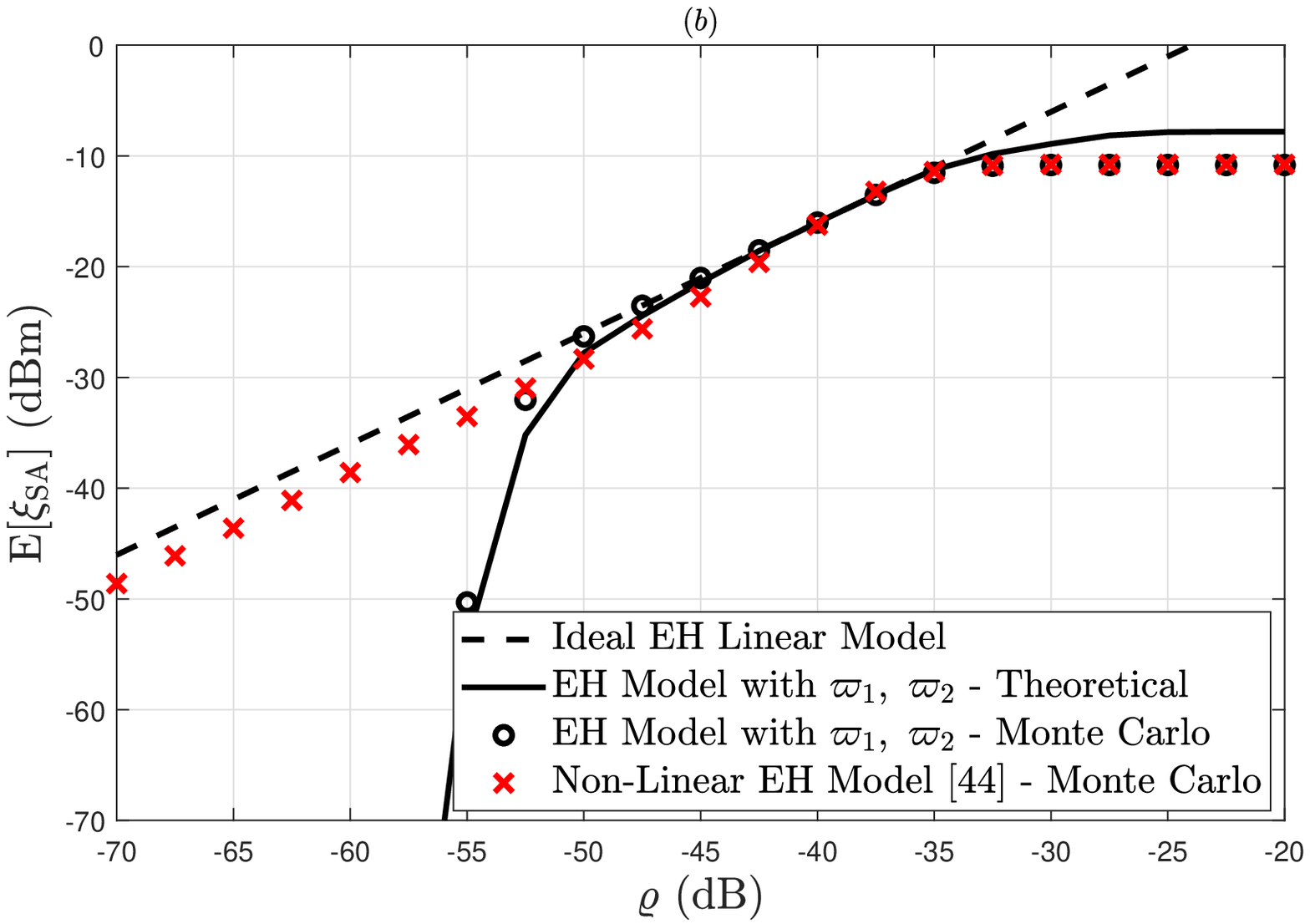}}\\
	\subfigure{\includegraphics[width=0.95\columnwidth]{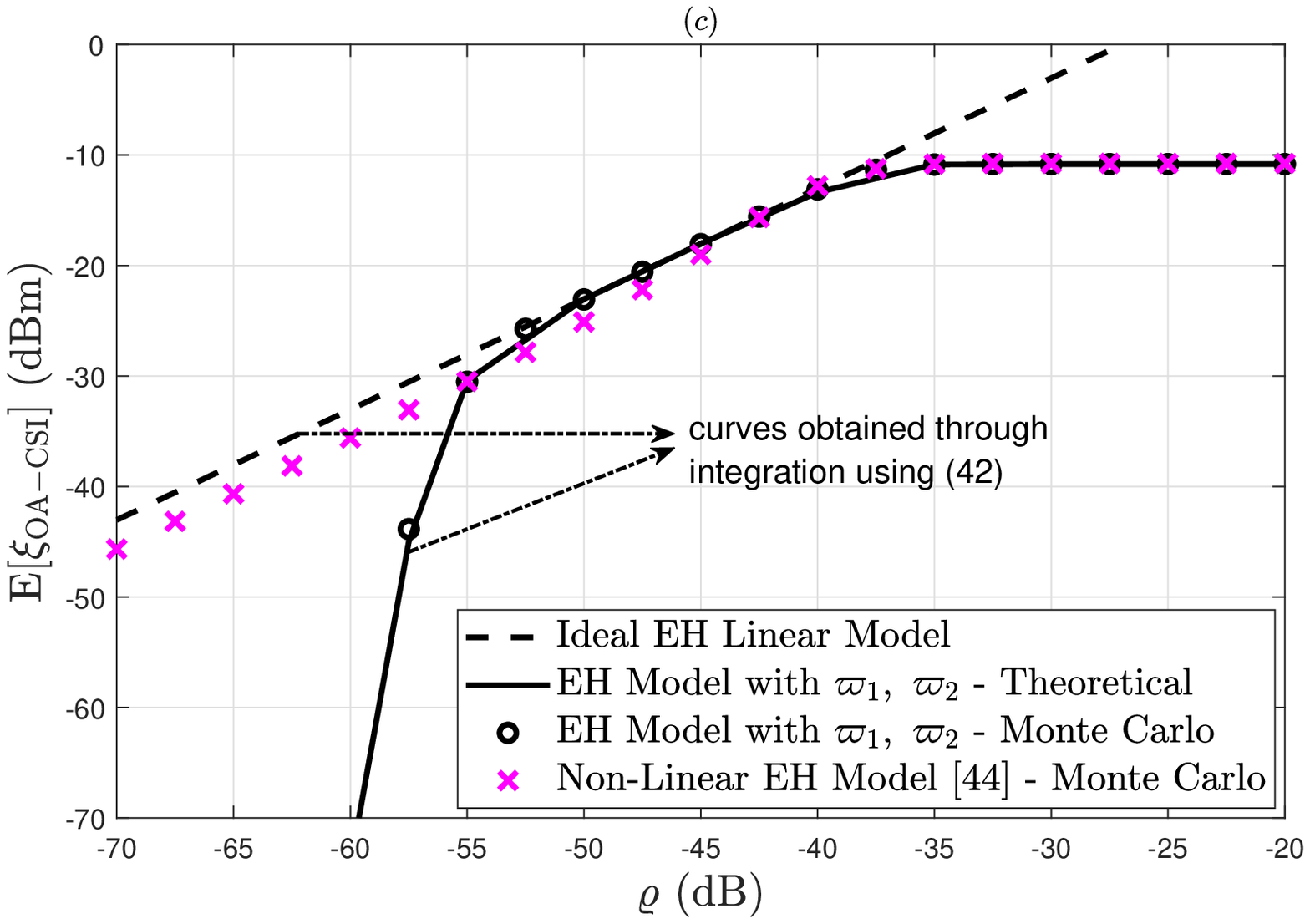}}\ \ \ \ \  \subfigure{\includegraphics[width=0.95\columnwidth]{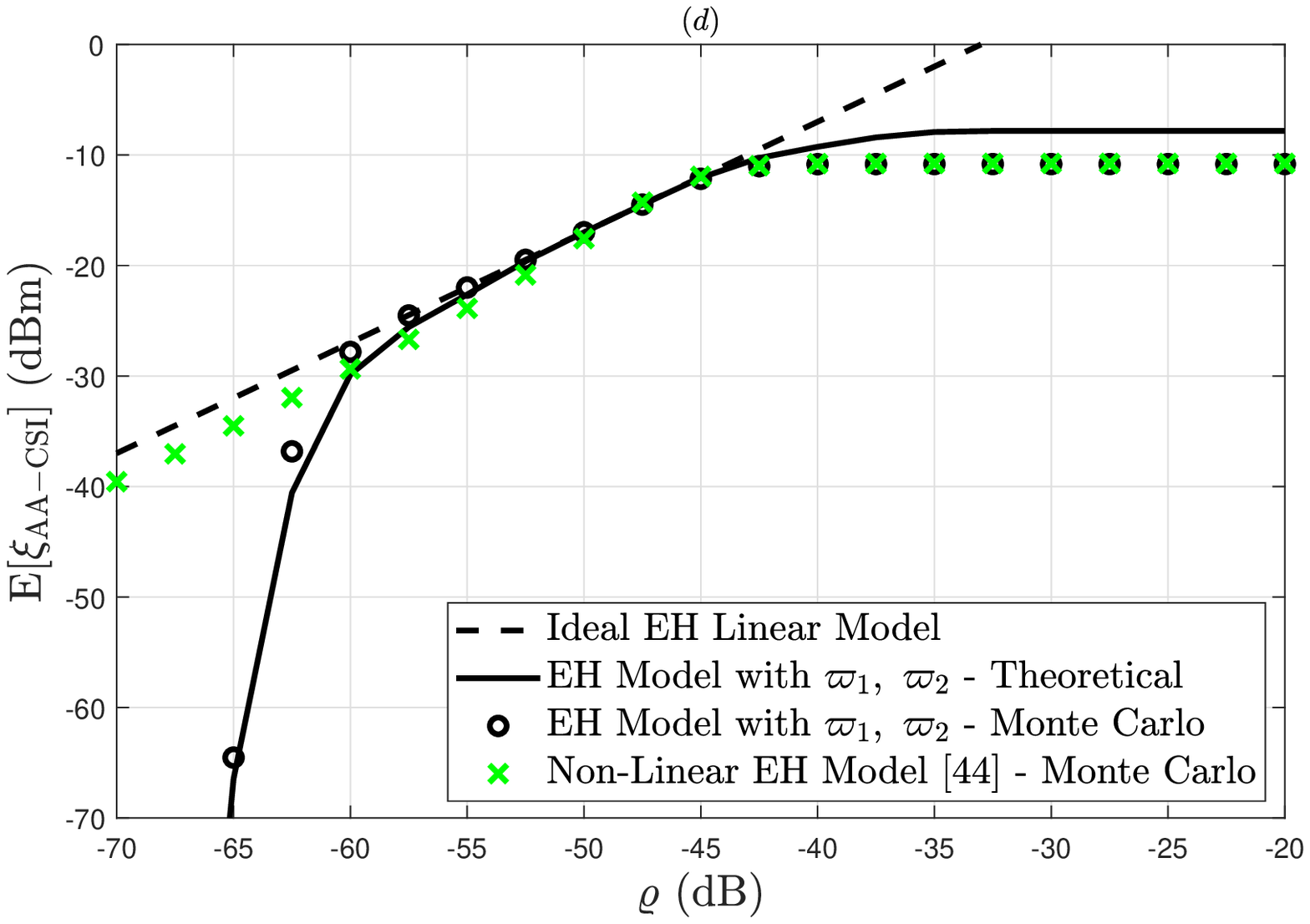}}\\
	\caption{Average harvested energy as a function of $\varrho$ under ideal and non ideal EH linear model, and non-linear model of \cite{Boshkovska.2015}, for $M=8$, $\kappa=3$ and $\tau=0.4$. $(a)\ \mathrm{AA}$, $(b)\ \mathrm{SA}$, $(c)\ \mathrm{OA-CSI}$ and $(d)\ \mathrm{AA-CSI}$.}
	\label{Fig8}	 	
\end{figure*}
Obviously, the average harvested energy  performance is an increasing function of $\varrho$, which is related to the average RF signal power. Notice that for the ideal EH linear model the average harvested energy increases linearly, but when operating under the non-ideal linear and non-linear EH models that is not longer the case. The ideal EH model overestimates heavily the average harvested energy when $\varrho$ is small or large, thus, it is only suitable for a finite range of $\varrho$. This range depends on the system setup and on the utilized WET scheme. For instance, the performance of the three EH models are similar when $\varrho\in[-50,-35]$ dB for the $\mathrm{SA}$ and $\mathrm{OA-CSI}$ schemes, while the range is approximately shifted to the left $10\log_{10}(M)$ dB when utilizing the $\mathrm{AA}$ and $\mathrm{AA-CSI}$ schemes. This is because under the ideal EH linear model, $\mathrm{AA}$ and $\mathrm{AA-CSI}$ schemes share similar average harvested energy performance, and approximately $M$ times greater than the performance attained under $\mathrm{SA}$ and $\mathrm{OA-CSI}$ schemes. Notice that our non-ideal linear EH model, which explicitly considers the sensitivity and saturation phenomena, succeeds in matching the performance of the non-linear EH model proposed in \cite{Boshkovska.2015} when operating under medium to high average input powers for which this model was proven to be accurate. In fact, and because of the sensitivity phenomenon, our non-ideal linear model predicts an abrupt decay in the average harvested energy when $\varrho$ decreases to very small values; what does not happen under the non-linear model that does not take strictly into account this phenomenon, hence, for small $\varrho$ the accuracy of our model is foreseen to be tight to practice. Additionally, analytical expressions provided in Theorem~\ref{the3} are corroborated also in Fig.~\ref{Fig8}. The gap between simulation and analytical results come from the fact that function $q_2$ in Theorem~\ref{the3} is approximated according to \eqref{g2}, however, in all the cases analytical results capture well the impact of sensitivity and saturation impairments.

Given the validation of our non-ideal  EH linear model, we use it next for analyzing the system performance when powering multiple sensors, e.g., $|\mathcal{S}|\ge 1$. To that end, we consider that sensors  $\mathcal{S}$ are uniformly distributed in the disk region of radius $R=10$m \cite{Clerckx.2018,Bi.2016} around $T$, thus, the distance between $S_j$ and $T$, denoted as $\iota_j$, is distributed with $f_{\iota_j}(x)=2x/R^2,\ x\le R$. We assume that $\varrho_j=\iota_j^{-3}/50$, which may model practical scenarios with path-loss exponent $3$, transmit power of $40$ dBm, and $27$ dB of average signal power attenuation at a reference distance of $1$m \cite{Goldsmith.2005}. Also, the LOS factors and correlation coefficients are expected to decrease with the distance in practical setups, hence we take this into consideration and assume an exponential decay given by $\kappa_j=10\times e^{-\iota_j/2},\ \tau_j=e^{-\iota_j/3}$. For example, this means that at $1$m$\rightarrow \kappa_j\approx 6.07,\ \tau\approx 0.72$, while at the disk edge these values drop to $\kappa_j\approx 0.07,\ \tau\approx 0.04$. Notice that, under the CSI-free schemes, the performance metrics such as average harvested energy per user and average energy outage per user, can be easily computed by departing from results in Section~\ref{Ric} and performing integration over $f_{\iota_1}(x)$. For the CSI-based schemes, which are used only as benchmarks, mathematical analyses in multi-user setups and under sensitivity and saturation phenomena are cumbersome and we resort to Monte Carlo methods and global optimization solvers when finding optimum energy beamformings. Notice that this may be impossible to implement in practice and the performance gains should decrease with the cardinality of $\mathcal{S}$. These are some of the main reasons for using the CSI-free WET strategies and the scope of this work.

\begin{figure}[t!]
	\centering
	\subfigure{\includegraphics[width=0.95\columnwidth]{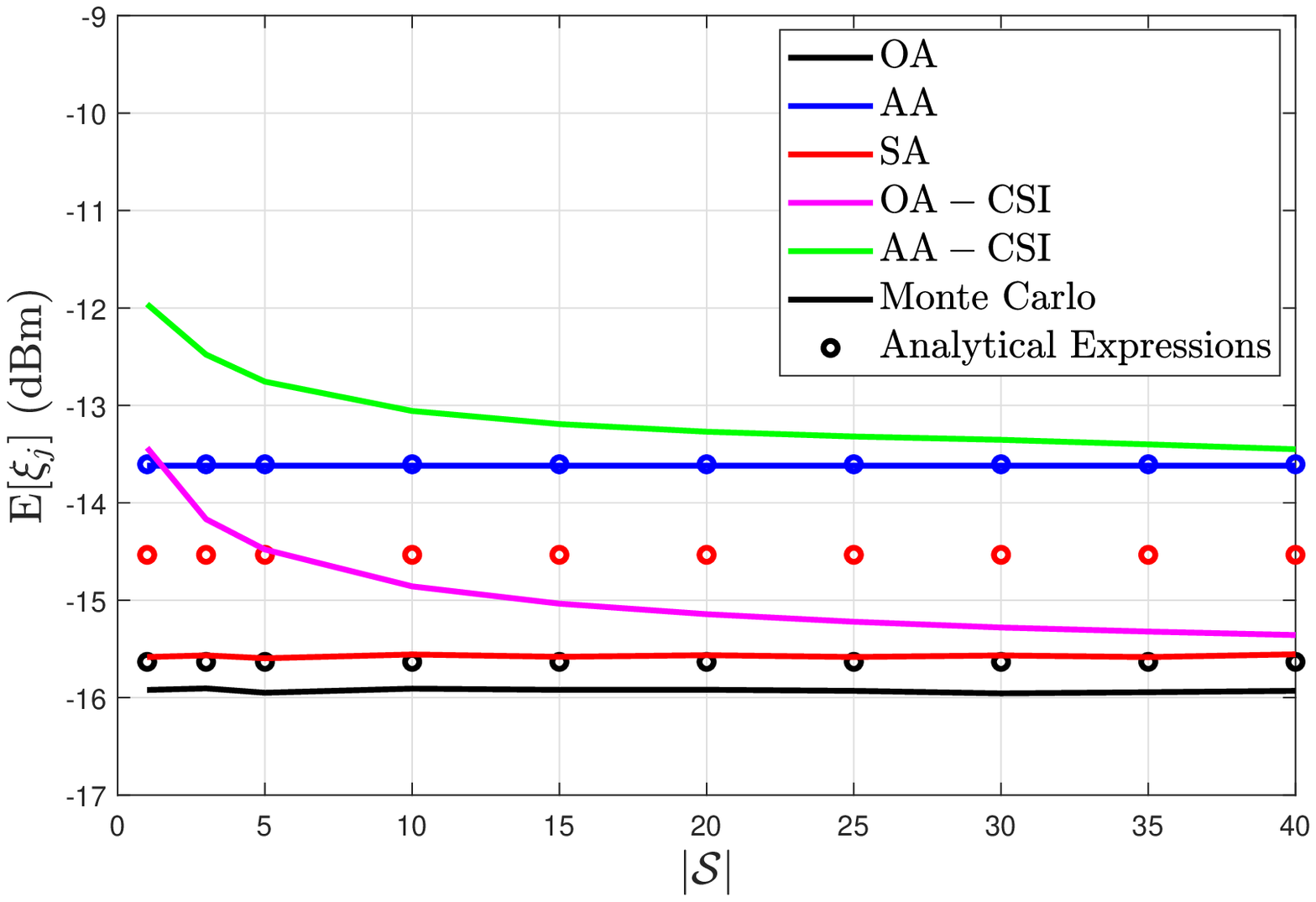}}\\
	\subfigure{\includegraphics[width=0.95\columnwidth]{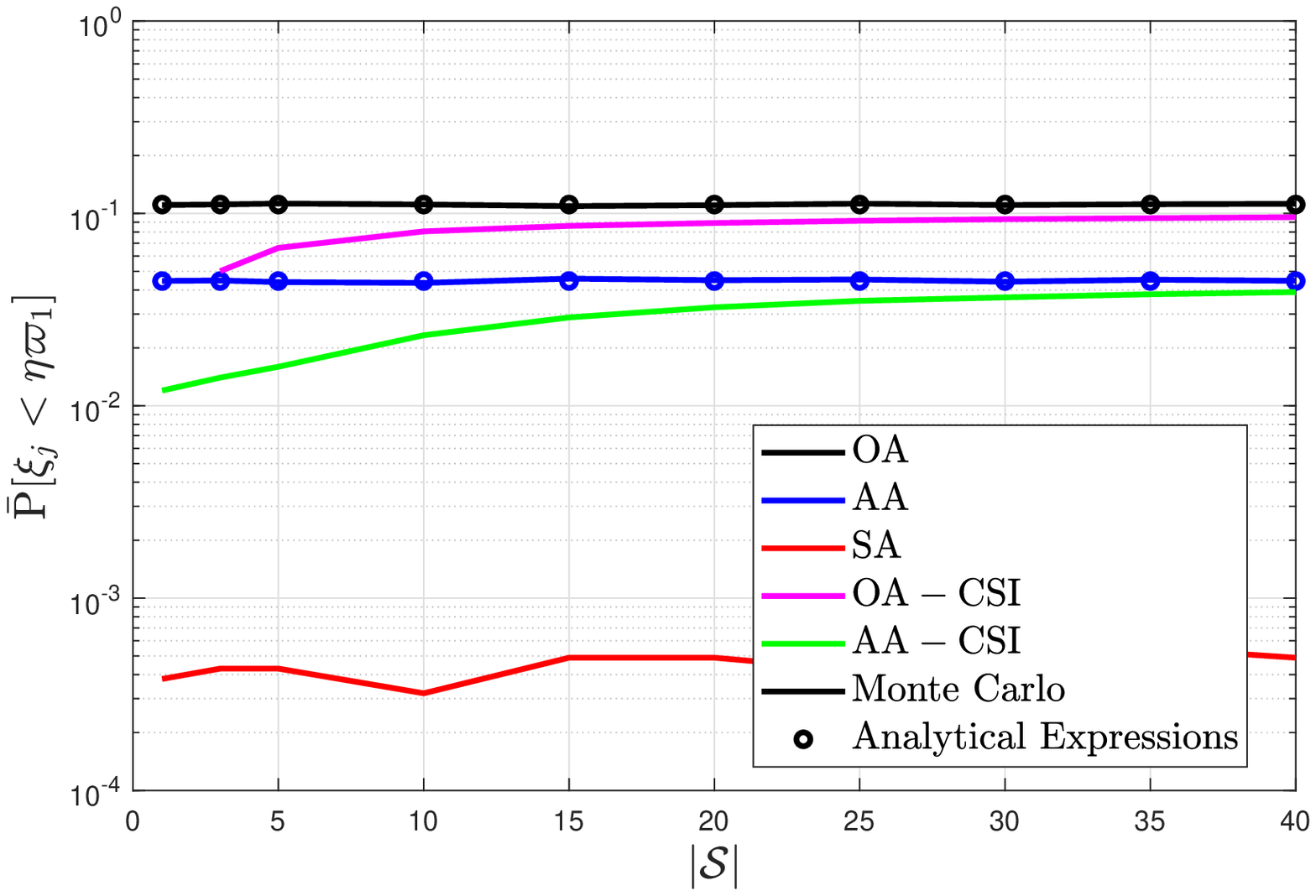}}
	\caption{$(a)$ Average harvested energy per user (top) and $(b)$ average energy outage per user (bottom), as a function of $|\mathcal{S}|$ for $M=8$.}
	\label{Fig9}	 	
\end{figure}
Fig.~\ref{Fig9} shows the average  harvested energy and outage per user as a function of the number of sensors. Since the average performance of the CSI-free schemes does not depend on $|\mathcal{S}|$, it appears as straight lines, while the performance of the CSI-based schemes gets asymptotically worse with $|\mathcal{S}|$. The latter is because as $|\mathcal{S}|$ increases, selecting the best antenna under the $\mathrm{OA-CSI}$ scheme converges to select any random antenna, e.g., $\mathrm{OA}$, while for the $\mathrm{AA-CSI}$ scheme, the chances of creating sharp and strong beams capable of reaching all the users, decreases. Observe that the $\mathrm{SA}$ scheme outperforms the $\mathrm{OA}$ strategy in terms of average harvested energy under sensitivity and saturation phenomena, which is different from the ideal case discussed in Subsection~\ref{linear}. Also, this scheme performs the best in terms of energy outage because of the relatively small values of $\eta\varpi_1$ as discussed in Fig.~\ref{Fig6}. This means that in order to get the advantages of the diversity order and low-variance of the $\mathrm{SA}$ scheme, sensors have to be able of operating at very low power levels and requiring uninterrupted operation. Additionally, the CSI-based schemes outperform their CSI-free counterparts in both metrics and the gap is significant when powering a relatively small number of sensors. However, as $|\mathcal{S}|$ increases (and even for $|\mathcal{S}|=1$) the CSI acquisition procedures and associated energy expenditures become a problem and limit the practical benefits of these schemes. In fact, according to the analyses in Subsection~\ref{linear} the performance gap between the $\mathrm{AA}$ and $\mathrm{AA-CSI}$ schemes is even expected to decrease if any of $M,\kappa,\tau$ increases. 
Moreover, notice that the approximations obtained in Subsection~\ref{non} for $\mathcal{|S|}=1$ hold for any value of $|\mathcal{S}|$ when using the CSI-free schemes, and as shown in Fig.~\ref{Fig9}, the expressions are particularly accurate for the $\mathrm{OA}$ and $\mathrm{AA}$ schemes, while not for the $\mathrm{SA}$ strategy for the reasons given in Theorems~\ref{the2} and \ref{the3}.

\begin{figure}[t!]
	\centering
	\subfigure{\includegraphics[width=0.95\columnwidth]{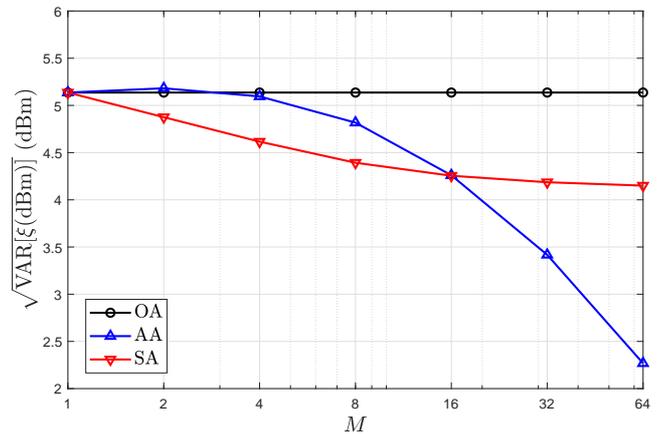}}
	\caption{Standard deviation of the harvested energy in dBm as a function of $M$.}
	\label{Fig10}	 	
\end{figure}
Finally, as a metric of fairness in the wireless powering we show in Fig.~\ref{Fig10} the standard deviation of the harvested energy across time (channel realization) and space (sensor location)  as a function of $M$. Herein we focus only on the CSI-free schemes\footnote{Notice that schemes based on CSI can be designed for optimizing the wireless powering fairness instead of the total harvested energy which is the goal of the CSI-based schemes utilized in this work as benchmarks.}, hence, the results are independent of the number of sensors. Notice that for the considered system setup the $\mathrm{SA}$ scheme is not only the more reliable (as shown in Fig.~\ref{Fig9}b) but also the fairest for relatively small number of antennas $M$, while as $M$ increases the $\mathrm{AA}$ scheme becomes the clear winner. The latter is because more sensors are induced to work near or in the saturation region, hence, outperforming all the other CSI-free schemes in terms of average harvested energy, outage and fairness. Finally, as the $\mathrm{OA}$ scheme utilizes only one antenna, its performance is independent of $M$.
\section{Conclusion and Future Work}\label{conclusions}
In this paper, we analyzed three WET strategies that a dedicated power station equipped with multiple antennas can utilize when powering a large set of single-antenna devices. These strategies operate without any CSI, which is a very practical assumption since it is very challenging to acquire CSI at the transmit side in low-power WET systems with large number of users. The distribution of the harvested energy when using each of the multi-antenna schemes is attained supposing Rician fading channels under sensitivity and saturation EH phenomena and compared to other two CSI-based benchmark strategies. 
We investigated the impact of spatial correlation among the antenna elements evidencing its benefits only when transmitting simultaneously with equal power by all antennas, while under the operation of the other schemes the system performance may reduce considerably when the spatial correlation increases.
When powering a given sensor, we found that: the switching antennas strategy, where only one antenna with full power transmits at a time, guarantees the lowest variance in the harvested energy, thus providing the most predictable energy source, and it is particularly suitable under highly sensitive EH hardware and when operating under poor LOS conditions.
Moreover, while under NLOS it is better switching antennas, when LOS increases it is better to transmit simultaneously with equal power by all antennas. Also, latter strategy provides the greatest fairness in the wireless powering of multiple sensors when there is a large number of antennas since devices are more likely to operate near or in saturation. Numerical and simulation results validate our findings and demonstrate the suitability of the CSI-free over the CSI-based strategies as the number of devices increases. All these are fundamental results that can be used when designing practical WET systems.

\balance
\bibliographystyle{IEEEtran}
\bibliography{IEEEabrv,references}
\end{document}